\numberwithin{equation}{section}
\theoremstyle{plain}
\newtheorem{thm}{\protect\theoremname}[section]
\theoremstyle{plain}
\newtheorem{prop}[thm]{\protect\propositionname}
\theoremstyle{plain}
\newtheorem{lem}[thm]{\protect\lemmaname}
\theoremstyle{definition}
\newtheorem{defn}[thm]{\protect\definitionname}
 \let\myTOC\tableofcontents
 \renewcommand\tableofcontents{%
   \pdfbookmark[1]{Contents}{}
   \myTOC
   \cleardoublepage
   \pagenumbering{arabic} }
\global\long\def\foreignlanguage#1#2{#2}%
\global\long\def\selectlanguage#1{}%
\providecommand{\definitionname}{Definition}
\providecommand{\lemmaname}{Lemma}
\providecommand{\propositionname}{Proposition}
\providecommand{\theoremname}{Theorem}
\begin{document}
\title{The conditional probabilities and the empirical laws in a free scalar
QFT in curved spacetime}
\author{Hideyasu Yamashita}
\institute{Division of Liberal Arts and Sciences, Aichi-Gakuin University\
\email{yamasita@dpc.aichi-gakuin.ac.jp}}

\date{\today}

\maketitle
\newcommand{\dcolor}{\definecolor{note_fontcolor}{rgb}{0.1, 0.0, 0.8}}
\definecolor{HYnote_fontcolor}{rgb}{0.2, 0.0, 0.2}
\definecolor{hycolor}{rgb}{0.3, 0.0, 0.3}
\newcommand{\hyc}{\color{hycolor}}
\newenvironment{HYnote}
 {\textcolor{note_fontcolor}\bgroup\ignorespaces}
  {\ignorespacesafterend\egroup} 

\newenvironment{trivenv}
  {\bgroup\ignorespaces}
  {\ignorespacesafterend\egroup}

\newcommand{\displabel}[1]{}

\newcommand{\hidable}[3]{#2}
\newcommand{\hidea}[1]{{#1}}
\newcommand{\hideb}[1]{{#1}}
\newcommand{\hidec}[1]{{#1}}
\newcommand{\hidep}[1]{{#1}}
\renewcommand{\hidec}[1]{}
\renewcommand{\hidep}[1]{}

\newcommand{\thlab}[1]{{\tt [#1]}}

\newcommand{\black}{\color{black}}

\global\long\def\N{\mathbb{N}}%
\global\long\def\C{\mathbb{C}}%
\global\long\def\Z{\mathbb{Z}}%
 
\global\long\def\R{\mathbb{R}}%
 
\global\long\def\im{\mathrm{i}}%

\global\long\def\di{\partial}%
 
\global\long\def\d{{\rm d}}%

\global\long\def\ol#1{\overline{#1}}%
\global\long\def\ul#1{\underline{#1}}%
\global\long\def\ob#1{\overbrace{#1}}%

\global\long\def\ov#1{\overline{#1}}%

\global\long\def\then{\Rightarrow}%
 
\global\long\def\Then{\Longrightarrow}%

\global\long\def\N{\mathbb{N}}%
\global\long\def\C{\mathbb{C}}%
\global\long\def\Z{\mathbb{Z}}%
 
\global\long\def\R{\mathbb{R}}%
 
\global\long\def\im{\mathrm{i}}%

\global\long\def\di{\partial}%
 
\global\long\def\d{{\rm d}}%

\global\long\def\ol#1{\overline{#1}}%
\global\long\def\ul#1{\underline{#1}}%
\global\long\def\ob#1{\overbrace{#1}}%

\global\long\def\ov#1{\overline{#1}}%

\global\long\def\then{\Rightarrow}%
 
\global\long\def\Then{\Longrightarrow}%

\global\long\def\cA{\mathcal{A}}%
\global\long\def\cB{\mathcal{B}}%
 
\global\long\def\cC{\mathcal{C}}%
 
\global\long\def\cD{\mathcal{D}}%
\global\long\def\cE{\mathcal{E}}%
 
\global\long\def\cF{\mathcal{F}}%
 
\global\long\def\cG{{\cal G}}%
 
\global\long\def\cH{\mathcal{H}}%
 
\global\long\def\cI{\mathcal{I}}%
 
\global\long\def\cJ{\mathcal{J}}%
\global\long\def\cK{\mathcal{K}}%
 
\global\long\def\cL{\mathcal{L}}%
 
\global\long\def\cM{\mathcal{M}}%
 
\global\long\def\cN{\mathcal{N}}%
 
\global\long\def\cO{\mathcal{O}}%
 
\global\long\def\cP{\mathcal{P}}%
 
\global\long\def\cQ{\mathcal{Q}}%
 
\global\long\def\cR{\mathcal{R}}%
 
\global\long\def\cS{\mathcal{S}}%
 
\global\long\def\cT{\mathcal{T}}%
 
\global\long\def\cU{\mathcal{U}}%
 
\global\long\def\cV{\mathcal{V}}%
 
\global\long\def\cW{\mathcal{W}}%
\global\long\def\cX{\mathcal{X}}%
 
\global\long\def\cY{\mathcal{Y}}%
 
\global\long\def\cZ{\mathcal{Z}}%

\global\long\def\scA{\mathscr{A}}%
\global\long\def\scB{\mathscr{B}}%
\global\long\def\scC{\mathscr{C}}%
\global\long\def\scD{\mathscr{D}}%
 
\global\long\def\scE{\mathscr{E}}%
 
\global\long\def\scF{\mathscr{F}}%
 
\global\long\def\scG{\mathscr{G}}%
 
\global\long\def\scH{\mathscr{H}}%
 
\global\long\def\scI{\mathscr{I}}%
 
\global\long\def\scJ{\mathscr{J}}%
 
\global\long\def\scK{\mathscr{K}}%
 
\global\long\def\scL{\mathscr{L}}%
 
\global\long\def\scM{\mathscr{M}}%
 
\global\long\def\scN{\mathscr{N}}%
 
\global\long\def\scO{\mathscr{O}}%
 
\global\long\def\scP{\mathscr{P}}%
 
\global\long\def\scR{\mathscr{R}}%
\global\long\def\scS{\mathscr{S}}%
 
\global\long\def\scT{\mathscr{T}}%
 
\global\long\def\scU{\mathscr{U}}%
 
\global\long\def\scW{\mathscr{W}}%
\global\long\def\scZ{\mathscr{Z}}%

\global\long\def\bbA{\mathbb{A}}%
 
\global\long\def\bbB{\mathbb{B}}%
 
\global\long\def\bbD{\mathbb{D}}%
 
\global\long\def\bbE{\mathbb{E}}%
 
\global\long\def\bbF{\mathbb{F}}%
 
\global\long\def\bbG{\mathbb{G}}%
 
\global\long\def\bbI{\mathbb{I}}%
 
\global\long\def\bbJ{\mathbb{J}}%
 
\global\long\def\bbK{\mathbb{K}}%
 
\global\long\def\bbL{\mathbb{L}}%
 
\global\long\def\bbM{\mathbb{M}}%
 
\global\long\def\bbP{\mathbb{P}}%
 
\global\long\def\bbQ{\mathbb{Q}}%
 
\global\long\def\bbT{\mathbb{T}}%
 
\global\long\def\bbU{\mathbb{U}}%
 
\global\long\def\bbX{\mathbb{X}}%
 
\global\long\def\bbY{\mathbb{Y}}%
\global\long\def\bbW{\mathbb{W}}%

\global\long\def\bbOne{1\kern-0.7ex  1}%

\renewcommand{\bbOne}{\mathbbm{1}}

\global\long\def\bB{\mathbf{B}}%
 
\global\long\def\bG{\mathbf{G}}%
 
\global\long\def\bH{\mathbf{H}}%
\global\long\def\bS{\boldsymbol{S}}%
 
\global\long\def\bT{\mathbf{T}}%
 
\global\long\def\bX{\mathbf{X}}%
\global\long\def\bY{\mathbf{Y}}%
\global\long\def\bW{\mathbf{W}}%
 
\global\long\def\boT{\boldsymbol{T}}%

\global\long\def\fraka{\mathfrak{a}}%
 
\global\long\def\frakb{\mathfrak{b}}%
 
\global\long\def\frakc{\mathfrak{c}}%
 
\global\long\def\frake{\mathfrak{e}}%
 
\global\long\def\frakf{\mathfrak{f}}%
 
\global\long\def\fg{\mathfrak{g}}%
 
\global\long\def\frakh{\mathfrak{h}}%
 
\global\long\def\fraki{\mathfrak{i}}%
\global\long\def\frakk{\mathfrak{k}}%
 
\global\long\def\frakl{\mathfrak{l}}%
 
\global\long\def\frakm{\mathfrak{m}}%
 
\global\long\def\frakn{\mathfrak{n}}%
 
\global\long\def\frako{\mathfrak{o}}%
 
\global\long\def\frakp{\mathfrak{p}}%
 
\global\long\def\frakq{\mathfrak{q}}%
 
\global\long\def\fraks{\mathfrak{s}}%
 
\global\long\def\fs{\mathfrak{s}}%
 
\global\long\def\fraku{\mathfrak{u}}%
\global\long\def\frakz{\mathfrak{z}}%

\global\long\def\fA{\mathfrak{A}}%
 
\global\long\def\fB{\mathfrak{B}}%
 
\global\long\def\fC{\mathfrak{C}}%
 
\global\long\def\fD{\mathfrak{D}}%
 
\global\long\def\fF{\mathfrak{F}}%
 
\global\long\def\fG{\mathfrak{G}}%
 
\global\long\def\fK{\mathfrak{K}}%
 
\global\long\def\fL{\mathfrak{L}}%
 
\global\long\def\fM{\mathfrak{M}}%
 
\global\long\def\fP{\mathfrak{P}}%
 
\global\long\def\fR{\mathfrak{R}}%
 
\global\long\def\fS{\mathfrak{S}}%
\global\long\def\fT{\mathfrak{T}}%
 
\global\long\def\fU{\mathfrak{U}}%
 
\global\long\def\fX{\mathfrak{X}}%

\global\long\def\ssS{\mathsf{S}}%
\global\long\def\ssT{\mathsf{T}}%
 
\global\long\def\ssW{\mathsf{W}}%

\global\long\def\hM{\hat{M}}%

\global\long\def\rM{\mathrm{M}}%
\global\long\def\prj{\mathfrak{P}}%

{} 
\global\long\def\sy#1{{\color{blue}#1}}%

\global\long\def\magenta#1{{\color{magenta}#1}}%

\global\long\def\symb#1{#1}%

\global\long\def\emhrb#1{\text{{\color{red}{\huge {\bf #1}}}}}%

\newcommand{\symbi}[1]{\index{$ #1$}{\color{red}#1}} 

{} 
\global\long\def\SYM#1#2{#1}%

\renewcommand{\SYM}[2]{\symb{#1}}

\newcommand{\usuji}{\color[rgb]{0.7,0.4,0.4}} \newcommand{\usu}{\color[rgb]{0.5,0.2,0.1}}
\newenvironment{Usuji} {\begin{trivlist}   \item \usuji }  {\end{trivlist}}
\newenvironment{Usu} {\begin{trivlist}   \item \usu }  {\end{trivlist}} 

\newcommand{\term}[1]{\textcolor[rgb]{0, 0, 1}{\bf #1}}
\newcommand{\termi}[1]{{\bf #1}}

\global\long\def\supp{{\rm supp}}%
\global\long\def\dom{\mathrm{dom}}%
\global\long\def\ran{\mathrm{ran}}%
 
\global\long\def\leng{\text{{\rm leng}}}%
 
\global\long\def\diam{\text{{\rm diam}}}%
 
\global\long\def\Leb{\text{{\rm Leb}}}%
 
\global\long\def\meas{\text{{\rm meas}}}%
\global\long\def\sgn{{\rm sgn}}%
 
\global\long\def\Tr{{\rm Tr}}%
 
\global\long\def\tr{\mathrm{tr}}%
 
\global\long\def\spec{{\rm spec}}%
 
\global\long\def\Ker{{\rm Ker}}%
 
\global\long\def\Lip{{\rm Lip}}%
 
\global\long\def\Id{{\rm Id}}%
 
\global\long\def\id{{\rm id}}%

\global\long\def\ex{{\rm ex}}%
 
\global\long\def\Pow{\mathsf{P}}%
 
\global\long\def\Hom{\mathrm{Hom}}%
 
\global\long\def\div{\mathrm{div}}%
 
\global\long\def\grad{\mathrm{grad}}%
 
\global\long\def\Lie{{\rm Lie}}%
 
\global\long\def\End{{\rm End}}%
 
\global\long\def\Ad{{\rm Ad}}%

\newcommand{\slim}{\mathop{\mbox{s-lim}}} %

\newcommand{\wlim}{\mathop{\mbox{w-lim}}}

\newcommand{\limsub}{\mathop{\mbox{\rm lim-sub}}}

\global\long\def\bboxplus{\boxplus}%

\renewcommand{\bboxplus}{\mathop{\raisebox{-0.8ex}{\text{\begin{trivenv}\LARGE{}$\boxplus$\end{trivenv}}}}}

\global\long\def\shuff{\sqcup\kern-0.3ex  \sqcup}%

\renewcommand{\shuff}{\shuffle}

\global\long\def\upha{\upharpoonright}%

\global\long\def\ket#1{|#1\rangle}%
 
\global\long\def\bra#1{\langle#1|}%

{} 
\global\long\def\lll{\vert\kern-0.25ex  \vert\kern-0.25ex  \vert}%
 \renewcommand{\lll}{{\vert\kern-0.25ex  \vert\kern-0.25ex  \vert}}

\global\long\def\biglll{\big\vert\kern-0.25ex  \big\vert\kern-0.25ex  \big\vert\kern-0.25ex  }%
 
\global\long\def\Biglll{\Big\vert\kern-0.25ex  \Big\vert\kern-0.25ex  \Big\vert}%

\newcommand{\iiia}[1]{{\left\vert\kern-0.25ex\left\vert\kern-0.25ex\left\vert #1
  \right\vert\kern-0.25ex\right\vert\kern-0.25ex\right\vert}}

\global\long\def\iii#1{\iiia{#1}}%

\global\long\def\Upa{\Uparrow}%
 
\global\long\def\Nor{\Uparrow}%

\newcommand{\vertt}{\kern-0.6ex\vert}
\renewcommand{\Nor}{[\kern-0.16ex ]}

\global\long\def\Prob{\mathbb{P}}%
\global\long\def\Var{\mathrm{Var}}%
\global\long\def\Cov{\mathrm{Cov}}%
\global\long\def\Ex{\mathbb{E}}%
{} 
\global\long\def\Ae{{\rm a.e.}}%
 
\global\long\def\samples{\bOm}%


\global\long\def\bOne{{\bf 1}}%

\global\long\def\Ten{\bullet}%
{} %

\global\long\def\TT{\intercal}%
 \renewcommand{\TT}{\mathsf{T}}

\global\long\def\trit{\vartriangle\!\! t}%

\global\long\def\p{\mathbf{p}}%
 
\global\long\def\q{\mathbf{q}}%
 
\global\long\def\bA{\mathbf{A}}%
 
\global\long\def\x{\mathbf{x}}%
 
\global\long\def\y{\mathbf{y}}%

\global\long\def\WICK#1{{\rm w}\{#1\}}%
 \renewcommand{\WICK}[1]{{:}#1{:}}

\global\long\def\ssD{\boldsymbol{D}}%
 \renewcommand{\ssD}{\mathsf{D}}

\global\long\def\ssK{\boldsymbol{K}}%
 \renewcommand{\ssK}{\mathsf{K}}

\global\long\def\ssH{\boldsymbol{H}}%
 \renewcommand{\ssH}{\mathsf{H}}

\global\long\def\ssN{\boldsymbol{N}}%
 \renewcommand{\ssN}{\mathsf{N}}

\global\long\def\ssR{\boldsymbol{R}}%
 \renewcommand{\ssR}{\mathsf{R}}

\global\long\def\ssP{\boldsymbol{P}}%
 \renewcommand{\ssP}{\mathsf{P}}

\global\long\def\bQ{\mathbf{Q}}%
 
\global\long\def\rF{\mathrm{F}}%
 
\global\long\def\bM{\mathbf{M}}%
 
\global\long\def\rE{\mathrm{E}}%
 
\global\long\def\bV{\mathbf{V}}%

\global\long\def\bE{\mathbf{E}}%

\global\long\def\sfS{\mathsf{S}}%
\global\long\def\sfQ{\mathsf{Q}}%
 
\global\long\def\bfS{{\bf S}}%

\global\long\def\Span{{\rm span}}%
\global\long\def\Inv{{\rm Inv}}%
\global\long\def\Borel{{\rm Borel}}%
\global\long\def\Lag{{\rm Lag}}%
\global\long\def\Tv{{\rm Tv}}%

\global\long\def\GreenOp{\mathsf{E}}%
\global\long\def\Sol{\mathsf{Sol}}%
\global\long\def\GL{{\rm GL}}%
\global\long\def\Sp{{\rm Sp}}%
\global\long\def\sfD{\mathsf{D}}%
\global\long\def\ann{{\rm ann}}%
\global\long\def\Ann{{\rm Ann}}%

\global\long\def\sperp{\angle}%
{} 
\global\long\def\bProj{\mathfrak{P}}%

\def\foreignlanguage#1#2{#2}


\let\ruleorig=\rule
\renewcommand{\rule}{\noindent\ruleorig}

\global\long\def\labelenumi{(\arabic{enumi})}%

\begin{abstract}
Unlike QFT in Minkowski spacetime (QFTM), QFT in curved spacetime
(QFTCS) suffers from a conceptual obscurity on the empirical (experimentally
verifiable/falsifiable) laws. We propose to employ the notion of \emph{prior
conditional probabilities} to describe a part of the empirical laws
of QFTCS. This is interpreted as a quantum conditional probability
\emph{without no information on the initial state}. Hence this notion
is expected to be free from the inevitable vagueness of the empirical
meaning of quantum states in QFTCS. More generally in quantum physics,
this notion seems free from the conceptual problems on state reductions.
We confine ourselves to the probabilistic laws of the free scalar
fields (Klein\textendash Gordon fields) in curved spacetime, which
require some reconsideration on the empirical meaning of the canonical
commutation relation (CCR). We give some examples of empirical laws
in terms of prior conditional probabilities, concerning the CCR and
the free scalar QFTCS.
\end{abstract}

\section{Introduction}

First, consider the following simple question:%
{} How can \emph{physical/empirical} laws%
{} be described by a QFT on a Minkowski spacetime (QFTM)? A popular
answer will be the following: They are described by the \emph{S-matrix}.
One may give another more general but more indirect answer: All the
empirical laws are essentially derived from the \emph{Wightman functions}
(i.e., the vacuum expectation values).

On the other hand, in a QFT in curved spacetime (QFTCS) (see \cite{Wal1994,BFV2003,BF2009,HW2015,KM2015,FR2016}
and references therein), neither S-matrices nor Wightman functions
are defined on a generic spacetime. The undefinability of Wightman
functions follows from the undefinability of the vacuum; In QFTCS,
generally there is no unique distinguished %
state such as the vacuum.%
{} Thus we have a fundamental question: How can empirical laws be described
by a QFTCS? What causes the predictive power of a QFTCS?

Fewster and Verch \cite{FV2020} elaborate a formulation of measurement
processes in QFTCS, which is expected to make the empirical meaning
of QFTCS clearer. However, the above fundamental question seems to
remain unanswered there; For example, what is meant empirically/experimentally
when saying ``{[}a{]} probe is prepared in a state $\sigma$'' \cite[Sec.3.2]{FV2020}
in QFTCS? We know that in QFTM we can obtain the vacuum state which
is interpreted to be ``prepared by Nature'', and the states which
are given by local excitations of the vacuum. However, in QFTCS we
do not know precisely what is a ``preparable state''. 

The \emph{Unruh effect} is known as one of the most prominent results
of QFTCS. However, it seems to remain an unsolved problem whether
the Unruh effect is an empirical law, that is, whether we can verify/falsify
the Unruh effect experimentally. See \cite{CHM2007,Ear2011} and references
therein. Thus we see that the notions of empirical laws, as well as
those of observability and verifiability, have been rather obscure
in QFTCS. Note that the conceptual clarity of such notions does not
follow necessarily from the mathematical rigor of the theory. In fact,
the rigorous proof of the Unruh effect was found over forty years
ago by Sewell \cite{Sew1982}, based on the Bisognano\textendash Wichmann
theorem \cite{BW1975} and the Tomita\textendash Takesaki theory (e.g.~\cite{BR87,Tak2003}).
Although the question on the empirical meaning of the Unruh effect
is one of the motivations of this article, we will not arrive at its
answer here. In this article we confine ourselves to the probabilistic
laws of the free scalar fields (Klein\textendash Gordon fields) in
curved spacetime, which require some reconsideration on the empirical
meaning of the canonical commutation relation (CCR).

In this paper, we propose to employ the notion of \emph{prior conditional
probabilities} to describe a part of the empirical laws. This is interpreted
as a quantum conditional probability \emph{without no information
on the initial state}. Hence this notion is expected to be free from
the inevitable vagueness of the empirical meaning of quantum states
in QFTCS. More generally in quantum physics, this notion seems free
from the conceptual problems on state reductions (or wave packet collapses);
Although some authors are struggling to resolve the conceptual discrepancy
between the state reductions and the relativistic causality, we emphasize
that prior conditional probabilities can be defined without referring
to state reductions. (Note that we will not mention whether the state
reductions are physically realistic phenomena or not.)

It is also emphasized that we will not attempt to describe \emph{all}
of the empirical laws of a QFTCS in such a way. For example, our method
does not seem suitable to describe the thermal effects such as the
Unruh/Hawking effects. Furthermore, we confine ourselves to the field
observables of a scalar QFTCS, and hence we shall not refer to the
non-linear observables such as the stress-energy tensor in this paper.

Next we mention the problems of describing physical symmetries of
the empirical laws. %
{} Because the vacuum is the only Poincar\'e invariant state in QFTM,
the covariance of a QFTM is considered to be fully described by the
covariance of vacuum expectations (Wightman functions). On the other
hand, a QFTCS is required to be \emph{generally covariant}, but first
we must consider the elementary question: what is the general covariance?
We encounter the problem to give a precise definition of the notion,
both mathematically and conceptually. See Norton \cite{Nor1993} for
the long history of the disputes on that notion. Note that there have
been possibly many authors who think that the general covariance is
a physically vacuous, somewhat tautological principle \cite[Sec.5]{Nor1993}.
Thus it is not quite evident whether the general covariance is an
empirical (experimentally verifiable/falsifiable) law. Nowadays we
have a rigorous definition of the general covariance in QFTCS due
to Brunetti, Fredenhagen \& Verch \cite{BFV2003}, and also the definition
of Hollands \& Wald \cite{HW2010}. However note again that the mathematical
rigor does not necessarily guarantee the conceptual clarity. If we
are not sure whether the general covariance of QFTCS is empirical,
on what grounds can we say that a QFTCS is a \emph{physical} theory?
Of course we will not be able to give the final answer to this difficult
fundamental problem in this article, but we expect that our method
of prior conditional probability can shed some light on it. This expectation
is based on some examples of the symplectic covariant descriptions
of empirical laws in terms of prior conditional probabilities given
in Subsec.~\ref{subsec:Covariant-reformulation} and Sec.~\ref{sec:Finite-dimensional-case},
and an example of Galilei covariant description in Subsec.~\ref{subsec:Galilei-covariant-free}.
Although we also would like to regard Theorem \ref{thm:KG3} as an
example of the generally covariant empirical law in a QFTCS, the above-mentioned
conceptual obscurity of general covariance prevents us from asserting
that.

The question on the empirical meaning of QFTCS%
{} arises even in the the simplest cases: free scalar fields (Klein\textendash Gordon
fields) in a curved spacetime. In the algebraic formulations of QFTCS,
roughly speaking, a free scalar field in a curved (globally hyperbolic)
spacetime $\cM$ is defined by the canonical commutation relation
(CCR) \cite{BGP2007,BF2009,HW2015,KM2015,BD2015}. That is, a free
scalar field is defined to be an abstract CCR algebra $\cA(V)$ \cite[Theorem 5.2.8]{BR97}
over a symplectic vector space $V$ of classical solutions of the
Klein\textendash Gordon equation on $\cM$. In other words, a free
scalar field $\phi$ in $\cM$ with a given mass $m\ge0$ has only
one (nontrivial) law: $[\phi(f),\phi(g)]=\im\sigma_{m}(f,g)\bOne$,
where $f,g$ are test functions, $\sigma_{m}$ is a symplectic form,
and $\bOne$ is the unit element. Hence the question on the empirical
meaning of a free scalar field on $\cM$ seems to boil down to the
same question on the CCR.%

In Section \ref{sec:Definitions-of-free}, we present a definition
of a free scalar field on a curved spacetime, and in Section \ref{sec:Some-empirical-laws}
we reconsider a quite elementary but fundamental question: What is
the empirical law (or the predictive power) of the CCR? 

In Section \ref{sec:Harmonic-oscillator}, we describe some empirical
laws in terms of prior conditional probabilities on harmonic oscillators,
viewed as a ``free scalar fields in $(0+1)$-dimensional spacetime''.
In Section \ref{sec:Finite-dimensional-case}, we consider any finite-dimensional
symplectic vector space $V$, and the CCR over $V$. We obtain an
explicit formula to calculate prior conditional probabilities on the
simplest cases. In Section \ref{sec:Free-scalar-field} we examine
general free scalar (Klein\textendash Gordon) field on a globally
hyperbolic spacetime, and show an explicit formula to give an explicit
formula on a prior conditional probabilities, derived from the result
of Section \ref{sec:Finite-dimensional-case}.

\section{Definitions of free scalar (Klein\textendash Gordon) fields on a
curved spacetime}

\label{sec:Definitions-of-free}

First recall the definitions of the abstract CCR algebra. A representation-independent
definition of a CCR algebra is as follows (e.g., \cite{KM2015}).
Let $V$ be a real linear space equipped with a nondegenerate symplectic
bilinear form $\sigma$. The \termi{CCR algebra} over $V$ is the
unital {*}-algebra $\cA=\cA(V)$ which is freely generated by the
elements $\phi(f)$, $f\in V$, satisfying
\begin{enumerate}
\item $f\mapsto\phi(f)$ is $\R$-linear,
\item $\phi(f)^{*}=\phi(f)$, for all $f\in V$,
\item $\phi(f)\phi(g)-\phi(g)\phi(f)=\im\sigma(f,g)\bOne$ for all $f,g\in V$.
\end{enumerate}
Another version of CCR algebra is defined by the Weyl form of the
CCR \cite[Sec.5.2.2.2]{BR97}. Let $\fA$ be a $C^{*}$-algebra generated
by nonzero elements $W(f),$ $f\in V$, satisfying
\begin{enumerate}
\item $W(-f)=W(f)^{*},$
\item $W(f)W(g)=e^{-\im\sigma(f,g)/2}W(f+g)$ for all $f,g\in V.$
\end{enumerate}
It is shown that such $\fA$ is unique up to {*}-isomorphism. We write
$\fA=\fA(V)$ and call it the \termi{CCR $C^*$-algebra} over $V$.

Next we recall the algebraic formalism of free fields in curved spacetime,
see \cite{BGP2007,BF2009,HW2015,KM2015,BD2015}.

Let $(M,g)$ be a globally hyperbolic spacetime, where $g$ is the
metric with signature ${+}{-}{-}{-}$. The \termi{d'Alembert operator}
on $M$ is defined by $\SYM{\Box}{\Box}:=g^{ab}\nabla_{a}\nabla_{b}$.
Then for $m\ge0$, $\SYM{P_{m}}{Pm}:=\Box+m^{2}$ is called the \termi{Klein--Gordon (KG) operator}.
Let $\SYM{\GreenOp^{+}}{C+},\SYM{\GreenOp^{-}}{E-}:C_{0}^{\infty}(M,\R)\to C^{\infty}(M,\R)$
be the retarded and advanced Green operators of the KG operator, respectively
(e.g., \cite[Sec.2.1]{HW2015}). Alternatively, $\GreenOp^{\pm}$
can be viewed as distributional kernels on $M\times M$, which satisfy
\[
P_{m}\GreenOp^{\pm}(x,y)=\delta(x,y).
\]
{} Let $\GreenOp:=\GreenOp^{+}-\GreenOp^{-}$, which is sometimes called
the commutator operator. For any $f\in C_{0}^{\infty}(M)$, 

For $f_{1},f_{2}\in C_{0}^{\infty}(M)$, let $\GreenOp(f_{1},f_{2}):=(f_{1},\GreenOp(f_{2}))$.
Formally,
\[
\GreenOp(f_{1},f_{2})=\int_{M}\int_{M}f_{1}(x)f_{2}(y)\GreenOp(x,y)\d x\d y.
\]
The CCR of the Klein\textendash Gordon field operators $\{\phi(f)|f\in C_{0}^{\infty}(M)\}$
\cite{KM2015,HW2015} is given by
\begin{equation}
[\phi(f_{1}),\phi(f_{2})]=\im\GreenOp(f_{1},f_{2})\bOne,\qquad f_{1},f_{2}\in C_{0}^{\infty}(M).\label{eq:CCR-KG}
\end{equation}
However, $(C_{0}^{\infty}(M),\bE(\cdot,\cdot))$ is not a symplectic
vector space, since $\bE$ is degenerate. The quotient space $V:=C_{0}^{\infty}(M)/\cN$,
$\SYM{\cN}N:=\Ker(\GreenOp)$, becomes a symplectic vector space,
so that
\[
\phi(f+\cN):=\phi(f),\qquad\GreenOp(f+\cN,g+\cN):=\GreenOp(f,g)
\]
is well-defined for $f,g\in C_{0}^{\infty}(M)$. In fact, we have
the following (\cite[Sec.3.6, Proposition 8]{BF2009}, see also \cite[p.129]{BGP2007},
\cite[Sec.3.1.1]{BD2015}).
\begin{prop}
Let $C_{{\rm sc}}^{\infty}(M)$ denote the space of spacelike-compactly
supported smooth functions on $M$. Then the sequence
\[
0\longrightarrow C_{0}^{\infty}(M)\stackrel{P_{m}}{\longrightarrow}C_{0}^{\infty}(M)\stackrel{\GreenOp}{\longrightarrow}C_{{\rm sc}}^{\infty}(M)\stackrel{P_{m}}{\longrightarrow}C_{{\rm sc}}^{\infty}(M)
\]
is exact and complex (i.e., the composition of any two successive
maps is zero).
\end{prop}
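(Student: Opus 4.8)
The plan is to reduce everything to the standard existence, uniqueness and support properties of the retarded/advanced Green operators $\GreenOp^{\pm}$ of a normally hyperbolic operator on a globally hyperbolic spacetime (as developed in \cite{BGP2007}); namely $P_{m}\GreenOp^{\pm}=\GreenOp^{\pm}P_{m}=\id$ on $C_{0}^{\infty}(M)$ together with the causal support condition $\supp(\GreenOp^{\pm}f)\subseteq J^{\pm}(\supp f)$. First I would check the easy assertions. That the maps form a complex is immediate: since $\GreenOp=\GreenOp^{+}-\GreenOp^{-}$, we get $\GreenOp P_{m}=(\GreenOp^{+}-\GreenOp^{-})P_{m}=\id-\id=0$ and $P_{m}\GreenOp=P_{m}(\GreenOp^{+}-\GreenOp^{-})=\id-\id=0$. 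Exactness at the first $C_{0}^{\infty}(M)$ is just injectivity of $P_{m}$ there, which follows from $\GreenOp^{+}P_{m}=\id$: if $P_{m}f=0$ then $f=\GreenOp^{+}(P_{m}f)=0$. I would also record that $\GreenOp$ indeed lands in $C_{{\rm sc}}^{\infty}(M)$, since $\supp(\GreenOp f)\subseteq J^{+}(\supp f)\cup J^{-}(\supp f)$ meets every Cauchy surface in a compact set.

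For exactness at the middle $C_{0}^{\infty}(M)$, the inclusion $\ran(P_{m})\subseteq\Ker(\GreenOp)$ is the complex property. For the reverse, suppose $f\in C_{0}^{\infty}(M)$ with $\GreenOp f=0$, i.e.\ $\GreenOp^{+}f=\GreenOp^{-}f=:u$. The support condition forces $\supp u\subseteq J^{+}(\supp f)\cap J^{-}(\supp f)$, which is compact because $M$ is globally hyperbolic and $\supp f$ is compact; hence $u\in C_{0}^{\infty}(M)$. Then $P_{m}u=P_{m}\GreenOp^{+}f=f$, so $f=P_{m}u\in\ran(P_{m})$, as required.

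The substantive step is exactness at $C_{{\rm sc}}^{\infty}(M)$, where I expect the \textbf{main obstacle} to lie: I must show that every spacelike-compact solution of $P_{m}u=0$ has the form $\GreenOp f$. The plan is the standard ``time-slab'' construction. Using global hyperbolicity, choose two Cauchy surfaces with $\Sigma_{-}$ to the past of $\Sigma_{+}$ and a smooth $\chi$ with $\chi=0$ to the past of $\Sigma_{-}$ and $\chi=1$ to the future of $\Sigma_{+}$, and set $f:=P_{m}(\chi u)$. Since $P_{m}u=0$, the function $f=[P_{m},\chi]u$ is supported where $\d\chi\ne0$, i.e.\ in the slab between the two surfaces intersected with the spacelike-compact $\supp u$; this set is compact, so $f\in C_{0}^{\infty}(M)$. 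It then remains to verify $\GreenOp f=u$. I would argue that $\GreenOp^{+}f-\chi u$ and $\GreenOp^{-}f-(\chi-1)u$ are each spacelike-compact solutions of $P_{m}(\,\cdot\,)=0$ vanishing throughout a past, respectively future, region (using $P_{m}((\chi-1)u)=f$ for the second); the uniqueness of the Cauchy problem for $P_{m}$ on a globally hyperbolic spacetime then forces $\GreenOp^{+}f=\chi u$ and $\GreenOp^{-}f=(\chi-1)u$, whence $\GreenOp f=\chi u-(\chi-1)u=u$. The delicate points, and the ones I would write out in full, are the bookkeeping of supports (ensuring every auxiliary function is genuinely spacelike compact, so that the uniqueness statement applies) and citing the precise uniqueness and support results for $\GreenOp^{\pm}$ from \cite{BGP2007}; the inclusion $\ran(\GreenOp)\subseteq\Ker(P_{m}|_{C_{{\rm sc}}^{\infty}})$ is once more just the complex property already established.
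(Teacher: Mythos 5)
Your proof is correct and is essentially the standard argument from the literature (B\"ar--Ginoux--Pf\"affle / B\"ar--Fredenhagen) that the paper itself simply cites for this proposition rather than proving: the complex property and injectivity from $\GreenOp^{\pm}P_{m}=P_{m}\GreenOp^{\pm}=\id$, exactness at the middle term via compactness of $J^{+}(\supp f)\cap J^{-}(\supp f)$, and exactness at $C_{{\rm sc}}^{\infty}(M)$ via the time-slab cutoff $f=[P_{m},\chi]u$ together with uniqueness of the Cauchy problem. No gaps; the support bookkeeping you flag is exactly the part that needs care, and your outline handles it correctly.
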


Especially, we have 
\[
\cN=\Ker(\GreenOp)=P_{m}C_{0}^{\infty}(M)=\{P_{m}f|f\in C_{0}^{\infty}(M)\},
\]
and $P_{m}\circ\GreenOp=0$, which means that for any $f\in C_{0}^{\infty}(M)$,
$\GreenOp(f)\in C_{{\rm sc}}^{\infty}(M)$ is a solution of the KG
equation $P_{m}\varphi=0$. Thus the space $V:=C_{0}^{\infty}(M)/\Ker(\GreenOp)$
is canonically identified with the space of the solutions of the KG
equation in $C_{{\rm sc}}^{\infty}(M)$.

The KG field on $M$ is defined to be the abstract CCR algebra generated
by $\{\phi(f)|f\in V\}$ \cite{HW2015,KM2015,BD2015}, or the CCR
$C^{*}$-algebra over $V$ \cite{BGP2007,BF2009}.

Thus, a KG field in a curved spacetime is defined essentially by only
one nontrivial law, i.e., the CCR (\ref{eq:CCR-KG}). Precisely, we
implicitly assumed another law
\begin{equation}
\forall f\in\cN,\ \phi(f)=0,\label{eq:phi(f)=00003D0}
\end{equation}
which is equivalent to $\forall f\in C_{0}^{\infty}(M),\ \phi(P_{m}f)=0,$
formally written as the KG equation $P_{m}\phi(x)=0$. However, the
CCR (\ref{eq:CCR-KG}) implies that if $f\in\cN$, $\phi(f)$ becomes
``classical'' in that it commutes with $\phi(g)$ for all $g\in C_{0}^{\infty}(M)$.
Hence the law (\ref{eq:phi(f)=00003D0}) seems to have little additional
information; in any case $\phi(f)$ seems ignorable as a quantum observable
when $f\in\cN$.

\section{Some empirical laws for the CCR}

\label{sec:Some-empirical-laws}

The notions of the abstract CCR algebras given in Section \ref{sec:Definitions-of-free}
seem not very suitable for describing empirical laws directly, mainly
because they do not contain the spectral projections of the ``field
operators'' $\phi(f)$. Instead we will consider the von Neumann
algebras of operators on a Hilbert space $\cH$ concerning the CCR,
while we will actually treat only the factors of type ${\rm I}$ (mainly
${\rm I}_{\infty}$) in this paper. Alternatively we could consider
the abstract $C^{*}$-algebras which have plenty of projections, such
as abstract $W^{*}$-algebras, or more generally $AW^{*}$-algebras
\cite{Bla2006,Ber2011}.

\subsection{Prior conditional probability}

Let $\bProj$ be a set of orthogonal projections on a separable Hilbert
space $\cH$. Let $\cA$ denote the von Neumann algebra generated
by $\bProj$, i.e., $\cA:=\bProj''$, and assume that $\cA$ is a
factor of type ${\rm I}$, equivalently, that $\cA$ is isomorphic
to $B(\cK)$ (the algebra of all bounded operators on $\cK$) for
some Hilbert space $\cK$. %
{} Then $\cA$ has the canonical faithful normal semifinite trace $\Tr_{\cA}$,
corresponding to the usual trace on $B(\cK)$. Of course, when $\bProj''=B(\cH)$,
we have $\cH=\cK$ and $\Tr_{\cA}=\Tr$; When $\bProj''\neq B(\cH)$
but $\bProj$%
{} is fixed, we may redefine $\cH$ by $\cH:=\cK$, and $\Tr_{\cA}$
may be simply written as $\Tr$. In Sections \ref{sec:Some-empirical-laws}\textendash \ref{sec:Finite-dimensional-case},
we consider the cases where $\bProj$ is fixed, and we often make
such an assumption implicitly. However, for a (fixed) free scalar
field, the set $\bProj$ will not be fixed in Section \ref{sec:Free-scalar-field}.

Let $\rho\in\cA$ be a nonzero positive operator such that $\Tr_{V}\rho<\infty$,
called an \termi{unnormalized density operator} on $\cA$; the normalization
$\ul{\rho}:=\rho/\Tr_{V}(\rho)$ is called a \termi{density operator}
on $\cA$. Note that $\ul{\rho}$ (and $\rho$) may not be a trace-class
operator on $\cH$, but if we see $\ul{\rho}$ as an operator on $\cK$,
it is of trace-class, and is a density operator on $\cK$ in the usual
sense.%
{} 

Let $1<k<n$ and $E_{1},...,E_{n}\in\bProj$. Consider the conditional
probability
\begin{equation}
\Prob_{\rho}(B|A):=\frac{\Tr_{\cA}(AB)^{*}\rho AB}{\Tr_{\cA}A^{*}\rho A},\qquad A:=E_{1}\cdots E_{k},\ B:=E_{k+1}\cdots E_{n}.\label{eq:def:Prho(B|A)}
\end{equation}
If each $E_{j}$ is interpreted as a yes-no type measurement, $A=E_{1}\cdots E_{k}$
is interpreted as a successive measurement of $E_{1},...,E_{k}$;
$A$ is also a yes-no type measurement such that the output of $A$
is ``yes'' if and only if the output of $E_{j}$ is ``yes'' for
all $j=1,...,k$. The term ``the probability of $A$'' refers to
the probability that the output of $A$ is ``yes''. Here, we adopt
the convention of operational interpretation of chronological order;
the composition $EF$ of two operations $E$ and $F$ is interpreted
as the operation $F$ \emph{after} $E$. Then $\Prob_{\rho}(B|A)$
is interpreted as follows: Consider the successive measurement $AB=E_{1}\cdots E_{n}$;
$\Prob_{\rho}(B|A)$ is the probability of the yes-no type measurement
$B$, on the condition that the output of $A$ is ``yes'', when
the initial state is $\rho$.

If $\cK$ is finite-dimensional, we have $\Tr_{\cA}\bOne=\dim\cK<\infty$
for the unit $\bOne\in\cA$, and so $\Prob_{\rho}(B|A)$ is always
defined for $\rho=\bOne$. Notice that when $0<\Tr_{\cA}A^{*}A<\infty$,
we can naturally generalize (\ref{eq:def:Prho(B|A)}) for $\rho=\bOne$
even if $\dim\cK=\infty$, where $\Tr_{\cA}\bOne=+\infty$. In this
case, we call $\Prob_{\bOne}(B|A)$ a \termi{prior conditional probability}.
This can be interpreted as the probability of $B$, on the condition
that the output of $A$ is ``yes'', \emph{without no information
on the initial state. }This is expected to be suitable for the situations
where there is no ``distinguished'' state, and where we are not
sure what are the ``preparable'' states, such as QFTCS.

Next we consider the prior conditional probabilities concerning the
canonical commutation relation (CCR). Let $Q$ and $P$ be self-adjoint
operators on a Hilbert space $\cH$, which satisfy the CCR $[Q,P]=\im\hbar\bOne$
in the usual sense. Precisely, this means the Weyl relation
\[
e^{\im sQ}e^{\im tP}=e^{-\im\hbar st}e^{\im tP}e^{\im sQ},\qquad\text{for all }s,t\in\R.
\]
Let $E_{Q}(\cdot)$ and $E_{P}(\cdot)$ denote the spectral projection-valued
measures of $Q$ and $P$, respectively, and
\[
\bProj:=\{E_{Q}(S)|S\in\Borel(\R)\}\cup\{E_{P}(T)|T\in\Borel(\R)\}.
\]
Let $\cA:=\bProj''$, or equivalently $\cA:=\{e^{\im sQ},e^{\im tP}|s,t\in\R\}''$.
Then we find that $\cA$ is a factor of type ${\rm I}_{\infty}$,
by \cite[Theorem 2.4.26]{BR87} and the Stone\textendash von Neumann
uniqueness theorem (e.g., \cite[Corollary 5.2.15]{BR97}).%
{} If the CCR is irreducibly represented on $\cH$, we have $\cA=B(\cH)$.

Notice the following fact:
\begin{lem}
\label{lem:bdd-trace-class-1}Let $S$ and $T$ be bounded Borel sets
of $\R$, and $A:=E_{Q}(S)E_{P}(T).$ Then
\[
\Tr_{\cA}A=\Tr_{\cA}A^{*}A=\frac{\Leb^{1}(S)\Leb^{1}(T)}{2\pi\hbar}<\infty,
\]
where $\Leb^{1}$ is the standard Lebesgue measure on $\R^{1}$.
\end{lem}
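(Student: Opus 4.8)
The plan is to reduce everything to the Schr\"odinger representation, where $\Tr_{\cA}$ becomes the ordinary operator trace, and then to treat the two claimed equalities as two nearly independent computations that happen to produce the same number. First I would use the Stone--von Neumann uniqueness theorem (already invoked to identify $\cA$ as a type $\mathrm{I}_\infty$ factor) to realise the irreducible pair $(Q,P)$ on $\cH=L^2(\R)$ with $Q$ multiplication by $x$ and $P=-\im\hbar\,\d/\d x$, so that $\Tr_{\cA}$ is the standard trace on $B(L^2(\R))$. In this picture $E_Q(S)$ is multiplication by $\mathbf 1_S$, and $E_P(T)=\mathbf 1_T(P)$ is a Fourier multiplier, so that $A=E_Q(S)E_P(T)$ is the integral operator with kernel
\[
K_A(x,y)=\mathbf 1_S(x)\,g_T(x-y),\qquad g_T(u):=\frac{1}{2\pi\hbar}\int_T e^{\im p u/\hbar}\,\d p .
\]

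Next I would compute the second equality through the Hilbert--Schmidt norm, which is completely rigorous. Since $T$ is bounded, $g_T=(2\pi\hbar)^{-1/2}\,\widehat{\mathbf 1_T}$ is the unitarily normalised ($\hbar$-scaled) Fourier transform of $\mathbf 1_T$, so Plancherel gives $\int_\R|g_T(u)|^2\,\d u=\Leb^1(T)/(2\pi\hbar)$. Using translation invariance of the inner integral,
\[
\Tr_{\cA}A^{*}A=\|A\|_{\mathrm{HS}}^{2}=\int_S\!\!\int_\R|g_T(x-y)|^2\,\d y\,\d x=\frac{\Leb^1(S)\,\Leb^1(T)}{2\pi\hbar},
\]
which is finite. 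In particular $A$ is Hilbert--Schmidt, so $A^{*}A=E_P(T)E_Q(S)E_P(T)$ is trace class and the displayed value is legitimate.

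For the first equality the clean step is to insert the idempotent $E_P(T)=E_P(T)^2$ and use cyclicity of the trace: $A=A\,E_P(T)$, hence $\Tr_{\cA}A=\Tr_{\cA}\big(E_P(T)A\big)=\Tr_{\cA}\big(E_P(T)E_Q(S)E_P(T)\big)=\Tr_{\cA}A^{*}A$. As an independent consistency check I would read the value off the diagonal of the kernel, $\Tr_{\cA}A=\int_\R K_A(x,x)\,\d x=g_T(0)\,\Leb^1(S)=\Leb^1(S)\Leb^1(T)/(2\pi\hbar)$, matching the Hilbert--Schmidt computation.

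The main obstacle is that both of these last arguments presuppose that $A$ \emph{itself} (not merely $A^{*}A$) is trace class, and this does \emph{not} follow from $A$ being Hilbert--Schmidt. Writing the singular values of the product of projections $A=E_Q(S)E_P(T)$ as the cosines $\cos\theta_k$ of the principal angles between $\ran E_Q(S)$ and $\ran E_P(T)$, the Hilbert--Schmidt computation controls only $\sum_k\cos^2\theta_k$, whereas membership in $L^1(\Tr_{\cA})$ requires $\sum_k\cos\theta_k<\infty$; since the eigenvalues lie in $[0,1]$ this is a strictly stronger demand. To close the gap I would observe that $A^{*}A$ is the time--frequency concentration operator onto the bounded box $S\times T$, whose eigenvalues $\cos^2\theta_k$ decay super-polynomially (the sharp ``plunge'' of the Landau--Pollak--Slepian / Daubechies localization operators), so that $\sum_k\cos\theta_k<\infty$ and $A\in L^1(\Tr_{\cA})$. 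Once trace class is secured, the cyclicity step and the diagonal evaluation are both valid and yield the stated identity.
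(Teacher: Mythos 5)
The paper never actually proves this lemma --- it is stated as a ``fact'' and the closest supporting material is the kernel realization $K_{S_{1},T_{1}}(x_{1},x_{2})=\frac{1}{2\pi\hbar}\chi_{S_{1}}(x_{1})\int_{T_{1}}\d p\,e^{\im p(x_{1}-x_{2})/\hbar}$ introduced after Proposition \ref{prop:PI-explicit}, where the operator is asserted to be ``Hilbert--Schmidt (in fact, trace-class)'' without justification. Your route is the same one the paper implicitly intends: pass to the Schr\"odinger representation via Stone--von Neumann, realize $A$ as the integral operator with kernel $\chi_{S}(x)g_{T}(x-y)$, and compute. Your treatment of the second equality is fully rigorous and is, incidentally, the only half of the lemma the paper ever uses downstream (the denominators $\Tr_{\cA}A^{*}A$ in the prior conditional probabilities). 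You also correctly isolate the one genuine subtlety, which the paper glosses over: $\Tr_{\cA}A$ is not even defined unless $A$ itself is trace class, and Hilbert--Schmidt does not imply trace class, since the singular values are $\cos\theta_{k}$ while the HS computation controls only $\sum_{k}\cos^{2}\theta_{k}$.

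The one weak point is how you close that gap. The super-polynomial ``plunge-region'' decay of the eigenvalues of the concentration operator $E_{P}(T)E_{Q}(S)E_{P}(T)$ is a theorem of Landau--Pollak--Slepian type proved for \emph{intervals} (and finite unions thereof), whereas the lemma allows $S$ and $T$ to be arbitrary bounded Borel sets, for which no such eigenvalue asymptotics are available off the shelf. The conclusion is nonetheless true, and the clean way to get it in the stated generality is the Birman--Solomyak criterion (see Simon, \emph{Trace Ideals}, Ch.~4): if $f,g\in\ell^{1}(L^{2})$, i.e.\ $\sum_{n\in\Z}\bigl(\int_{n}^{n+1}|f|^{2}\bigr)^{1/2}<\infty$ and likewise for $g$, then $f(Q)g(P)$ is trace class with $\Tr f(Q)g(P)=(2\pi\hbar)^{-1}\int f\int g$. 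For $f=\chi_{S}$, $g=\chi_{T}$ with $S,T$ bounded Borel sets the $\ell^{1}(L^{2})$ condition holds trivially (finitely many nonzero terms), which yields both the trace-class property and the value of $\Tr_{\cA}A$ directly, after which your cyclicity argument for $\Tr_{\cA}A=\Tr_{\cA}A^{*}A$ is legitimate. With that substitution your proof is complete and, if anything, more careful than what the paper offers.
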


Let $S_{j},T_{j}$ ($j=1,2,...$) be Borel subsets of $\R$, and assume
that $S_{1}$ and $T_{1}$ are bounded. Let $1\le k<n$ and
\begin{equation}
A:=E_{Q}(S_{1})E_{P}(T_{1})\cdots E_{Q}(S_{k})E_{P}(T_{k}),\qquad B:=E_{Q}(S_{k+1})E_{P}(T_{k+1})\cdots E_{Q}(S_{n})E_{P}(T_{n}).\label{eq:A=00003DEQEP}
\end{equation}
Then we have $\Tr_{\cA}A^{*}A<\infty$, and hence the prior conditional
probability $\Prob_{\bOne}(B|A)$ is defined when $\Tr_{\cA}A^{*}A>0$,
i.e., when $A\neq0$. This is equivalent to that all the sets $S_{j},T_{j}$
($j=1,...k$) are non-null in sense of the Lebesgue measure on $\R$.

\paragraph{Prior conditional probability as an empirical law.}

It is fairly reasonable to say that the explicit formula (see Proposition
\ref{prop:PI-explicit} below) to express the prior conditional probability
$\Prob_{\bOne}(B|A)$ with (\ref{eq:A=00003DEQEP}) is an empirical
law concerning the CCR. However, even if we do not know the initial
state at all, a given experimental apparatus may prepare a ``preferred''
initial state. Hence the empirical/experimental meaning of the the
prior conditional probability $\Prob_{\bOne}(B|A)$ can be unclear
in some situations, and so someone might disagree to call it an empirical
law. However, we can consider a notion of ``prior conditional probability''
in a stronger sense; Let $S_{1}$ be the open interval $(a-\epsilon,a+\epsilon)$
where $a\in\R$, $\epsilon>0$. Roughly speaking, we find that if
$p:=\lim_{\epsilon\to+0}\Prob_{\rho}(B|A)$ converges, then it is
independent of the initial state $\rho$. Although $\lim_{\epsilon\to+0}E_{Q}(S_{1})$
does not converge in norm (and it does converge in strong operator
topology but is trivial), we formally write $E_{Q}(a):=\lim_{\epsilon\to+0}E_{Q}(S_{1})$,
and $p$ by $\Prob(B|A_{a})$ where $A_{a}:=E_{Q}(a)E_{P}(S_{2})$.
We write $\Prob(B|A_{a})$ also as $\Prob_{\bOne}(B|A_{a})$ since
we have $\Prob(B|A_{a})=\lim_{\epsilon\to+0}\Prob_{\bOne}(B|A)$. 

Because the limit conditional probability $\Prob_{\bOne}(B|A_{a})$
is independent of a given initial state in a stronger sense than $\Prob_{\bOne}(B|A)$,
its empirical meaning is clearer. Thus the explicit formula to give
$\Prob_{\bOne}(B|A_{a})$ will undoubtedly be an empirical law, which
has a predictive power. Furthermore, $E_{Q}(S_{1})$ for $\epsilon\approx0$
will be easily realized as an experimental apparatus, e.g., a very
narrow slit.

\subsection{Explicit formula for the prior conditional probability of the CCR}

The explicit formula to calculate $\Prob_{\bOne}(B|A)$ with (\ref{eq:A=00003DEQEP})
is given as follows. First we prepare some notational conventions.
Let $x_{k},x_{k}',p_{k},p_{k}'$ ($k=1,...,n$) be $\R$-valued variables,
but assume $x_{1}'$ and $p_{n}'$ to be the aliases of $x_{1}$ and
$p_{n}$, respectively. Thus, the number of independent variables
is $4n-2$. For example, we interpret the double integral $\int\d x_{1}\int\d x_{1}'\ f(x_{1},x_{1}')$
{} as the integral $\int\d x_{1}\int\d x_{1}'\ \delta(x_{1}-x_{1}')f(x_{1},x_{1}')$,
which equals to the single integral $\int\d x_{1}\ f(x_{1},x_{1})$.

Let
\[
\vec{Z}\equiv(z_{1},...,z_{2n}):=(x_{1},p_{1},...,x_{n},p_{n}),
\]
\[
\vec{Z}'\equiv(z_{1}',...,z_{2n}'):=(x_{1}',p_{1}',...,x_{n}',p_{n}'),
\]
and set
\[
\int_{S_{1}\times T_{1}\times\cdots\times S_{n}\times T_{n}}\d\vec{Z}:=\int_{S_{1}}\d x_{1}\int_{T_{1}}\d p_{1}\cdots\int_{S_{n}}\d x_{n}\int_{T_{n}}\d p_{n}.
\]
Let
\[
\d_{\hbar}x_{k}:=\left(2\pi\hbar\right)^{-1/2}\d x_{k},\qquad\d_{\hbar}p_{k}:=\left(2\pi\hbar\right)^{-1/2}\d p_{k}.
\]
The notations such as $\d_{\hbar}\vec{Z}:=\left(2\pi\hbar\right)^{-n}\d\vec{Z}$
are derived from the above.

Define the function $\sfS:\R^{2n}\to\R$ by
\begin{equation}
\sfS(\vec{Z}):=\sum_{k=1}^{2n-1}\left(-1\right)^{k+1}z_{k}z_{k+1}.\label{eq:def:S(Z)}
\end{equation}

For each $n\in\N$ and $\cZ:=S_{1}\times T_{1}\times\cdots\times S_{n}\times T_{n}$
define $\sfQ(\cZ)\in\R$ by
\begin{align}
\sfQ(\cZ) & :=\int_{\cZ}\d_{\hbar}\vec{Z}\int_{\cZ}\d_{\hbar}\vec{Z}'\ e^{\im(\sfS(\vec{Z})-\sfS(\vec{Z}'))/\hbar}.\label{eq:Q(c)=00003Dint}\\
 & =\int_{\cZ}\d_{\hbar}\vec{Z}\int_{\cZ}\d_{\hbar}\vec{Z}'\ \cos\left[\frac{\im}{\hbar}(\sfS(\vec{Z})-\sfS(\vec{Z}'))\right]
\end{align}

Let $1<k<n$. Let
\[
\cX:=S_{1}\times T_{1}\times\cdots\times S_{k}\times T_{k},\qquad\cY:=S_{k+1}\times T_{k+1}\times\cdots\times S_{n}\times T_{n},
\]
so that $\cX\times\cY=S_{1}\times T_{1}\times\cdots\times S_{n}\times T_{n}=\cZ$.
Let
\begin{align*}
A & \equiv A(\cX):=E_{Q}(S_{1})E_{P}(T_{1})\cdots E_{Q}(S_{k})E_{P}(T_{k}),\\
B & \equiv B(\cY):=E_{Q}(S_{k+1})E_{P}(T_{k+1})\cdots E_{Q}(S_{n})E_{P}(T_{n}).
\end{align*}
Let
\[
\Prob_{\bOne}(\cY|\cX):=\Prob_{\bOne}(B(\cY)|A(\cX)).
\]

\begin{prop}
\label{prop:PI-explicit}Assume that $S_{1}$ and $T_{1}$ are bounded.
Then we have the following explicit formula for the prior conditional
probability:
\begin{equation}
\Prob_{\bOne}(\cY|\cX)=\frac{\sfQ(\cX\times\cY)}{\sfQ(\cX)},\label{eq:PI-explicit}
\end{equation}
if the denominator is not zero.
\end{prop}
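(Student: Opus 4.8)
The plan is to reduce the proposition to a single trace identity and then establish that identity by an explicit kernel computation in the Schr\"odinger representation. By the definition of $\Prob_{\rho}(B|A)$ with $\rho=\bOne$ we have $\Prob_{\bOne}(\cY|\cX)=\Tr_{\cA}(AB)^{*}AB/\Tr_{\cA}A^{*}A$, and since $AB=C:=E_{Q}(S_{1})E_{P}(T_{1})\cdots E_{Q}(S_{n})E_{P}(T_{n})$ while $\cZ=\cX\times\cY$, it suffices to prove, for every finite product $C$ of this form attached to a box $\cZ=S_{1}\times T_{1}\times\cdots\times S_{n}\times T_{n}$, the identity
\[
\Tr_{\cA}(C^{*}C)=2\pi\hbar\cdot\sfQ(\cZ).
\]
Granting this, applying it to $C=AB$ (length $n$) and to $A$ (length $k$) and dividing, the common factor $2\pi\hbar$ cancels and gives \eqref{eq:PI-explicit}. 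The case $n=1$ of this identity is exactly Lemma~\ref{lem:bdd-trace-class-1}, which I would use as a consistency check on the normalizing constant.

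By the Stone--von Neumann theorem the representation is a multiple of the Schr\"odinger representation, so $\cA\cong B(\cK)$ with $\cK=L^{2}(\R)$, $Q$ multiplication by $x$ and $P=-\im\hbar\,\d/\d x$; moreover the generators, hence $C$, act on the $L^{2}(\R)$-factor and $\Tr_{\cA}$ is the usual trace there, whence $\Tr_{\cA}(C^{*}C)=\|C\|_{\mathrm{HS}}^{2}=\iint|K_{C}(x,\xi)|^{2}\,\d x\,\d\xi$ for the integral kernel $K_{C}$. Here $E_{Q}(S_{j})$ has kernel $\chi_{S_{j}}(x)\delta(x-y)$ and $E_{P}(T_{j})$ has kernel $(2\pi\hbar)^{-1}\int_{T_{j}}e^{\im p(x-y)/\hbar}\,\d p$. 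Composing the $2n$ kernels, the delta functions of the $E_{Q}$-factors pin the position variable at each stage to a single variable $x_{j}$ constrained to $S_{j}$, and the $E_{P}$-factors accumulate a phase equal to $(\sfS(\vec Z)-p_{n}\xi)/\hbar$ with $\vec Z=(x_{1},p_{1},\ldots,x_{n},p_{n})$ (a different Fourier convention merely flips the overall sign of $\sfS$, which is immaterial since $\sfQ$ is real), so that
\[
K_{C}(x_{1},\xi)=\chi_{S_{1}}(x_{1})(2\pi\hbar)^{-n}\int_{S_{2}\times\cdots\times S_{n}}\int_{T_{1}\times\cdots\times T_{n}}e^{\im(\sfS(\vec Z)-p_{n}\xi)/\hbar}\,\d x_{2}\cdots\d x_{n}\,\d p_{1}\cdots\d p_{n}.
\]
Finiteness $\|C\|_{\mathrm{HS}}<\infty$ follows because $E_{Q}(S_{1})E_{P}(T_{1})$ is Hilbert--Schmidt (Lemma~\ref{lem:bdd-trace-class-1}) and the remaining projections have norm $\le1$.

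Next I would form $|K_{C}|^{2}=K_{C}\overline{K_{C}}$, introducing primed integration variables $x_{2}',\ldots,x_{n}'$ and $p_{1}',\ldots,p_{n}'$ for the conjugated factor while keeping the row variable $x_{1}$ common to both; this shared row variable realizes the alias $x_{1}'=x_{1}$ of the statement. Integrating out the column variable $\xi\in\R$ yields $\int_{\R}e^{\im(p_{n}'-p_{n})\xi/\hbar}\,\d\xi=2\pi\hbar\,\delta(p_{n}-p_{n}')$, which removes the $p_{n}'$-integration and realizes the second alias $p_{n}'=p_{n}$, leaving precisely the $4n-2$ independent variables occurring in the definition \eqref{eq:Q(c)=00003Dint} of $\sfQ$ together with one surplus factor $2\pi\hbar$. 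Collecting the two kernel prefactors $(2\pi\hbar)^{-n}$, i.e.\ $(2\pi\hbar)^{-2n}$, against this surplus reproduces
\[
2\pi\hbar\,(2\pi\hbar)^{-2n}\int\int e^{\im(\sfS(\vec Z)-\sfS(\vec Z'))/\hbar}=2\pi\hbar\,\sfQ(\cZ),
\]
as required.

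I expect the only genuine difficulty to be organizational: correctly tracking which of the $4n$ nominal variables are truly integrated and which are aliased, and confirming that exactly one factor $2\pi\hbar$ survives \emph{independently of $n$}, for that is what makes it cancel in the ratio. A secondary point is rigor in the $\xi$-integration: rather than the formal Dirac delta above, one should read $K_{C}(x_{1},\cdot)$ as a Fourier transform in the variable dual to $p_{n}$ and invoke Plancherel's theorem, which replaces $\delta(p_{n}-p_{n}')$ by the honest diagonal restriction $p_{n}'=p_{n}$ and supplies the same factor $2\pi\hbar$; the finiteness already noted justifies the use of Fubini throughout.
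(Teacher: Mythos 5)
Your proposal is correct and follows essentially the same route as the paper: the paper also substitutes the spectral resolutions $E_{Q}(S_j)=\int_{S_j}\ket{x}\bra{x}\d x$, $E_{P}(T_j)=\int_{T_j}\ket{p}\bra{p}\d p$ into $\Tr_{\cA}(AB)^{*}AB/\Tr_{\cA}A^{*}A$ and justifies the computation rigorously by realizing $E_{Q}(S_{1})E_{P}(T_{1})$ as the Hilbert--Schmidt integral operator $\hat{K}_{S_{1},T_{1}}$ on $L^{2}(\R)$, exactly as you do. Your version is in fact somewhat more explicit than the paper's (which declares the calculation ``straightforward, although cumbersome''), in particular in tracking the $n$-independent surplus factor $2\pi\hbar$ and the two aliased variables.
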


A \emph{non-rigorous} derivation of Proposition \ref{prop:PI-explicit}
is straightforward as follows. It suffices to show it when all $S_{j}$'s
and $T_{j}$'s are bounded, where the convergence of the integral
in (\ref{eq:Q(c)=00003Dint}) is clear. We use Dirac's bra-ket notations
$|x\rangle$ and $|p\rangle$ (e.g., \cite[Sec.1.6]{Sak2011}). They
are characterized by the relations
\begin{equation}
\langle x|x'\rangle=\delta(x-x'),\qquad\langle p|p'\rangle=\delta(p-p'),\qquad\langle x|p\rangle=\frac{e^{\im px/\hbar}}{\sqrt{2\pi\hbar}}.\label{eq:braket-xp}
\end{equation}
Then we can substitute
\[
E_{Q}(S_{j})=\int_{S_{j}}\ket x\bra x\d x,\qquad E_{P}(T_{j}):=\int_{T_{j}}\ket p\bra p\d p
\]
into the rhs of (\ref{eq:def:Prho(B|A)}). Then the calculation of
$\Prob_{I}(B|A)$ is straightforward, although it is somewhat cumbersome
due to the large number of variables.

To make this argument rigorous, define $K_{S_{1},T_{1}}\in L^{2}(\R^{2})$
as follows:
\begin{equation}
K_{S_{1},T_{1}}(x_{1},x_{2}):=\frac{1}{2\pi\hbar}\chi_{S_{1}}(x_{1})\int_{T_{1}}\d p\ e^{\im p(x_{1}-x_{2})/\hbar},\qquad x_{1},x_{2}\in\R,\label{eq:Kab=00003D}
\end{equation}
where $\chi_{S_{1}}:\R\to\{0,1\}$ is the indicator function of $S_{1}\subset\R$.
Note that $K_{S_{1},T_{1}}$ is expressed by the (non-rigorous) bra-ket
notation as follows: 
\[
K_{S_{1},T_{1}}(x_{1},x_{2})=\bra{x_{1}}E_{Q}(S_{1})E_{P}(T_{1})\ket{x_{2}}.
\]
Thus we can define the Hilbert-Schmidt (in fact, trace-class) operator
$\hat{K}_{S_{1},T_{1}}$ on $L^{2}(\R)$ by
\[
\left(\hat{K}_{S_{1},T_{1}}f\right)(x_{1}):=\int\d x_{2}\ K_{S_{1},T_{1}}(x_{1},x_{2})f(x_{2}),\qquad f\in L^{2}(\R).
\]
$\hat{K}_{S_{1},T_{1}}$ is an integral-operator realization of the
operator $E_{Q}(S_{1})E_{P}(T_{1})$ on $\cH$. If we set $\cH=L^{2}(\R)$,
then $E_{Q}(S_{1})E_{P}(T_{1})$ can be identified with $\hat{K}_{S_{1},T_{1}}$.
This justifies the non-rigorous calculations with the bra-ket notations.

\subsection{Covariant reformulation}

\label{subsec:Covariant-reformulation}

The CCR is invariant under the translation $Q\mapsto Q'=Q+q_{1}\bOne$,
$P\mapsto P'=P+p_{1}\bOne$ ($q_{1},p_{1}\in\R$), and the symplectic
transformation
\[
\begin{pmatrix}Q\\
P
\end{pmatrix}\mapsto\begin{pmatrix}Q'\\
P'
\end{pmatrix}=\begin{pmatrix}a & b\\
c & d
\end{pmatrix}\begin{pmatrix}Q\\
P
\end{pmatrix},\qquad\begin{pmatrix}a & b\\
c & d
\end{pmatrix}:\text{ symplectic matrix}.
\]
We rewrite the Proposition \ref{prop:PI-explicit} so that the empirical
laws are presented in a \emph{manifestly} covariant form with respect
to the translations and the symplectic transformations. We adopt a
coordinate-free formalism where the theory becomes automatically covariant
because we can define only the covariant notions there.

Let $V$ be a real linear space. Recall that $V$ becomes an affine
space $\cV$ if we ``forget'' the origin $0$ of $V$; If we fix
an arbitrary point $o\in\cV$, then the set of vectors $V_{o}:=\{\overrightarrow{oa}|a\in\cV\}$
has the structure of real linear space isomorphic to $V$. In other
words, the additive group $V$ is the translation group on $\cV$.
For $v\in V$ and $a\in\cV$, the translation $a'\in\cV$ of $a$
w.r.t.~$v$ is written as $a'=a+v$. If $\cV$ is seen as a smooth
manifold, $V_{o}$ is naturally identified with the tangent space
$T_{o}\cV$. If $\sigma$ is a symplectic form on $V$, $(\cV,\sigma)$
is called an symplectic affine space.

Let $(\cV,\sigma)$ be a two-dimensional symplectic affine space,
which is interpreted as a classical-mechanical phase space. In classical
mechanics, for every Borel set $S\subset\cV$, there exists a yes-no
type measurement that the state of the system is exactly in $S$.
On the other hand, in quantum mechanics, such measurement exists only
if there exists $v\in V\setminus\{0\}$ such that $S$ is invariant
under the translations $\R v$.

Let $u,v\in V$ be linearly independent. For $j=1,2,...$, let $S_{j}$
(resp.~$T_{j}$) be Borel subsets of $\cV$ which is invariant under
$\R v$ (resp.~$\R u$). Assume $S_{j}/\R v$ and $T_{j}/\R u$ are
bounded.

The 2-form $\sigma$ on $\cV$ is a volume form, and so its absolute
value defines a Lebesgue measure $\d\xi$ on $\cV$. If $\vec{\xi}$
is a $\cV^{n}$-valued variable, we set
\[
\d_{\hbar}\vec{\xi}:=\left(2\pi\hbar\right)^{-n}\d\vec{\xi}.
\]

Let $n\in\N$, $\xi_{j}\in\cV$ ($j=0,...,n-1$), and define the lines
$\ell_{1},...,\ell_{2n}$ in $\cV$ by
\[
\ell_{2k+1}:=\xi_{k}+\R v_{1},\qquad\ell_{2k+2}:=\xi_{k}+\R v_{2},\qquad k=1,..,n.
\]

Let $\eta_{k}$ ($k=0,...,n-1$) be the intersection point of the
lines $\ell_{2k+2}$ and $\ell_{2k+3}$, $k=0,...,n-1$ ($\ell_{2n+1}:=\ell_{1}$).
Note that $\xi_{k}$ is the intersection point of $\ell_{2k+1}$ and
$\ell_{2k+2}$. Then we have a polygon $\xi_{0}\eta_{0}\cdots\xi_{n-1}\eta_{n-1}$
with the signed area $\bfS=\bfS(\xi_{0},...,\xi_{n-1})=\bfS(\xi_{0},\vec{\xi})$,
$\vec{\xi}:=(\xi_{1},...,\xi_{n-1})$, determined by the 2-form $\sigma$.
Precisely, $\bfS$ is defined as follows: let $c$ be the piecewise
linear closed path $\xi_{0}\to\eta_{0}\to\cdots\to\xi_{n-1}\to\eta_{n-1}\to\xi_{0}$
on $\cV$. Let $\theta$ be a 1-form such that $\d\theta=\sigma$.
Then set $\bfS:=\int_{c}\theta$, which is independent of $\theta$.

For each $n\in\N$ and $\cZ:=S_{1}\times T_{1}\times\cdots\times S_{n}\times T_{n}$,
define $\sfQ(\cZ)=\sfQ_{{\rm cov}}(\cZ)\in\R$, by
\[
\sfQ_{{\rm cov}}(\cZ):=\int_{S_{n}\cap T_{1}}\d_{\hbar}\xi_{0}\int_{\cS}\d_{\hbar}\vec{\xi}\int_{\cS}\d_{\hbar}\vec{\xi}'\ e^{\im\left(\bfS(\xi_{0},\vec{\xi})-\bfS(\xi_{0},\vec{\xi}')\right)/\hbar}
\]
where
\[
\cS:=\left(S_{1}\cap T_{2}\right)\times\cdots\times\left(S_{n-1}\cap T_{n}\right)
\]
Let $1<k<n$, and
\[
\cX:=S_{1}\times T_{1}\times\cdots\times S_{k}\times T_{k},\qquad\cY:=S_{k+1}\times T_{k+1}\times\cdots\times S_{n}\times T_{n},
\]
then the prior conditional probability is given by 
\begin{equation}
\Prob_{\bOne}(\cY|\cX)=\frac{\sfQ_{{\rm cov}}(\cX\times\cY)}{\sfQ_{{\rm cov}}(\cX)},\label{eq:PI-explicit-cov}
\end{equation}
which is of the same form as (\ref{eq:PI-explicit}).

We check that Proposition \ref{prop:PI-explicit} follows from (\ref{eq:PI-explicit-cov}).
Let
\[
V=\cV=\R^{2}=\{(x,p)|p,x\in\R\},\quad\sigma=\d x\wedge\d p,\quad u=(1,0),\quad v=(0,1),\quad\xi_{j}=(x_{j+1},p_{j+1}),
\]
and 
\[
\vec{Z}\equiv(z_{1},...,z_{2n}):=(x_{1},p_{1},...,x_{n},p_{n}),
\]
\[
\vec{Z}'\equiv(z_{1}',...,z_{2n}'):=(x_{1}',p_{1}',...,x_{n}',p_{n}'),
\]
with $x_{1}'=x_{1},p_{n}'=p_{n}$ as before. We find
\begin{align*}
\bfS(\xi_{0},\vec{\xi}) & =\sum_{k=1}^{n}x_{k+1}(p_{k+1}-p_{k}),\qquad x_{n+1}:=x_{1},\ p_{n+1}:=p_{1}\\
 & =\sum_{k=1}^{n}z_{2k+1}(z_{2k+2}-z_{2k}),\qquad z_{2n+1}:=z_{1}\\
 & =\sum_{j=1}^{2n}\left(-1\right)^{j+1}z_{j}z_{j+1}\\
 & =\sfS(\vec{Z})-x_{1}p_{n},
\end{align*}
where $\sfS(\vec{Z})$ is defined by (\ref{eq:def:S(Z)}).%
{} Thus we find
\[
\bfS(\xi_{0},\vec{\xi})-\bfS(\xi_{0}',\vec{\xi}')=\sfS(\vec{Z})-\sfS(\vec{Z}').
\]
Hence $\sfQ_{{\rm cov}}(\cZ)$ equals to $\sfQ(\cZ)$ defined by (\ref{eq:Q(c)=00003Dint}),
and so Proposition \ref{prop:PI-explicit} follows from (\ref{eq:PI-explicit-cov}).

Conversely, (\ref{eq:PI-explicit-cov}) follows from the proof of
Proposition \ref{prop:PI-explicit} in the cases where $\{Q,P\}$
is irreducible, because the translations and symplectic transformations
of the CCR are unitarily represented.

\section{Harmonic oscillator as a prototype of QFT}

\label{sec:Harmonic-oscillator}

In this section we consider a harmonic oscillator, viewed as a ``free
scalar field in $(0+1)$-dimensional spacetime'' with ``mass''
$\mu\ge0$. 

The classical Lagrangian density is given by
\[
\cL=\frac{1}{2}(\dot{x}^{2}-\mu^{2}x^{2}).
\]
The simplest case is where the field is ``massless'', $\mu=0$,
which corresponds to the a nonrelativistic free particle on a line.
In the massless case the quantized field does not have a ground state
(``vacuum''), and so the quantized field cannot be characterized
by the Wightman functions. Thus it has some significance as a prototype
of ``QFT without Wightman functions''.

\subsection{Galilei covariant free particle}

\label{subsec:Galilei-covariant-free}

Assume the usual CCR $[X,P]=\im\hbar\bOne$, which is irreducibly
represented on $\cH$. The Hamiltonian of a nonrelativistic quantum
free particle on a line is given by $H=\frac{1}{2m}P^{2}$, which
leads to
\[
X_{t}:=e^{\im tH/\hbar}Xe^{-\im tH/\hbar}=X+\frac{P}{m}t,\qquad t\in\R.
\]
Hence we have the CCR's
\begin{equation}
[X_{t_{1}},X_{t_{2}}]=\frac{-\im\hbar(t_{1}-t_{2})}{m},\qquad t_{1},t_{2}\in\R.\label{eq:CCR-massless}
\end{equation}
Roughly speaking, the CCR's of this form are seen as the \emph{definition}
of the $(0+1)$-dimensional massless free scalar field in the terminology
of \cite{BGP2007,BF2009,HW2015,KM2015,BD2015}. However, the empirical
meaning of this relation is not immediately clear.

Let $E_{t}(\cdot)$ be the spectral projection-valued measure of $X_{t}$.
Let $S_{j}$, $j=1,2,...$ be bounded Borel subsets of $\R$, and
$t_{1}<\cdots<t_{n}$. Let $E_{j}:=E_{t_{j}}(S_{j})$. For a density
matrix $\rho$ and $1<k<n$, consider the conditional probability
\begin{equation}
\Prob_{\rho}(B|A):=\frac{\Tr(AB)^{*}\rho AB}{\Tr A^{*}\rho A},\qquad A:=E_{1}\cdots E_{k},\ B:=E_{k+1}\cdots E_{n}\label{eq:def:Prho-Gali}
\end{equation}
when the denominator is not zero. Again we see that the prior conditional
probability $\Prob_{\bOne}(B|A)$ is well-defined.

The empirical laws on the system of a (classical/quantum) nonrelativistic
free particle on a line should be written in a Galilei covariant form.
Recall that the Galilean transformations on a two-dimensional spacetime
are the compositions of translations $(x,t)\mapsto(x+x_{0},t+t_{0})$
($x_{0},t_{0}\in\R$) and the ``boosts'' $(x,t)\mapsto(x+vt,t)$
($v\in\R$). Next we will present an explicit formula for $\Prob_{\bOne}(B|A)$
which is manifestly Galilei covariant, although we consider only the
case where $n=3$ and $k=2$ for simplicity and explicitness.

It is preferable that the empirical law is derived from the defining
CCR (\ref{eq:CCR-massless}) only, since in this paper we intend to
examine the empirical law of a free scalar field in curved spacetime
as an abstract CCR algebra. However, now it is easier to employ the
conventional formalism of Schr{\"o}dinger equations. Consider the
standard Schr{\"o}dinger representation of $X$ and $P$ on $L^{2}(\R)$.
Then the propagator of the Hamiltonian $H=\frac{1}{2m}P^{2}$ is given
by
\[
K(x'',x';T)=\left(\frac{m}{2\pi\im\hbar T}\right)^{1/2}\exp\left[\frac{im}{2\hbar T}(x''-x')^{2}\right].
\]
Note that this formula is not \emph{manifestly} Galilei covariant.
However, let
\begin{equation}
K_{3}\equiv K_{3}(x_{1},x_{2},x_{3};t_{1},t_{2},t_{3}):=K(x_{1},x_{2};t_{2}-t_{1})K(x_{2},x_{3};t_{3}-t_{2})K(x_{3},x_{1};t_{1}-t_{3}).\label{eq:def:K3}
\end{equation}
then we find
\[
K_{3}=\left(\frac{m^{3}}{(2\pi\im\hbar)^{3}T_{21}T_{32}T_{13}}\right)^{1/2}\exp\left[-\frac{im}{2\hbar}\frac{\left(x_{1}T_{32}+x_{2}T_{13}+x_{3}T_{21}\right)^{2}}{T_{21}T_{32}T_{13}}\right],
\]
where $T_{kl}:=t_{k}-t_{l}$. Since $\bfS:=x_{1}T_{32}+x_{2}T_{13}+x_{3}T_{21}$
is twice the signed area of the triangle $A_{1}A_{2}A_{3}$ where
$A_{k}:=(x_{k},t_{k})$ ($k=1,2,3$), $\bfS$ is Galilei invariant.
Hence $K_{3}$ is also Galilei invariant.

Consider four points $A_{1},A_{2},A_{3}$ and $A_{2}':=(x_{2}',t_{2})$.
Let
\begin{equation}
K_{4}:=K(x_{2}',x_{1},T_{12})K(x_{1},x_{2},T_{21})K(x_{2},x_{3},T_{32})K(x_{3},x_{2}',T_{23}).\label{eq:K4}
\end{equation}
Then we can check
\[
K_{4}=\frac{2\pi\hbar|T_{13}|}{m}K_{3}(x_{1},x_{2},x_{3};t_{1},t_{2},t_{3})\ \ol{K_{3}(x_{1},x_{2}',x_{3};t_{1},t_{2},t_{3})},
\]
and hence $K_{4}$ is Galilei invariant.%

{} Let $\hat{S}_{j}:=\{(x_{j},t_{j})|x_{j}\in S_{j}\}$. Define $\sfQ_{2}(\hat{S}_{1},\hat{S}_{2})\ge0$
by
\[
\sfQ_{2}(\hat{S}_{1},\hat{S}_{2}):=\int_{S_{1}}\d x_{1}\int_{S_{2}}\d x_{2}\ \left|K(x_{1},x_{2},T_{21})\right|^{2}=\frac{m}{2\pi\hbar|T_{21}|}\Leb(S_{1})\Leb(S_{2}),
\]
where $\Leb(S_{j})$ is the Lebesgue measure of $S_{j}$. Define $\sfQ_{3}(\hat{S}_{1},\hat{S}_{2},\hat{S}_{3})\in\R$
by
\[
\sfQ_{3}(\hat{S}_{1},\hat{S}_{2},\hat{S}_{3}):=\int_{S_{1}}\d x_{1}\int_{S_{2}}\d x_{2}\int_{S_{3}}\d x_{3}\int_{S_{2}}\d x_{2}'\ K_{4}.
\]
Clearly, $\sfQ_{2}$ and $\sfQ_{3}$ are Galilei invariant. We can
check
\begin{align*}
\Tr E_{2}E_{1}E_{1}E_{2} & =\Tr E_{1}E_{2}=\sfQ_{2}(\hat{S}_{1},\hat{S}_{2}),\\
\Tr E_{3}E_{2}E_{1}E_{1}E_{2}E_{3} & =\Tr E_{1}E_{2}E_{3}E_{2}=\sfQ_{3}(\hat{S}_{1},\hat{S}_{2},\hat{S}_{3}).
\end{align*}
Thus we find the following empirical law:
\begin{prop}
In the case where $k=2$, $n=3$, the prior conditional probability
$\Prob_{\bOne}(\hat{S}_{3}|\hat{S}_{1},\hat{S}_{2}):=\Prob_{\bOne}(B|A)$
defined in (\ref{eq:def:Prho-Gali}) is explicitly written Galilei
covariantly as
\[
\Prob_{\bOne}(\hat{S}_{3}|\hat{S}_{1},\hat{S}_{2})=\frac{\sfQ_{3}(\hat{S}_{1},\hat{S}_{2},\hat{S}_{3})}{\sfQ_{2}(\hat{S}_{1},\hat{S}_{2})}.
\]
\end{prop}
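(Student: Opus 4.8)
The plan is to reduce everything to two trace identities and then recognize those traces as the integrals defining $\sfQ_2$ and $\sfQ_3$; the Galilei covariance of the result is then automatic, since $\sfQ_2$ and $\sfQ_3$ were already shown to be Galilei invariant. Writing $A=E_1E_2$ and $B=E_3$, the definition (\ref{eq:def:Prho-Gali}) with $\rho=\bOne$ gives $\Prob_{\bOne}(B|A)=\Tr(AB)^*AB/\Tr A^*A$. Since each $E_j$ is a self-adjoint projection, idempotency ($E_j^2=E_j$) together with cyclicity of the trace collapses these to
\[
\Tr A^*A=\Tr E_2E_1E_1E_2=\Tr E_1E_2
\]
for the denominator and
\[
\Tr(AB)^*AB=\Tr E_3E_2E_1E_1E_2E_3=\Tr E_1E_2E_3E_2
\]
for the numerator, which are exactly the two expressions in the display preceding the proposition. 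It therefore remains to establish $\Tr E_1E_2=\sfQ_2(\hat S_1,\hat S_2)$ and $\Tr E_1E_2E_3E_2=\sfQ_3(\hat S_1,\hat S_2,\hat S_3)$.

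For these two identities I would pass to the Schr\"odinger representation on $L^2(\R)$. Using $X_{t_j}=e^{\im t_jH/\hbar}Xe^{-\im t_jH/\hbar}$ and functional calculus, each projection factorizes as $E_j=E_{t_j}(S_j)=U_j\chi_{S_j}U_j^*$, with $U_j:=e^{\im t_jH/\hbar}$ and $\chi_{S_j}$ the multiplication operator by the indicator of $S_j$. The only operators surviving between consecutive projections are then the free propagators $U_i^*U_j=e^{-\im T_{ij}H/\hbar}$, whose integral kernel is $K(\cdot,\cdot;T_{ij})$. After one cyclic rotation the denominator becomes $\Tr[\chi_{S_1}\,U_1^*U_2\,\chi_{S_2}\,U_2^*U_1]$; inserting position eigenstates and using $K(y,x;-T)=\overline{K(x,y;T)}$ turns this into $\int_{S_1}\d x_1\int_{S_2}\d x_2\,|K(x_1,x_2;T_{21})|^2$, which is exactly $\sfQ_2$. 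The numerator reduces likewise to an integral over $x_1\in S_1,\ x_2\in S_2,\ x_3\in S_3,\ x_2'\in S_2$ of a cyclic product of four propagators, which I would match term by term against $K_4$ in (\ref{eq:K4}).

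Two bookkeeping points deserve care. First, a direct reading of the time labels in the numerator produces the complex conjugate $\overline{K_4}$ rather than $K_4$ itself; this is harmless, because $\Tr(AB)^*AB=\Tr E_1E_2E_3E_2$ is the trace of the positive operator $(AB)^*AB$, hence real, so $\int\overline{K_4}=\int K_4=\sfQ_3$ and the value is unchanged. Second, and this is the main obstacle, one must justify rigorously that each trace equals the integral of the corresponding product kernel along the diagonal. This is precisely the rigor issue flagged after Proposition \ref{prop:PI-explicit}: the honest route is to realize the relevant operators, for instance $E_1E_2$ and the Hilbert--Schmidt building block $\chi_{S_1}U_1^*U_2\chi_{S_2}$, as explicit integral operators in the spirit of $\hat K_{S_1,T_1}$, to verify they are trace-class using the boundedness of $S_1$ and $S_2$, and then to invoke the composition rule for Hilbert--Schmidt operators together with $\Tr(\hat F)=\int F(x,x)\,\d x$. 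Once this is in place, the two identities follow from the Gaussian propagator computation, and substituting them into the ratio yields $\Prob_{\bOne}(\hat S_3|\hat S_1,\hat S_2)=\sfQ_3(\hat S_1,\hat S_2,\hat S_3)/\sfQ_2(\hat S_1,\hat S_2)$, completing the proof.
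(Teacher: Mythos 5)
Your proposal is correct and follows essentially the same route as the paper: the paper also reduces $\Prob_{\bOne}(B|A)$ via idempotency and cyclicity to $\Tr E_1E_2$ and $\Tr E_1E_2E_3E_2$, and identifies these with $\sfQ_2$ and $\sfQ_3$ through the free propagator kernels (the paper merely states "we can check" where you supply the kernel computation and the trace-class justification). Your two bookkeeping remarks (the $\overline{K_4}$ versus $K_4$ issue resolved by reality of the trace, and the Hilbert--Schmidt realization) are consistent with, and slightly more careful than, the paper's presentation.
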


\subsection{Harmonic oscillator}

\label{subsec:Harmonic-oscillator}

The ``massive'' ($\mu>0$) case corresponds to a harmonic oscillator.
The Hamiltonian of a quantum harmonic oscillator in the usual unit
system is 
\[
H=\frac{P^{2}}{2m}+\frac{m\omega^{2}X^{2}}{2}.
\]
It follows that
\begin{equation}
X_{t}:=e^{\im tH/\hbar}Xe^{-\im tH/\hbar}=X\cos\omega t+\frac{P}{m\omega}\sin\omega t,\label{eq:Xt-oscill}
\end{equation}
\begin{equation}
P_{t}:=e^{\im tH/\hbar}Xe^{-\im tH/\hbar}=-m\omega X\sin\omega t+P\cos\omega t,\label{eq:Pt-oscill}
\end{equation}
 and hence we have the CCR
\begin{equation}
\left[X_{t_{1}},X_{t_{2}}\right]=\frac{\im\hbar}{m\omega}\sin\omega\left(t_{2}-t_{1}\right).\label{eq:CCR-oscill}
\end{equation}

The Hamiltonian $H$ has the well-known propagator
\[
K(x'',x';T)=K_{m,\omega}(x'',x';T)=\left(\frac{m\omega}{2\pi\im\hbar\sin\omega T}\right)^{1/2}\exp\left\{ \frac{\im m\omega}{2\hbar}\left[(x^{\prime2}+x^{\prime\prime2})\cot\omega T-2\frac{x'x''}{\sin\omega T}\right]\right\} .
\]
The prior conditional probability $\Prob_{\bOne}(\hat{S}_{3}|\hat{S}_{1},\hat{S}_{2})$
and more generally $\Prob_{\bOne}(\hat{S}_{k+1},...,\hat{S}_{n}|\hat{S}_{1},...,\hat{S}_{k})$
are expressed by this propagator similarly to the $\mu=0$ case. Note
that the Hamiltonian $H$, as well as $K_{m,\omega}$, is not invariant
under the translations $X\mapsto X+x_{0}\bOne$, $P\mapsto P+p_{0}\bOne$,
and the symplectic transformations except the rotation (\ref{eq:Xt-oscill}),
(\ref{eq:Pt-oscill}). %
However the CCR (\ref{eq:CCR-oscill}) suggests further symmetry. 

Define $K_{3}$ by (\ref{eq:def:K3}) with $K=K_{m,\omega}$. For
simplicity, let $\omega=1$, $m=1$, $\hbar=1$. Consider $\R^{2}=\{(x,p)|x,p\in\R\}$
with the symplectic form $\sigma=\d x\wedge\d p$ as a phase space.
The measurement that the value of the quantity $X_{t}=X\cos t+P\sin t$
is exactly in a Borel set $S\subset\R$ corresponds with the Borel
subset $S(t):=\{(x,p)|x\cos t+p\sin t\in S\}$ of the phase space
$\R^{2}$. Let $\ell_{x_{0},t}:=\{(x,p)|x\cos t+p\sin t=x_{0}\}$
for $x_{0}\in\R$. Consider $\ell_{k}:=\ell_{x_{k},t_{k}}$, $k=1,2,3$,
and let $A_{kl}$ be the intersection of $\ell_{k}$ and $\ell_{l}$.
The signed area $\bfS(\ell_{1},\ell_{2},\ell_{3})\equiv\bfS(A_{12},A_{23},A_{31})$
of the triangle $A_{12}A_{23}A_{31}$ is given by
\[
\bfS(A_{12},A_{23},A_{31})=-\frac{1}{2}\frac{\left(x_{1}\sin\left(t_{2}-t_{3}\right)+x_{2}\sin\left(t_{3}-t_{1}\right)+x_{3}\sin\left(t_{1}-t_{2}\right)\right)^{2}}{\sin\left(t_{2}-t_{3}\right)\sin\left(t_{3}-t_{1}\right)\sin\left(t_{1}-t_{2}\right)}.
\]
Note that $\bfS(A_{12},A_{23},A_{31})$ is invariant under translations
and symplectic transformations on the phase space. By a straightforward
calculation, we obtain
\[
K_{3}=R_{3}(t_{1},t_{2},t_{3})e^{-\im\bfS(A_{12},A_{23},A_{31})},
\]
where
\[
R_{3}(t_{1},t_{2},t_{3}):=\frac{1}{2\sqrt{2}\pi^{\frac{3}{2}}\sqrt{-\im\sin\left(t_{1}-t_{2}\right)}\,\sqrt{-\im\sin\left(t_{2}-t_{3}\right)}\,\sqrt{-\im\sin\left(t_{3}-t_{1}\right)}}.
\]
Note that $R(t_{1},t_{2},t_{3})$ is not symplectic invariant. Let
\[
\ell_{2}':=\ell_{x_{2}',t_{2}}:=\{(x,p)|x\cos t_{2}+p\sin t_{2}=x_{2}'\}
\]
Define $K_{4}$ by (\ref{eq:K4}) with $K=K_{\omega,m}$.%
{} Explicitly, we find 
\[
K_{4}=R_{4}(t_{1},t_{2},t_{3})e^{-\im(\bfS(\ell_{1},\ell_{2},\ell_{3})-\bfS(\ell_{1},\ell_{2}',\ell_{3}))},
\]
where
\[
R_{4}(t_{1},t_{2},t_{3}):=\frac{1}{4\pi^{2}\left|\sin\left(t_{1}-t_{2}\right)\sin\left(t_{2}-t_{3}\right)\right|}.
\]

Let $E_{t}(\cdot)$ be the spectral projection-valued measure of $X_{t}$,
and $E_{k}:=E_{t_{k}}(S_{k})$. We have
\begin{equation}
\Tr E_{1}E_{2}E_{3}E_{2}=\int_{S_{1}}\d x_{1}\int_{S_{2}}\d x_{2}\int_{S_{3}}\d x_{3}\int_{S_{2}}\d x_{2}'\ K_{4}\label{eq:TrE1E2E3=00003D}
\end{equation}
which is well-defined when $S_{1},S_{2}$ are bounded and $t_{k}-t_{l}\notin\Z\pi$
for $k\neq l$. Let $\hat{S}_{k}:=S_{k}(t_{k})=\{(x,p)|x\cos t_{k}+p\sin t_{k}\in S_{k}\}$,
and write $E(\hat{S}_{k}):=E_{k}$. Since $\Tr E_{1}E_{2}E_{3}E_{2}=\Tr E(\hat{S}_{1})E(\hat{S}_{2})E(\hat{S}_{3})E(\hat{S}_{2})$
is invariant under translations and symplectic transformations, the
r.h.s.~of (\ref{eq:TrE1E2E3=00003D}) can be rewritten in a covariant
form; The integral measure $\d x_{1}\d x_{2}\d x_{3}\d x_{2}'$ is
not symplectic invariant, and instead $R_{4}(t_{1},t_{2},t_{3})\d x_{1}\,\d x_{2}\,\d x_{3}\,\d x_{4}$
should be invariant. In fact, for fixed $t_{k},t_{l}$, the map $(x_{k},x_{l})\mapsto A_{kl}=(x_{kl},p_{kl})$
is linear, e.g.,
\[
x_{12}=\frac{x_{2}\sin t_{1}-x_{1}\sin t_{2}}{\sin(t_{1}-t_{2})},\qquad p_{12}=\frac{x_{1}\cos t_{2}-x_{2}\cos t_{1}}{\sin(t_{1}-t_{2})}
\]
and hence the invariant measure $\d x_{12}\d p_{12}$ is written as
\[
\d x_{12}\d p_{12}=\frac{1}{|\sin(t_{2}-t_{1})|}\d x_{1}\d x_{2}.
\]
We write $\d A=\d x\,\d p$ for $A=(x,p)$. Observe that for each
fixed $t_{1},t_{2},t_{3}$, $K_{4}$ is determined by $A_{12}:=\ell_{1}\cap\ell_{2}$
and $A_{23}':=\ell_{2}'\cap\ell_{3}$. Thus we write
\[
K_{4}'(A_{12},A_{23}'):=e^{-\im(\bfS(\ell_{1},\ell_{2},\ell_{3})-\bfS(\ell_{1},\ell_{2}',\ell_{3}))}
\]
and we have the manifestly covariant expression:
\[
\sfQ(\hat{S}_{1},\hat{S}_{2},\hat{S}_{3}):=\Tr E(\hat{S}_{1})E(\hat{S}_{2})E(\hat{S}_{3})E(\hat{S}_{2})=(2\pi)^{-2}\int_{\hat{S}_{1}\cap\hat{S}_{2}}\d A_{12}\int_{\hat{S}_{2}\cap\hat{S}_{3}}\d A_{23}'\ K_{4}'(A_{12},A_{23}')
\]
For $t_{1},t_{2}$ fixed, $|K_{m,\omega}|^{2}$ is the constant %
$\left(2\pi|\sin(t_{2}-t_{1})|\right)^{-1}$, hence
\[
\sfQ(\hat{S}_{1},\hat{S}_{2}):=\Tr E(\hat{S}_{1})E(\hat{S}_{2})=\int_{S_{1}}\d x_{1}\int_{S_{2}}\d x_{2}\frac{1}{2\pi|\sin(t_{2}-t_{1})|}=\frac{1}{2\pi}\Leb(\hat{S}_{1}\cap\hat{S}_{2}),
\]
where $\Leb(\cdot)$ is the Lebesgue measure on $\R^{2}$ determined
by $\d x\,\d p$, which is of course invariant. Thus we arrived at
an empirical law as a prior conditional probability in a covariant
form:
\begin{equation}
\Prob_{\bOne}(\hat{S}_{3}|\hat{S}_{1},\hat{S}_{2})=\frac{\sfQ(\hat{S}_{1},\hat{S}_{2},\hat{S}_{3})}{\sfQ(\hat{S}_{1},\hat{S}_{2})}.\label{eq:P1-harmonic}
\end{equation}

\section{Finite-dimensional CCR}

\label{sec:Finite-dimensional-case}

Let $(V,\sigma)$ be a finite-dimensional symplectic vector space,
and $(\cV,\sigma)$ be the corresponding symplectic affine space,
whose translation group is the additive group $V$. Here we view $\cV$
and $V$ to be identical as sets, but they have different structures.
Note that we can define only the translation-invariant notions on
$\cV$. Let $\cH$ be a Hilbert space. For $f\in V$, let $\phi(f)$
be a self-adjoint operator such that $f\mapsto\phi(f)$ is linear,
and satisfy the CCR $[\phi(f),\phi(g)]=\im\sigma(f,g)\bOne$ in the
usual sense. We assume that this representation of the CCR is irreducible.
In this case the system $\{\phi(f)|f\in V\}$ is unique up to unitary
equivalence.
\begin{defn}
For $F\subset V$, let
\[
F^{\sperp}:=\{f\in V|\forall g\in F,\ \sigma(f,g)=0\}.
\]

Let $W$ be a linear subspace of $V$.
\begin{itemize}
\item $W$ is \termi{isotropic} if $W\subset W^{\sperp}$; Equivalently
if $\sigma(f,g)=0$ for all $f,g\in W$.
\item $W$ is \termi{coisotropic} if $W\supset W^{\sperp}$; Equivalently,
if $W^{\sperp}$ is isotropic.
\item $W$ is \termi{Lagrangian} if $W=W^{\sperp}$; Equivalently, $W$
is isotropic and $\dim W=\frac{1}{2}\dim V$.
\item $W$ is \termi{symplectic} if $W\cap W^{\sperp}=\{0\}$.
\end{itemize}
\end{defn}

In the literature, $W^{\sperp}$ is often denoted by $W^{\perp}$,
$W^{\sigma}$, $W^{\circ}$, etc.

For each Borel set $S\subset\R$, and $f\in V\setminus\{0\}$, let
$E_{f}(S)$ be the spectral projection of $\phi(f)$. Let $C_{f}(S):=\{g\in V|\sigma(f,g)\in S\}$.
If $S_{1},S_{2}\subset\R$ are open sets, we see that the following
three conditions are equivalent:
\begin{enumerate}
\item $E_{f_{1}}(S_{1})=E_{f_{2}}(S_{2})$,
\item $f_{1}=rf_{2}$ and $S_{1}=rS_{2}$ for some $r\in\R$,
\item $C_{f_{1}}(S_{1})=C_{f_{2}}(S_{2})$.
\end{enumerate}
For general Borel sets, the situation is somewhat more complex: $E_{f_{1}}(S_{1})=E_{f_{2}}(S_{2})$
iff $\exists r\in\R,f_{1}=rf_{2}$, $S_{1}\sim rS_{2}$, where $X\sim Y$
means that $X$ and $Y$ are ``almost equal'', that is, the symmetric
difference $X\triangle Y$ is a Lebesgue null set. Hence the above
(1) and (3) are not equivalent precisely. Instead we can consider
the quotient set $\cB(\R):=\Borel(\R)/{\sim}$, which is a complete
Boolean algebra, and define $E_{f}(\tilde{S}):=E_{f}(S)$ for $\tilde{S}=[S]_{\sim}\in\cB(\R)$,
the equivalence class of $S\in\Borel(\R)$. In this case we should
understand the ``set'' $C_{f}(\tilde{S})$ to be an element of the
Boolean algebra $\cB(V):=\Borel(V)/{\sim}$. In the following, $E_{f}(S)$
and $C_{f}(S)$ are understood in this way. An element of $\cB(V)$
will be loosely called a ``subset'' of $V$, similarly to the usual
functional-analytic terminology: $f\in L^{p}(V)$ is called a ``function''
on $V$.

Thus the elements of the sets $\SYM{\cE_{1}}{E1}:=\{E_{f}(S)|S\in\cB(\R),f\in V\setminus\{0\}\}$
and $\SYM{\cO_{1}}{O1}:=\{C_{f}(S)|S\in\cB(\R),f\in V\setminus\{0\}\}$
are in one-to-one correspondence, and so we can write the element
of $\cE_{1}$ corresponding to $X\in\cO_{1}$ by $E(X)$, and conversely
the element of $\cC_{1}$ corresponding to $E\in\cE_{1}$ by $C(E)$.

Let $f_{1},f_{2}\in V\setminus\{0\}$ satisfy $\sigma(f_{1},f_{2})=0$,
so that $\phi(f_{1})$ and $\phi(f_{2})$ are commutative. Then the
projection $E_{f_{1}}(S_{1})E_{f_{2}}(S_{2})$ naturally corresponds
to
\[
\cO_{f_{1}}(S_{1})\cap\cO_{f_{2}}(S_{2})=\{g\in V|\sigma(f_{k},g)\in S_{k},k=1,2\}.
\]
More generally, let $E_{f_{1},f_{2}}(S)$ ($S\in\cB(\R^{2})$) be
the simultaneous spectral projection of $(\phi(f_{1}),\phi(f_{2}))$.
Then it naturally corresponds to the set
\[
\cO_{f_{1},f_{2}}(S):=\{g\in V|(\sigma(f_{1},g),\sigma(f_{2},g))\in S\}.
\]
Thus the elements of the sets
\[
\cE_{2}:=\{E_{f_{1},f_{2}}(S)|S\in\cB(\R^{2}),f_{1},f_{2}\in V\setminus\{0\},\ \sigma(f_{1},f_{2})=0\}
\]
 and
\[
\cO_{2}:=\{C_{f_{1},f_{2}}(S)|S\in\cB(\R^{2}),f_{1},f_{2}\in V\setminus\{0\},\ \sigma(f_{1},f_{2})=0\}
\]
are in one-to-one correspondence. %

More generally, let $1\le k\le\frac{1}{2}\dim V$, $S\in\cB(\R^{k})$,
and $f_{1},...,f_{k}\in V\setminus\{0\}$ be pairwise orthogonal,
i.e., $\sigma(f_{i},f_{j})=0$ for all $i,j=1,...,k$. Let
\[
\SYM{C_{f_{1},...,f_{k}}(S)}{Cf..f}:=\{g\in V|(\sigma(f_{1},g),...,\sigma(f_{k},g))\in S\}.
\]
Define $\SYM{\cO_{k}}{Ok}=\SYM{\cO_{k}(V)}{Ok(V)}\subset\cB(V)$ to
be such that $B\in\cC_{k}$ iff $B=C_{f_{1},...,f_{k}}(S)$ for some
$f_{1},...,f_{k}$ and $S$. Clearly $\cO_{k}\subset\cO_{l}$ if $k<l$.
Let
\[
\SYM{\cO}O=\SYM{\cO(V)}{O(V)}:=\cO_{n},\qquad n:=\frac{1}{2}\dim V.
\]

For a subset $X$ of $\cV$ (or $V$), define the subspace $\Inv_{V}(X)$
of $V$ by
\[
\Inv_{V}(X):=\{v\in V|X+rv=X,\ \forall r\in\R\}.
\]

\begin{prop}
Let $B\in\cB(V)$. Then $B\in\cO(V)$ if and only if $\Inv_{V}(B)$
is coisotropic.
\end{prop}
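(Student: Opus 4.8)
The plan is to represent every generator of $\cO(V)$ as the preimage of a Borel set under a linear map built from $\sigma$, and to translate the whole statement into a dictionary between the invariance subspace $\Inv_V(B)$ and symplectic complements. Concretely, for pairwise orthogonal $f_1,\dots,f_k$ I put $F:=\Span\{f_1,\dots,f_k\}$ and define the linear map $\Phi_F\colon V\to\R^{k}$ by $\Phi_F(g):=(\sigma(f_1,g),\dots,\sigma(f_k,g))$, so that $C_{f_1,\dots,f_k}(S)=\Phi_F^{-1}(S)$ and $\ker\Phi_F=F^{\sperp}$. Pairwise orthogonality says exactly that $F$ is isotropic. I would then prove the two implications separately.

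For the \emph{forward} implication I take a representative $B=\Phi_F^{-1}(S)$. Since translating the argument of $\Phi_F$ by any $v\in F^{\sperp}=\ker\Phi_F$ does not change its value, such a translation fixes $B$ exactly, whence $F^{\sperp}\subseteq\Inv_V(B)$. Because $F$ is isotropic we have $(F^{\sperp})^{\sperp}=F\subseteq F^{\sperp}$, i.e.\ $F^{\sperp}$ is itself coisotropic. Finally I would record the elementary observation that any subspace $W'$ containing a coisotropic subspace $W$ is coisotropic: from $W\subseteq W'$ one gets $(W')^{\sperp}\subseteq W^{\sperp}\subseteq W\subseteq W'$. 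Applying this with $W=F^{\sperp}$ and $W'=\Inv_V(B)$ shows $\Inv_V(B)$ is coisotropic. This direction uses only the inclusion $F^{\sperp}\subseteq\Inv_V(B)$, which holds on the nose, so no measure‑theoretic care is needed and linear independence of the $f_i$ is irrelevant.

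For the \emph{converse}, assume $W:=\Inv_V(B)$ is coisotropic and set $F:=W^{\sperp}$, which is then isotropic with $(W^{\sperp})^{\sperp}=W$, so $k:=\dim F=\dim V-\dim W\le n$. A basis $f_1,\dots,f_k$ of $F$ is automatically nonzero and pairwise orthogonal, and the associated $\Phi_F$ has $\ker\Phi_F=F^{\sperp}=W$ and is onto $\R^{k}$. It therefore suffices to produce a Borel set $\tilde S\subseteq\R^{k}$ with $B=\Phi_F^{-1}(\tilde S)=C_{f_1,\dots,f_k}(\tilde S)$ in $\cB(V)$, which places $B$ in $\cO_k\subseteq\cO_n=\cO(V)$. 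Equivalently, I must show that $B$ coincides, modulo Lebesgue‑null sets, with a set saturated under the subspace of translations $W$.

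This last point is the main obstacle, and it is exactly where the Boolean‑algebra convention (equality up to null sets in $\cB(V)$) enters: by the definition of $\Inv_V(B)$ we only know $B+w=B$ in $\cB(V)$ for every $w\in W$, i.e.\ $\chi_B$ is invariant \emph{mod null} under all $W$‑translations, not pointwise. To handle this I would fix a linear splitting $V\cong W\times U$ with $\Phi_F|_U\colon U\xrightarrow{\sim}\R^{k}$ and argue by Fubini: for a.e.\ $u\in U$ the section $B_u:=\{w\in W:w+u\in B\}$ is invariant mod null under every translation of $W$, and a standard zero‑one argument (a bounded translation‑invariant function on $\R^{\dim W}$ is a.e.\ constant, e.g.\ via mollification) forces each $B_u$ to be null or conull. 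The set $S:=\{u\in U:B_u\ \text{conull}\}$ is measurable (a countable intersection of level sets of the Fubini integrals $u\mapsto\Leb_W(B_u\cap K_j)$ over an exhaustion $K_j\uparrow W$), and $B\sim W+S$; taking $\tilde S:=\Phi_F|_U(S)$ and replacing it by a Borel representative up to null sets yields $B=C_{f_1,\dots,f_k}(\tilde S)$ in $\cB(V)$. The only degenerate case is $W=V$ (so $k=0$), where the same zero‑one argument makes $B$ equal to $\varnothing$ or $V$ mod null, both of which lie in $\cO(V)$; I would dispatch it separately.
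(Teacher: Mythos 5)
Your proof is correct and follows the same overall skeleton as the paper's (forward: $F^{\sperp}\subseteq\Inv_V(B)$ for $F=\Span\{f_1,\dots,f_k\}$ isotropic; converse: take a basis of $\Inv_V(B)^{\sperp}$ and exhibit $B$ as a preimage $C_{f_1,\dots,f_k}(\tilde S)$), but it differs in two substantive ways. First, in the forward direction you avoid the paper's preliminary reduction to the case $\Inv_{\R^k}(S)=\{0\}$: the paper reduces so as to get the \emph{equality} $\Inv_V(B)=\{f_1,\dots,f_k\}^{\sperp}$, whereas you observe that the inclusion $F^{\sperp}\subseteq\Inv_V(B)$ plus the monotonicity of coisotropy (any subspace containing a coisotropic one is coisotropic) already suffices; this is genuinely cleaner. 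Second, and more importantly, in the converse the paper argues with a literal set representative: it defines $S$ as the pointwise image $\{(\sigma(f_1,v),\dots,\sigma(f_k,v))\mid v\in B\}$ and runs a chain of pointwise equivalences that tacitly assumes $B+w=B$ holds as an identity of sets for $w\in\Inv_V(B)$, even though $B$ lives in $\cB(V)=\Borel(V)/{\sim}$ and invariance is only mod null (and the image set need not even be Borel). Your Fubini--plus--zero-one-law step supplies exactly the justification the paper elides, showing that a class invariant mod null under the subspace $W$ of translations has a $W$-saturated Borel representative. The one point to tighten is the quantifier exchange: Fubini gives ``for a.e.\ $u$, $B_u+w=B_u$ mod null for a.e.\ $w$,'' not for every $w$; but the mollification argument you cite needs only a.e.\ $w$, so the conclusion stands. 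Net effect: same route, but your version is the one that is actually rigorous at the measure-theoretic level, at the cost of an extra (standard) analytic lemma.
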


\begin{proof}
If $\dim\Inv_{\R^{k}}(S)\ge1$, there exist $f_{1}',...,f_{l}'\in\Span\{f_{1},...,f_{k}\}$
with $l<k$, and $S'\in\cB(\R^{l})$ such that $\Inv_{\R^{l}}(S')=\{0\}$
and $C_{f_{1},...,f_{k}}(S)=C_{f_{1}',...,f_{l}'}(S')$. Hence we
can assume $\Inv_{\R^{k}}(S)=\{0\}$ without loss of generality. Let
$B=C_{f_{1},...,f_{k}}(S)\in\cO(V)$ for some $S\in\cB(V)$ with $\Inv_{\R^{k}}(S)=\{0\}$,
and for some pairwise diagonal $f_{1},...,f_{k}\in V\setminus\{0\}$.
Then we see $\Inv_{V}(B)=\{f_{1},...,f_{k}\}^{\sperp}$, which is
coisotropic.

Conversely, assume that $B\in\cB(V)$ and $\Inv_{V}(B)$ is coisotropic,
equivalently, that $\Inv_{V}(B)^{\sperp}$ is isotropic. Let $f_{1},...,f_{k}\in\Inv_{V}(B)^{\sperp}$
be a basis of $\Inv_{V}(B)^{\sperp}$, so that $\Inv_{V}(B)=\{f_{1},...,f_{k}\}^{\sperp}$.
Let $S:=\{(\sigma(f_{1},v),...,\sigma(f_{k},v))|v\in B\}\subset\R^{2}$.
Then
\begin{align*}
v\in C_{f_{1},...,f_{k}}(S) & \iff\exists v'\in B,\forall j\in\{1,...,k\},\ \sigma(f_{j},v)=\sigma(f_{j},v')\\
 & \iff\exists v'\in B,\ v'-v\in\{f_{1},...,f_{k}\}^{\sperp}\\
 & \iff\exists v'\in B,\ v'-v\in\Inv_{V}(B)\\
 & \iff v\in B.
\end{align*}
Hence $B\in\cO(V)$.
\end{proof}
Thus we can define $\cO(\cV)$ as a subset of $\cB(\cV)$, not of
$\cB(V)$, with a manifest translation covariance:
\[
\SYM{\cO(\cV)}{O(V)}:=\left\{ B\in\cB(\cV)|\Inv_{V}(B)\text{ is coisotropic}\right\} .
\]
Any element of $\cO(\cV)$ is called an \termi{observable subset}
of $\cV$. The $B\in\cO(\cV)$ is called a \termi{Lagrangian observable subset}
of $\cV$ if $\Inv_{V}(B)$ is Lagrangian. For an observable set $B\in\cB(\cV)$,
let $\SYM{E(B)}{E(B)}$ denote the projection operator corresponding
to $B$.
\begin{prop}
\label{prop:finite-dim-TrB1B2}Let $\Leb_{\cV}:\cB(\cV)\to[0,\infty]$
denote the Lebesgue measure on $\cV$ normalized by the volume form
$(2\pi)^{-n}\sigma^{\wedge n}/n!$. If $B_{1}$ and $B_{2}$ are Lagrangian
observable sets of $\cV$, then
\[
\Tr E(B_{1})E(B_{2})=\Leb_{\cV}(B_{1}\cap B_{2}).
\]
Here, possibly both sides equal to $+\infty$. %
{} If $\Inv_{V}(B_{1})\cap\Inv_{V}(B_{2})=\{0\}$ and $B_{j}/\Inv_{V}(B_{j})$
is bounded for $j=1,2$ , then $\Tr E(B_{1})E(B_{2})<\infty$.
\end{prop}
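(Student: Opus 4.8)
The plan is to diagonalize the pair of Lagrangian subspaces $L_1:=\Inv_V(B_1)$ and $L_2:=\Inv_V(B_2)$ simultaneously, pass to a standard Schr\"odinger representation, and compute the trace by an integral kernel that generalizes Lemma \ref{lem:bdd-trace-class-1} from one degree of freedom to $n$. First I would record the invariances: since the representation of the CCR over $(V,\sigma)$ is irreducible and unique (Stone--von Neumann), every symplectic transformation and every translation of $\cV$ is implemented by a unitary on $\cH$, under which $E(B)\mapsto E(B')$ with $B'$ the transformed observable set. As $\Tr$, the intersection, and the normalized measure $\Leb_{\cV}$ are all invariant under these, I may replace $(B_1,B_2)$ by any symplectically congruent pair. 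Using the normal form for a pair of Lagrangian subspaces I choose a symplectic basis $e_1,f_1,\dots,e_n,f_n$ with $\sigma(f_i,e_j)=\delta_{ij}$ such that, with $r:=n-\dim(L_1\cap L_2)$,
\[
L_1=\Span\{f_1,\dots,f_n\},\qquad L_2=\Span\{e_1,\dots,e_r,f_{r+1},\dots,f_n\},
\]
so that $L_1\cap L_2=\Span\{f_{r+1},\dots,f_n\}$. Existence of this basis is standard symplectic linear algebra: with $W:=L_1\cap L_2$ one passes to $W^{\sperp}/W$, where the images of $L_1,L_2$ are transverse Lagrangians, chooses a Darboux basis there, and lifts.

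Next I pass to the Schr\"odinger representation adapted to $L_1$, writing $\cH=L^2(\R^n)$ with $\phi(f_i)$ acting as multiplication by $x_i$ and $\phi(e_i)$ as $-\im\,\partial_{x_i}$. Then $E(B_1)$ is multiplication by an indicator $\chi_{A_1}$, while $E(B_2)$ is the joint spectral projection of $(\phi(e_1),\dots,\phi(e_r),\phi(f_{r+1}),\dots,\phi(f_n))$, i.e.\ a momentum constraint in the first $r$ variables and a position constraint in the last $n-r$. Splitting $x=(y,z)\in\R^r\times\R^{n-r}$, its kernel carries a factor $\delta^{(n-r)}(z-z')$. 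Inserting the kernels into $\Tr E(B_1)E(B_2)=\int\langle x|E(B_1)|x'\rangle\langle x'|E(B_2)|x\rangle\,\d x\,\d x'$, the position deltas collapse and I obtain, heuristically,
\[
\Tr E(B_1)E(B_2)=\delta^{(n-r)}(0)\,(2\pi)^{-r}\int_{\R^{n-r}}\Leb^r(A_1^z)\,\Leb^r(A_2^z)\,\d z ,
\]
where $A_j^z$ is the $z$-slice. When $r=n$ (i.e.\ $L_1\cap L_2=\{0\}$) there are no $z$-variables, the $\delta$ disappears, and this reduces to $(2\pi)^{-n}\Leb(A_1)\Leb(A_2)$, the $n$-fold analogue of Lemma \ref{lem:bdd-trace-class-1}; a short coordinate computation identifies this with $\Leb_{\cV}(B_1\cap B_2)$, which is bounded in this case. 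Since $B_j/\Inv_V(B_j)$ bounded gives $\Leb(A_j)<\infty$, this also yields the final finiteness claim.

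The hard part is the degenerate case $L_1\cap L_2\neq\{0\}$, where the $\delta^{(n-r)}(0)$ must be given rigorous meaning. Geometrically this is exactly where $B_1\cap B_2$ is a cylinder over the nonzero isotropic subspace $W=L_1\cap L_2$, so $\Leb_{\cV}(B_1\cap B_2)=+\infty$ whenever the transverse base has positive measure and $=0$ otherwise; I must show the trace behaves identically. I would argue through the identity $\Tr E(B_1)E(B_2)=\|E(B_2)E(B_1)\|_{\mathrm{HS}}^2$ (valid since the $E(B_j)$ are projections): the operator $E(B_2)E(B_1)$ has a kernel with a genuine $\delta(z-z')$ singularity along the $W$-directions, hence fails to be Hilbert--Schmidt precisely when the base integral is positive, forcing the trace to $+\infty$; regularizing the $z$-constraint to width $\epsilon$ gives $\|E(B_2)^{(\epsilon)}E(B_1)\|_{\mathrm{HS}}^2\sim\epsilon^{-(n-r)}$ and confirms the divergence, while a null base integral makes $E(B_1)E(B_2)=0$.

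An alternative that dispatches both cases uniformly is the Weyl calculus. The Weyl symbol of a Lagrangian observable projection $E(B)$ is the indicator of a region symplectically congruent to $B$ (because $E(B)$ is the functional calculus of commuting first-order symbols $\phi(f_i)$), so by Moyal's trace identity $\Tr(AB)=(2\pi)^{-n}\int\mathrm{sym}(A)\,\mathrm{sym}(B)$ one gets $\Tr E(B_1)E(B_2)=(2\pi)^{-n}\Leb(B_1\cap B_2)=\Leb_{\cV}(B_1\cap B_2)$ directly in $[0,\infty]$. I expect the real content then to migrate into justifying the symbol computation for these flat observables and extending Moyal's identity to positive, possibly non-trace-class, operators by monotone approximation; this is the route I would take if the $\delta$-regularization in the previous paragraph proves awkward to make fully rigorous.
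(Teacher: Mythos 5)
Your core argument for the main case is the same as the paper's: reduce a transverse pair $L_{1}=\Inv_{V}(B_{1})$, $L_{2}=\Inv_{V}(B_{2})$ to the normal form of Lemma \ref{lem:LagLag-SymFrame}, pass to the Schr\"odinger representation, and fall back on the one-degree-of-freedom trace computation of Lemma \ref{lem:bdd-trace-class-1}. The paper executes this by tensor-factorizing $\Tr E_{Q_{1},...,Q_{n}}(S)E_{P_{1},...,P_{n}}(T)$ over product sets $S=\prod S_{j}$, $T=\prod T_{j}$ and then asserting the extension to general Borel $S,T$; your single kernel/Hilbert--Schmidt computation $\Tr E(B_{1})E(B_{2})=\lVert E(B_{2})E(B_{1})\rVert_{\mathrm{HS}}^{2}$ handles general Borel sets in one stroke via Plancherel, which is a modest improvement. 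Where you genuinely go beyond the paper is the degenerate case $L_{1}\cap L_{2}\neq\{0\}$: the paper's proof invokes Lemma \ref{lem:LagLag-SymFrame}, which presupposes $L_{2}\in\Tv(L_{1})$, so the ``possibly both sides equal $+\infty$'' clause is never actually argued there. Your dichotomy --- when $W:=L_{1}\cap L_{2}\neq\{0\}$ the operator $E(B_{2})E(B_{1})$ is either zero or fails to be Hilbert--Schmidt because of the $\delta(z-z')$ factor along $W$, matching $\Leb_{\cV}(B_{1}\cap B_{2})\in\{0,+\infty\}$ --- is the right way to close that gap, though the $\epsilon$-regularization step is only sketched and would need to be written out (e.g.\ by exhibiting, when the base integral is positive, an orthonormal family on which $\sum_{i}\lVert E(B_{2})E(B_{1})e_{i}\rVert^{2}$ diverges). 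The Weyl--Moyal alternative you propose is a genuinely different route not taken by the paper; it is attractive because it treats both cases uniformly, but as you note the burden shifts to extending the Moyal trace identity to these non-trace-class projections, which is not obviously less work than the direct computation.
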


Before we prove Proposition \ref{prop:finite-dim-TrB1B2}, recall
some facts on symplectic vector spaces, see e.g., \cite[Sec.1.3]{Ber2001}.
Let $\SYM{\Lag(V)}{Lag()}$ denote the set of Lagrangian subspaces
of $V$. Then the symplectic transformation group $Sp(V)$ transitively
acts on $\Lag(V)$, i.e., for any $L_{1},L_{2}\in\Lag(V)$, there
is a $\psi\in Sp(V)$ such that $\psi(L_{1})=L_{2}$. Let $L\in\Lag(V)$
and $G_{L}$ be the isotropy subgroup of $Sp(V)$ w.r.t.~$L$, i.e.,
\[
\SYM{G_{L}}{GL}=\{\psi\in Sp(V)|\psi(L)=\psi\}.
\]
Let $\Tv(L)$ be the set of $L'\in\Lag(V)$ transverse to $L'$:
\[
\SYM{\Tv(L)}{Tv()}:=\{L'\in\Lag(V)|L\cap L'=\{0\}\}.
\]
Then $G_{L}$ acts transitively on $\Tv(L)$. Thus we have
\begin{lem}
\label{lem:LagLag-SymFrame}Let $L_{1}\in\Lag(V)$ and $L_{2}\in\Tv(L_{1})$.
Then there exists a basis $e_{1},...,e_{n}$ ($n:=\frac{1}{2}\dim V$)
of $L_{1}$ and a basis $f_{1},...,f_{n}$ of $L_{2}$ such that
\[
\sigma(e_{k},e_{l})=0,\qquad\sigma(e_{k},f_{l})=\delta_{kl},\qquad\sigma(f_{k},f_{l})=0,
\]
that is, $(e_{1},...,e_{n},f_{1},...,f_{n})$ is a symplectic frame
of $V$.
\end{lem}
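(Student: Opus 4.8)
The plan is to exploit the fact that two transverse Lagrangian subspaces are naturally dual to one another under $\sigma$, and then to take a basis of one together with its dual basis in the other. First I would record that $V = L_1 \oplus L_2$: since $L_2 \in \Tv(L_1)$ we have $L_1 \cap L_2 = \{0\}$, while $\dim L_1 + \dim L_2 = n + n = \dim V$ because both are Lagrangian, so the sum is direct and exhausts $V$.

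The key step is to show that $\sigma$ restricts to a nondegenerate bilinear pairing between the $n$-dimensional spaces $L_1$ and $L_2$. Suppose $e \in L_1$ satisfies $\sigma(e,f) = 0$ for every $f \in L_2$. Since $L_1$ is isotropic we also have $\sigma(e,e') = 0$ for every $e' \in L_1$, and as $V = L_1 + L_2$ this yields $\sigma(e,v) = 0$ for all $v \in V$; nondegeneracy of $\sigma$ on $V$ then forces $e = 0$. The symmetric argument, using isotropy of $L_2$, handles the second slot. Hence the map $L_2 \to L_1^{*}$, $f \mapsto \sigma(\cdot,f)$, is an isomorphism.

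With this in hand the construction is immediate. I would pick any basis $e_1,\dots,e_n$ of $L_1$, let $e_1^{*},\dots,e_n^{*}$ be its dual basis of $L_1^{*}$, and define $f_l \in L_2$ to be the unique preimage of $e_l^{*}$ under the above isomorphism, so that $\sigma(e_k,f_l) = e_l^{*}(e_k) = \delta_{kl}$. The relations $\sigma(e_k,e_l) = 0$ and $\sigma(f_k,f_l) = 0$ then hold for free, since $L_1$ and $L_2$ are Lagrangian and hence isotropic. Because the matrix $(\sigma(e_k,f_l))$ equals the identity and is invertible, the vectors $f_l$ are linearly independent and so form a basis of the $n$-dimensional space $L_2$; thus $(e_1,\dots,e_n,f_1,\dots,f_n)$ satisfies all three families of relations and is a symplectic frame.

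The only point needing care is the nondegeneracy of the pairing, and I anticipate no genuine obstacle there, since it is the standard statement that transverse Lagrangians pair perfectly under $\sigma$. As an alternative route one could instead invoke the transitivity of $G_{L_1}$ on $\Tv(L_1)$ recorded just above: fix a symplectic frame adapted to one reference transverse pair, and pull it back along the symplectic map carrying that pair to $(L_1,L_2)$. The direct pairing argument, however, is shorter and fully self-contained, so I would present that one.
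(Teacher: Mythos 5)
Your proof is correct, but it takes a genuinely different route from the paper's. The paper gives no pairing argument at all: it records two group-theoretic facts cited to Berndt — that $Sp(V)$ acts transitively on $\Lag(V)$, and that the stabilizer $G_{L}$ of a Lagrangian $L$ acts transitively on the set $\Tv(L)$ of Lagrangians transverse to $L$ — and then presents the lemma as an immediate consequence ("Thus we have..."): one transports a standard symplectic frame adapted to a reference transverse pair $(L_{0},M_{0})$ along a symplectic map carrying $(L_{0},M_{0})$ to $(L_{1},L_{2})$. That is exactly the alternative you sketch in your closing paragraph, so you have in effect reconstructed the paper's argument as your fallback and chosen a different one as your main line. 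Your main line — showing $\sigma$ restricts to a perfect pairing of $L_{1}$ with $L_{2}$, since isotropy of each Lagrangian together with $V=L_{1}\oplus L_{2}$ pushes any vector degenerate for the pairing into the radical of $\sigma$, and then taking a basis of $L_{1}$ with its $\sigma$-dual basis in $L_{2}$ — is fully self-contained linear algebra. What each approach buys: the paper's route is shorter on the page because it outsources the work to the cited transitivity theorems (which are themselves typically proved by constructions of just this kind), whereas yours needs no external input and makes the mechanism visible. One small economy in your write-up: since the $f_{l}$ are the preimages of the dual basis $e_{1}^{*},\dots,e_{n}^{*}$ under the isomorphism $L_{2}\to L_{1}^{*}$, they are automatically a basis of $L_{2}$, so the separate linear-independence check via the matrix $\left(\sigma(e_{k},f_{l})\right)$ is redundant, though harmless.
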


\begin{proof}
\emph{of Proposition} \ref{prop:finite-dim-TrB1B2}. Let $L_{j}:=\Inv_{V}(B_{j})$
for $j=1,2$, and $(e_{1},...,e_{n},f_{1},...,f_{n})$ be a symplectic
frame of $V$ given in Lemma \ref{lem:LagLag-SymFrame}. Let
\[
Q_{j}:=\phi(e_{j}),\qquad P_{j}:=\phi(f_{j})
\]
so that $[Q_{k},P_{l}]=\im\delta_{kl}\bOne$, $k,l=1,...,n$. Denote
the element $\sum_{j=1}^{n}x_{j}e_{j}+\sum_{l=1}^{n}y_{l}f_{l}$ of
$V$ (and $\cV$) by the coordinate $(x_{1},...,x_{n},y_{1},...,y_{n})\in\R^{2n}$.
Let $\vec{x}:=(x_{1},...,x_{n})$, $\vec{y}:=(y_{1},...,y_{n})$.
For $S,T\in\cB(\R^{n})$, we find
\[
C_{e_{1},...,e_{n}}(S)=\{(\vec{x},\vec{y})\in\R^{2n}=V|\vec{y}\in S\},
\]
\[
C_{f_{1},...,f_{n}}(T)=\{(\vec{x},\vec{y})\in\R^{2n}=V|-\vec{x}\in T\}.
\]
Hence $C_{e_{1},...,e_{n}}(S)\cap C_{f_{1},...,f_{n}}(T)=(-T)\times S.$
For $S\in\cB(\R^{k})$, let $E_{Q_{1},...,Q_{k}}(S)$ (resp.~$E_{P_{1},...,P_{k}}(S)$)
be the simultaneous spectral projection of $(Q_{1},...,Q_{k})$ (resp.~$(P_{1},...,P_{k})$)
w.r.t.~$S$. %
Let $Q,P$ be a self-adjoint CCR pair, $[Q,P]=\im\bOne$, irreducibly
represented on a Hilbert space $\cK$. If $S=S_{1}\times\cdots\times S_{n}$,
$T=T_{1}\times\cdots\times T_{n}$, $S_{j},T_{j}\in\cB(\R)$, $j=1,...,n$,
we see that $E_{Q_{1},...,Q_{k}}(S)$ (resp.~$E_{P_{1},...,P_{k}}(T)$)
is unitarily equivalent to $E_{Q}(S_{1})\otimes\cdots\otimes E_{Q}(S_{k})$
(resp.~$E_{P}(T_{1})\otimes\cdots\otimes E_{P}(T_{k})$). %
Let $B_{1}:=C_{e_{1},...,e_{n}}(S)$ and $B_{2}:=C_{f_{1},...,f_{n}}(T)$,
then we have 
\begin{align*}
\Tr E(B_{1})E(B_{2}) & =\Tr E_{Q_{1},...,Q_{n}}(S)E_{P_{1},...,P_{n}}(T)\\
 & =\prod_{j=1}^{n}\Tr E_{Q}(S_{j})E_{P}(T_{j})\\
 & =\prod_{j=1}^{n}\left(2\pi\right)^{-1}\Leb_{\R}(S_{j})\Leb_{\R}(T_{j})\\
 & =\left(2\pi\right)^{-n}\Leb_{\R^{2n}}(T\times S)\\
 & =\left(2\pi\right)^{-n}\Leb_{\R^{2n}}(B_{1}\cap B_{2}).
\end{align*}
Therefore, $\Tr E(B_{1})E(B_{2})=\left(2\pi\right)^{-n}\Leb_{\R^{2n}}(B_{1}\cap B_{2})$
holds for general $S,T\in\cB(\R^{n}).$
\end{proof}
Let $B_{1},...,B_{n}\in\cO(\cV)$, and assume that $B_{1}$ and $B_{2}$
are nonzero Lagrangian observable sets, $\Inv_{V}(B_{1})\cap\Inv_{V}(B_{2})=\{0\}$
and that $B_{j}/\Inv_{V}(B_{j})$ is bounded for $j=1,2$. (Note that
an element $B$ of the Boolean algebra $\cB(\cV)$ is nonzero iff
the Lebesgue measure of $B$ is nonzero.) Then we can define the following
prior conditional probability, invariant w.r.t.~symplectic transformations
and translations:
\begin{equation}
\Prob_{\bOne}(B_{k+1},...,B_{n}|B_{1},...,B_{k})=\frac{\sfQ(B_{1},...,B_{n})}{\sfQ(B_{1},...,B_{k})}.\label{eq:P1-finite-dim}
\end{equation}
\[
\sfQ(B_{1},...,B_{j}):=\Tr A^{*}A,\qquad A:=E(B_{1})\cdots E(B_{j}).
\]
This may be seen as a kind of empirical law. However this is too implicit
to be called an empirical law in the usual sense, since $\sfQ$ is
not explicitly defined so that one can calculate the value of the
prior conditional probability from the data of $B_{1},...,B_{n}$.

Although we will not present any general formula to give $\sfQ$ explicitly
in this paper, here we give a formula in a special case, generalizing
(\ref{eq:P1-harmonic}).

Let $A_{1},A_{2},A_{3}\in\cV$. We define the signed area $\SYM{\bfS(A_{1},A_{2},A_{3})}{S(,,)}$
of the triangle $A_{1}A_{2}A_{3}$ to be measured by the 2-form $\sigma$
on $\cV$: Let $D$ be the interior of the triangle, then $\bfS(A_{1},A_{2},A_{3}):=\int_{D}\sigma$.
The area of the polygon $A_{1}\cdots A_{k}$ is expressed by $\sum_{j=2}^{k-1}\bfS(A_{1},A_{j},A_{j+1})$,
especially $\bfS(A_{1},A_{2},A_{3})-\bfS(A_{1},A_{4},A_{3})$ for
$k=4$.

Fix Lagrangian subspaces $W_{j}$, $j=1,2,3$ such that $W_{i}\cap W_{j}=\{0\}$
for $i\neq j$. For $X_{j}\in\cV/W_{j}=\{A+W_{j}|A\in\cV\}$ ($j=1,2,3$)
and $X_{2}'\in\cV/W_{2}$, let $A_{ij}:=X_{i}\cap X_{j}$, $A_{23}':=X_{2}'\cap X_{3}$.
Consider
\[
\bfS(X_{1},X_{2},X_{3},X_{2}'):=\bfS(A_{12},A_{23},A_{31})-\bfS(A_{12},A_{23}',A_{31})
\]
Since
\[
X_{1}=A_{12}+W_{1},\ X_{2}=A_{12}+W_{2},\ X_{2}'=A_{23}'+W_{2},\ X_{3}=A_{23}'+W_{3},\ 
\]
$\bfS(X_{1},X_{2},X_{3},X_{2}')$ is determined by $A_{12}$ and $A_{23}'$.
Hence the following expression makes sense:
\[
\int_{D_{1}}\d A_{12}\int_{D_{2}}\d A_{23}'\ e^{-\im\bfS(X_{1},X_{2},X_{3},X_{2}')},
\]
where $\d A_{12}$ and $\d A_{23}'$ denote the integral w.r.t.~the
Lebesgue measure $\Leb_{\cV}$, and $D_{1},D_{2}\subset\cV$.

\begin{lem}
\label{lem:xSx}Let $W_{j}\in\Lag(V)$, $j=1,2,3$ satisfy $j\neq k\then W_{j}\cap W_{k}=\{0\}$.
Then there exists a symplectic basis $(e_{1},...,e_{n},f_{1},...,f_{n})$
such that $W_{j}$'s are expressed in the corresponding coordinates
$(\vec{x},\vec{p})=(x_{1},...,x_{n},p_{1},...,p_{n})$ as follows:
\begin{equation}
W_{1}=\left\{ (\vec{x},\vec{0})|\vec{x}\in\R^{n}\right\} ,\quad W_{2}=\left\{ (\vec{0},\vec{p})|\vec{p}\in\R^{n}\right\} ,\label{eq:L1=00003DL2=00003D}
\end{equation}
\[
W_{3}=\left\{ (\vec{x},A\vec{x})|\vec{x}\in\R^{n}\right\} 
\]
for some matrix $A\in\GL(n,\R)$ with $A^{\TT}=A$.
\end{lem}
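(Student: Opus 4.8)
The plan is to reduce everything to Lemma \ref{lem:LagLag-SymFrame} and then to read off the required form of $W_{3}$ from the two transversality conditions and the Lagrangian condition. Since $W_{1}\in\Lag(V)$ and $W_{1}\cap W_{2}=\{0\}$, we have $W_{2}\in\Tv(W_{1})$, so Lemma \ref{lem:LagLag-SymFrame} furnishes a symplectic frame $(e_{1},\dots,e_{n},f_{1},\dots,f_{n})$ with $e_{1},\dots,e_{n}$ a basis of $W_{1}$ and $f_{1},\dots,f_{n}$ a basis of $W_{2}$. In the associated coordinates $(\vec{x},\vec{p})$ this gives (\ref{eq:L1=00003DL2=00003D}) at once, and the relations $\sigma(e_{k},f_{l})=\delta_{kl}$, $\sigma(e_{k},e_{l})=\sigma(f_{k},f_{l})=0$ show that the symplectic form reads $\sigma((\vec{x},\vec{p}),(\vec{x}',\vec{p}'))=\vec{x}\cdot\vec{p}'-\vec{p}\cdot\vec{x}'$.

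Next I would exhibit $W_{3}$ as a graph over the $\vec{x}$-plane $W_{1}$. Because $W_{2}$ is precisely the $\vec{p}$-plane and $W_{3}\cap W_{2}=\{0\}$, the projection $(\vec{x},\vec{p})\mapsto\vec{x}$ restricts to an injective linear map $W_{3}\to\R^{n}$; since $\dim W_{3}=n$ it is an isomorphism, so $W_{3}=\{(\vec{x},A\vec{x})\mid\vec{x}\in\R^{n}\}$ for a unique linear map $A\colon\R^{n}\to\R^{n}$. The remaining transversality $W_{3}\cap W_{1}=\{0\}$, with $W_{1}$ the $\vec{x}$-plane, forces $A\vec{x}=0\Rightarrow\vec{x}=0$, i.e.\ $A$ is injective, whence $A\in\GL(n,\R)$.

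Finally, the symmetry $A^{\TT}=A$ comes from the isotropy of the Lagrangian subspace $W_{3}$. Taking two elements $(\vec{x},A\vec{x})$ and $(\vec{x}',A\vec{x}')$ of $W_{3}$ and using the coordinate expression of $\sigma$ above, I get $\sigma=\vec{x}\cdot(A\vec{x}')-(A\vec{x})\cdot\vec{x}'=\vec{x}^{\TT}(A-A^{\TT})\vec{x}'$, which must vanish for all $\vec{x},\vec{x}'\in\R^{n}$; hence $A-A^{\TT}=0$. There is no genuine obstacle here: the only point requiring care is the bookkeeping of the sign convention for $\sigma$ inherited from Lemma \ref{lem:LagLag-SymFrame}, so that the isotropy computation produces exactly the symmetry of $A$ (and not its antisymmetry).
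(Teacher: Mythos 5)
Your proof is correct and complete. The paper itself does not prove Lemma \ref{lem:xSx} at all: its ``proof'' is the single line ``See e.g.~\cite[Corollary 1.23]{deG2006}.'' So your self-contained argument is a genuine alternative to the paper's citation, and it is the standard one: Lemma \ref{lem:LagLag-SymFrame} normalizes the pair $(W_{1},W_{2})$ to the coordinate Lagrangian planes, transversality to $W_{2}$ makes $W_{3}$ a graph $\vec{p}=A\vec{x}$ over $W_{1}$, transversality to $W_{1}$ gives $A\in\GL(n,\R)$, and isotropy of $W_{3}$ under $\sigma((\vec{x},\vec{p}),(\vec{x}',\vec{p}'))=\vec{x}\cdot\vec{p}'-\vec{p}\cdot\vec{x}'$ gives $\vec{x}^{\TT}(A-A^{\TT})\vec{x}'=0$ for all $\vec{x},\vec{x}'$, hence $A^{\TT}=A$. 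Each step is airtight; in particular your worry about the sign convention is harmless, since reversing the sign of $\sigma$ only exchanges $A-A^{\TT}$ with $A^{\TT}-A$ and yields the same symmetry conclusion. What your approach buys is that the paper becomes self-contained at this point and the reader sees exactly where each hypothesis ($W_{3}\cap W_{2}=\{0\}$, $W_{3}\cap W_{1}=\{0\}$, $W_{3}$ Lagrangian) is used; what the citation buys is brevity and a pointer to the broader context (the transitive action of the stabilizer of a Lagrangian plane on its transversals, which the paper has already recalled just before Lemma \ref{lem:LagLag-SymFrame}).
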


\begin{proof}
See e.g.~\cite[Corollary 1.23]{deG2006}.
\end{proof}
\begin{lem}
Let $W_{1},W_{2}$ be as (\ref{eq:L1=00003DL2=00003D}). The subgroup
of $\Sp(V)$ which fixes $W_{1}$ and $W_{2}$ is identified with
the group of matrices
\[
\begin{pmatrix}T & 0\\
0 & T^{\TT\,-1}
\end{pmatrix},\qquad T\in\GL(n,\R).
\]
\end{lem}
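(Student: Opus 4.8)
The plan is to work in the symplectic frame $(e_1,\dots,e_n,f_1,\dots,f_n)$ provided by Lemma~\ref{lem:xSx}, in which $\sigma$ takes the standard form $\sigma(u,w)=u^{\TT}Jw$ with $J=\left(\begin{smallmatrix}0 & I\\ -I & 0\end{smallmatrix}\right)$ (the $n\times n$ identity $I$), and to translate the two geometric conditions — that the transformation fixes $W_1$ and fixes $W_2$ (meaning it preserves these subspaces setwise) — into linear-algebraic conditions on the block form of a symplectic matrix. Recall that $M\in\Sp(V)$ is equivalent to $M^{\TT}JM=J$.

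First I would write a general $M\in\Sp(V)$ in block form $M=\left(\begin{smallmatrix}A & B\\ C & D\end{smallmatrix}\right)$ with $n\times n$ blocks. Applying $M$ to a generic vector $(\vec{x},\vec{0})\in W_1$ gives $(A\vec{x},C\vec{x})$, so $M(W_1)\subset W_1$ for all $\vec{x}$ forces $C=0$; applying $M$ to $(\vec{0},\vec{p})\in W_2$ gives $(B\vec{p},D\vec{p})$, so $M(W_2)\subset W_2$ forces $B=0$. Hence any such $M$ is block-diagonal, $M=\left(\begin{smallmatrix}A & 0\\ 0 & D\end{smallmatrix}\right)$.

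Next I would impose the symplectic condition on this block-diagonal $M$. A direct computation gives $M^{\TT}JM=\left(\begin{smallmatrix}0 & A^{\TT}D\\ -D^{\TT}A & 0\end{smallmatrix}\right)$, so $M^{\TT}JM=J$ reduces to $A^{\TT}D=I$ (the other resulting relation $D^{\TT}A=I$ being merely its transpose). This forces $A$ to be invertible and $D=(A^{\TT})^{-1}=A^{\TT\,-1}$, yielding exactly the claimed form with $T:=A\in\GL(n,\R)$. Conversely, every matrix of the stated shape is block-diagonal (so it preserves $W_1$ and $W_2$) and satisfies $A^{\TT}D=T^{\TT}T^{\TT\,-1}=I$, hence is symplectic; moreover $T\mapsto\left(\begin{smallmatrix}T & 0\\ 0 & T^{\TT\,-1}\end{smallmatrix}\right)$ is plainly a group homomorphism and is injective, which is what makes the word ``identified'' precise.

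The computation is entirely routine, so I expect no substantive obstacle; the only points requiring care are the bookkeeping of the sign and transpose conventions in $J$ and in $M^{\TT}JM=J$, the (trivial but essential) reading of ``fixes'' as ``preserves setwise'' rather than ``fixes pointwise'', and the observation that invertibility of $T$ is not an added hypothesis but is automatically delivered by $A^{\TT}D=I$.
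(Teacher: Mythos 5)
Your proof is correct: the block-matrix computation (preservation of $W_{1}$ and $W_{2}$ forcing $C=0$ and $B=0$, then the symplectic condition $M^{\TT}JM=J$ reducing to $A^{\TT}D=I$) is the standard argument, and you correctly note that invertibility of $T$ comes for free and that ``fixes'' means setwise preservation. The paper itself offers no proof, only a citation to Berndt, so your write-up simply supplies the routine verification that the reference contains; there is nothing to compare beyond that.
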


\begin{proof}
See e.g.~\cite[p.23]{Ber2001}.
\end{proof}
\begin{thm}
\label{prop:Q(B1B2B3)-explicit}Let $B_{1},B_{2},B_{3}$ be Lagrangian
observable subsets of $\cV$, and $W_{i}:=\Inv_{V}(B_{i})$. $i=1,2,3$.
Assume that (1) $W_{i}\cap W_{j}=\{0\}$ for $i\neq j$, (2) $B_{j}/W_{j}$
is bounded for $j=1,2,3$, (3) there exists a symplectic basis $(e_{1},...,e_{n},f_{1},...,f_{n})$
such that the matrix $A\in\GL(n,\R)$ of Lemma \ref{lem:xSx} is diagonal.
Define $\bfS(X_{1},X_{2},X_{3},X_{2}')$ as above. Then
\[
\sfQ(B_{1},B_{2},B_{3})=\int_{B_{1}\cap B_{2}}\d A_{12}\int_{B_{2}\cap B_{3}}\d A_{23}'\ e^{-\im\bfS(X_{1},X_{2},X_{3},X_{2}')}.
\]
\end{thm}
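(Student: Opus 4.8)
The plan is to reduce the $n$-mode statement to the one-mode computation of Subsection~\ref{subsec:Harmonic-oscillator} by passing to the diagonalizing symplectic frame supplied by hypothesis~(3), and then to reassemble the pieces using the additivity of $\sigma$ over the mode decomposition. First I would rewrite $\sfQ$: since each $E(B_i)$ is an orthogonal projection, $E(B_i)^2=E(B_i)$, so with $A=E(B_1)E(B_2)E(B_3)$ and the cyclicity of the trace,
\[
\sfQ(B_1,B_2,B_3)=\Tr A^{*}A=\Tr E(B_3)E(B_2)E(B_1)E(B_2)E(B_3)=\Tr E(B_1)E(B_2)E(B_3)E(B_2),
\]
which is exactly the quantity evaluated in the one-mode setting of Subsection~\ref{subsec:Harmonic-oscillator} (cf.~\eqref{eq:TrE1E2E3=00003D}).

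Next I would invoke Lemma~\ref{lem:xSx} together with hypothesis~(3) to fix a symplectic basis in which $W_1=\{(\vec x,\vec 0)\}$, $W_2=\{(\vec 0,\vec p)\}$ and $W_3=\{(\vec x,A\vec x)\}$ with $A=\mathrm{diag}(a_1,\dots,a_n)$. In these coordinates the three Lagrangians are block-diagonal for the symplectic splitting $V=\bigoplus_{j=1}^{n}V_j$, $V_j:=\Span\{e_j,f_j\}$: indeed $W_i=\bigoplus_j(W_i\cap V_j)$ for $i=1,2,3$, where $W_1\cap V_j=\Span\{e_j\}$, $W_2\cap V_j=\Span\{f_j\}$, $W_3\cap V_j=\Span\{e_j+a_jf_j\}$. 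By the Stone--von Neumann theorem the irreducible representation factorizes as $\cK=\bigotimes_j\cK_j$, with $\phi(e_j),\phi(f_j)$ acting only on $\cK_j$, and the three quadrature lines inside each $V_j$ are pairwise transverse, so the one-mode computation of Subsection~\ref{subsec:Harmonic-oscillator} applies in each factor.

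I would then express the trace as a phase-space integral, making the formal bra-ket manipulation rigorous exactly as in the Hilbert--Schmidt realization following Proposition~\ref{prop:PI-explicit}: the relevant products of spectral projections are trace-class integral operators whose kernels are the transition amplitudes between the eigenbases of the quadratures attached to $W_1,W_2,W_3$. Inserting these kernels gives
\[
\sfQ(B_1,B_2,B_3)=\int \chi_{B_1}\,\chi_{B_2}\,\chi_{B_3}\,\chi_{B_2}\,\prod_{j=1}^{n}K_4^{(j)},
\]
the integral running over the quadrature labels, where each $K_4^{(j)}$ is the one-mode four-amplitude product of~\eqref{eq:K4}, equal to $R_4^{(j)}\,e^{-\im(\text{signed area in }V_j)}$. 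Diagonality of $A$ is precisely what lets the amplitude kernel factor across modes (off-diagonal entries would couple the planes $V_j$ and ruin this); the indicators $\chi_{B_i}$ need not factor, since they depend only on the coset data of the respective observable set.

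Finally I would recombine. Because $\sigma=\sum_j\d x_j\wedge\d p_j$, the signed symplectic area of a triangle equals the sum of the signed areas of its projections to the planes $V_j$, so $\prod_j e^{-\im(\cdots)}=e^{-\im\bfS(X_1,X_2,X_3,X_2')}$. Changing variables from the labels to the intersection points $A_{12}=X_1\cap X_2$ and $A_{23}'=X_2'\cap X_3$ (a product over $j$ of the per-mode linear maps of Subsection~\ref{subsec:Harmonic-oscillator}) turns $\prod_jR_4^{(j)}$, the Jacobians, and the powers of $2\pi$ into precisely the normalization of $\Leb_{\cV}$ on each of $A_{12}$ and $A_{23}'$, while the combined indicator collapses to $\chi_{B_1\cap B_2}(A_{12})\,\chi_{B_2\cap B_3}(A_{23}')$ because each $B_i$ is $W_i$-invariant and the relevant intersection point lies on the appropriate cosets. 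This yields the asserted formula. The main obstacle is this last step: tracking all the normalization constants so that they assemble into exactly $\Leb_{\cV}$, and verifying that the accumulated phase is the signed-area functional in the precise vertex form $\bfS(A_{12},A_{23},A_{31})-\bfS(A_{12},A_{23}',A_{31})$ stated in the theorem. Both are geometric bookkeeping that the diagonal hypothesis reduces, mode by mode, to the already-established one-mode identity, but the recombination must be carried out carefully.
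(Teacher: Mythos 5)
Your proposal is correct and follows essentially the same route as the paper's proof: rewrite $\sfQ(B_1,B_2,B_3)$ as $\Tr E(B_1)E(B_2)E(B_3)E(B_2)$, use the diagonalizing symplectic basis of Lemma \ref{lem:xSx} together with the Stone--von Neumann tensor factorization to split the trace into one-mode factors, evaluate each factor by the harmonic-oscillator identity of Subsection \ref{subsec:Harmonic-oscillator}, and recombine via the additivity of the signed symplectic area over the planes $V_j$. The only difference is cosmetic: the paper carries out the factorization explicitly for product sets $S^{(i)}=S_1^{(i)}\times\cdots\times S_n^{(i)}$ and leaves the extension to general Borel sets implicit, whereas you phrase the computation as a single kernel integral in which the indicators need not factor.
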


\begin{proof}
First, recall the following. Let $(e_{1},...,e_{n},f_{1},...,f_{n})$
be a symplectic frame of $V$, so that $V$ is identified with $\{(q_{1},...,q_{n},p_{1},...,p_{n})\in\R^{2n}\}$.
Let
\[
A^{(i)}=(q_{1}^{(i)},...,q_{n}^{(i)},p_{1}^{(i)},...,p_{n}^{(i)})\in\R^{2n},\qquad i=1,2,3.
\]
Let $A_{j}^{(i)}:=\left(q_{j}^{(i)},p_{j}^{(i)}\right)\in\R^{2}$,
$j=1,...,n,$ $i=1,2,3$. The signed area of the triangle $A_{j}^{(1)}A_{j}^{(2)}A_{j}^{(3)}$
on $\R^{2}$ is given by
\[
\sfS\left(A_{j}^{(1)},A_{j}^{(2)},A_{j}^{(3)}\right)=\frac{1}{2}\sum_{i=1}^{3}\left(p_{j}^{(i+1)}q_{j}^{(i)}-p_{j}^{(i)}q_{j}^{(i+1)}\right),\qquad q_{j}^{(4)}:=q_{j}^{(1)},\ p_{j}^{(4)}:=p_{j}^{(1)}.
\]
Hence we find that the signed area of the triangle $A^{(1)}A^{(2)}A^{(3)}$
on $\R^{2n}$, measured by the 2-form $\sigma=\sum_{j=1}^{n}\d q_{j}\wedge\d p_{j}$,
is given by
\[
\sfS(A^{(1)},A^{(2)},A^{(3)})=\sum_{j=1}^{n}\sfS\left(A_{j}^{(1)},A_{j}^{(2)},A_{j}^{(3)}\right),
\]
where $\sfS\left(A_{j}^{(1)},A_{j}^{(2)},A_{j}^{(3)}\right)$ is the
signed area of the triangle $A_{j}^{(1)}A_{j}^{(2)}A_{j}^{(3)}$ on
$\R^{2}$. Thus we have the expression
\begin{equation}
\bfS(X_{1},X_{2},X_{3},X_{2}')=\sum_{j=1}^{n}\left[\sfS(A_{12,j},A_{23,j},A_{31,j})-\sfS(A_{12,j},A_{23,j}',A_{31,j})\right],\label{eq:S(XXXX)=00003Dsum}
\end{equation}
where e.g., $A_{12}=(x_{12,1},...,x_{12,n},p_{12,1},...,p_{12,n})\in\R^{2n}$
and $A_{12,j}=(x_{12,j},p_{12,j})\in\R^{2}$.

Assume that the matrix $A$ of Lemma \ref{lem:xSx} is diagonal: $A={\rm diag}(a_{1},...,a_{n})$
($a_{1},...,a_{n}\in\R\setminus\{0\}$). Let 
\[
Q_{j}:=\phi(e_{j}),\qquad P_{j}:=\phi(f_{j}),\qquad j=1,...,n,
\]
so that $[Q_{k},P_{l}]=\im\delta_{kl}\bOne$, $k,l=1,...,n$, and
\[
R_{j}:=\phi(e_{j}')=Q_{j}+a_{j}P_{j},\qquad e_{j}':=e_{j}+a_{j}f_{j}.
\]
For $S\in\cB(\R^{n})$, for let $E_{Q_{1},...,Q_{n}}(S)$ be the simultaneous
spectral projection of $(Q_{1},...,Q_{k})$ w.r.t.~$S$; similarly
for $E_{P_{1},...,P_{n}}(S)$ and $E_{R_{1},...,R_{n}}(S).$ 

Let $Q,P$ be a self-adjoint CCR pair, $[Q,P]=\im\bOne$, irreducibly
represented on a Hilbert space $\cK$.

If $S^{(i)}=S_{1}^{(i)}\times\cdots\times S_{n}^{(i)}$ $S_{j}^{(i)}\in\cB(\R)$,
$j=1,...,n$, $i=1,2,3$, we see that there exists a unitary operator
$U:\cH\to\cK^{\otimes n}$ such that
\begin{align*}
UE_{Q_{1},...,Q_{n}}(S^{(1)})U^{-1} & =E_{Q}(S_{1}^{(1)})\otimes\cdots\otimes E_{Q}(S_{n}^{(1)}),\\
UE_{P_{1},...,P_{n}}(S^{(2)})U^{-1} & =E_{P}(S_{1}^{(2)})\otimes\cdots\otimes E_{P}(S_{n}^{(2)}),\\
UE_{R_{1},...,R_{n}}(S^{(3)})U^{-1} & =E_{Q+a_{1}P}(S_{1}^{(3)})\otimes\cdots\otimes E_{Q+a_{n}P}(S_{n}^{(3)}).
\end{align*}
Let $B_{1}:=C_{e_{1},...,e_{n}}(S^{(1)})$, $B_{2}:=C_{f_{1},...,f_{n}}(S^{(2)})$
and $B_{3}:=C_{e_{1}',...,e_{n}'}(S^{(3)})$, and
\[
B_{1,j}:=S_{j}^{(1)}\times\R=\left\{ (x,p)\in\R^{2}|x\in S_{j}^{(1)}\right\} 
\]
\[
B_{2,j}:=\R\times S_{j}^{(2)}=\left\{ (x,p)\in\R^{2}|p\in S_{j}^{(2)}\right\} 
\]
\[
B_{3,j}:=\left\{ (x,p)\in\R^{2}|x+a_{k}p\in S_{j}^{(3)}\right\} 
\]
For an observable subset $B$ of the two-dimensional phase space $\R^{2}=\{(x,p)|x,p\in\R\}$,
let $E_{2}(B)$ denote the corresponding projection operator on $\cK$.
Recall that the proposition holds when $n=1$ by (\ref{eq:P1-harmonic}).
Thus we have
\begin{align*}
 & \Tr E(B_{1})E(B_{2})E(B_{3})E(B_{2})\\
 & \quad=\Tr E_{Q_{1},...,Q_{n}}(S^{(1)})E_{P_{1},...,P_{n}}(S^{(2)})E_{R_{1},...,R_{k}}(S^{(3)})E_{P_{1},...,P_{n}}(S^{(2)})\\
 & \quad=\prod_{j=1}^{n}\Tr E_{Q}(S_{j}^{(1)})E_{P}(S_{j}^{(2)})E_{Q+a_{k}P}(S_{j}^{(3)})E_{P}(S_{j}^{(2)})\\
 & \quad=\prod_{j=1}^{n}\Tr E_{2}(B_{1,j})E_{2}(B_{2,j})E_{2}(B_{3,j})E_{2}(B_{2,j})\\
 & \quad=\prod_{j=1}^{n}\int_{B_{1,j}\cap B_{2,j}}\d A_{12,j}\int_{B_{2,j}\cap B_{3,j}}\d A_{23,j}'\ e^{-\im\left[\bfS(A_{12,j},A_{23,j},A_{31,j})-\bfS(A_{12,j},A_{23,j}',A_{31,j})\right]}\\
 & \quad=\left(\prod_{j=1}^{n}\int_{B_{1,j}\cap B_{2,j}}\d A_{12,j}\int_{B_{2,j}\cap B_{3,j}}\d A_{23,j}^{\prime}\right)\prod_{j=1}^{n}e^{-\im\left[\bfS(A_{12,j},A_{23,j},A_{31,j})-\bfS(A_{12,j},A_{23,j}',A_{31,j})\right]}\\
 & \quad=\left(\prod_{j=1}^{n}\int_{B_{1,j}\cap B_{2,j}}\d A_{12,j}\right)\left(\prod_{j=1}^{n}\int_{B_{2,j}\cap B_{3,j}}\d A_{23,j}^{\prime}\right)\\
 & \qquad\times\exp\left[-\im\sum_{j=1}^{n}\left(\bfS(A_{12,j},A_{23,j},A_{31,j})-\bfS(A_{12,j},A_{23,j}',A_{31,j})\right)\right]\\
 & \quad=\int_{B_{1}\cap B_{2}}\d A_{12}\int_{B_{2}\cap B_{3}}\d A_{23}^{\prime}\ \exp\left[-\im\bfS(X_{1},X_{2},X_{3},X_{2}')\right]
\end{align*}
\end{proof}

\section{Free scalar field in curved spacetime}

\label{sec:Free-scalar-field}

\subsection{General scalar fields}

In this subsection, the spacetime $M$ is assumed to be an arbitrary
smooth manifold. Consider a general (possibly non-free) scalar field
$\phi$ on $M$. In this article we do not give a precise definition
of a general scalar field on $M$; instead it is roughly considered
as follows here. Since we want to consider the spectral projection
of the field operator $\phi(f)$ ($f\in C_{0}^{\infty}(M,\R)$, the
space of compactly supported smooth real functions), we assume that
$\phi(f)$ is represented as a self-adjoint operator on a Hilbert
space $\cH$, rather than as an element of an abstract {*}-algebra.
Furthermore we assume that $C_{0}^{\infty}(M)\ni f\mapsto\phi(f)$
is linear.

Let $\cD=\cD_{M}:=C_{0}^{\infty}(M)$, and $\cD'=\cD_{M}'$ denote
dual space of $\cD$; Roughly, $\cD'$ can be seen the space of Schwartz
distributions on $M$. Note that the general and precise definitions
of the distributions on manifolds can only be found rather scattered
throughout the literature, e.g., \cite[Chapter XVII]{Die1972}, \cite[Sec.6.3]{Hoer1990},
\cite[Sec.1.1]{BGP2007}; A unified presentation of the distributions
on manifolds is found in \cite[Chapter 3]{GKOS2001}. However, for
the time being, we do not need a precise definition of that notion
in this article; In fact, we may view $\cD'$ even as the \emph{algebraic}
dual of $\cD$ here. We shall not refer to the topologies on $\cD$
and $\cD'$ hereafter, and also to the Borel subsets of them; We will
consider only the Borel subsets of finite-dimensional linear (or affine)
spaces.

Let $f_{1},...,f_{n}\in\cD$ ($n\in\N$). For a Borel set $S\in\Borel(\R^{n})$,
define $D_{f_{1},...,f_{n}}(S)\subset\cD'$ by
\[
\SYM{D_{f_{1},...,f_{n}}(S)}{Dff}:=\left\{ F\in\cD'|(F(f_{1}),...,F(f_{n}))\in S\right\} .
\]
If the operators $\phi(f_{1}),...,\phi(f_{n})$ are pairwise commutative,
so that the simultaneous spectral resolution of them exists, the set
$D_{f_{1},...,f_{n}}(S)$ corresponds to the yes-no type measurement
that the value of the $\R^{n}$-valued observable $(\phi(f_{1}),...,\phi(f_{n}))$
is in $S$; Equivalently, $D_{f_{1},...,f_{n}}(S)$ can be seen as
a $\{0,1\}$-valued quantum observable. In this case, $D_{f_{1},...,f_{n}}(S)$
is called a \termi{$\phi$-observable subset} of $\cD'$. %
Let $\SYM{\cO_{\phi}}{Ophi}$ be the set of $\phi$-observable subsets
of $\cD'$. %
{} For $O\in\cO_{\phi}$, let $\SYM{E(O)}{E(O)}$ denote the corresponding
projection.

Let $f_{1},...,f_{n},g_{1},...,g_{m}\in\cD\setminus\{0\}$, and assume
that $\{f_{1},...,f_{n}\}$ is linearly independent, and $\phi(f_{1}),...,\phi(f_{n})$
are pairwise commutative; and similarly that $\{g_{1},...,g_{m}\}$
is linearly independent, and $\phi(g_{1}),...,\phi(g_{m})$ are pairwise
commutative. Let $S$ and $S'$ be bounded Borel sets of $\R^{n}$
and $\R^{m}$, respectively. Then we find that
\[
D_{f_{1},...,f_{n}}(S)=D_{g_{1},...,g_{m}}(S')\Then\Span\{f_{1},...,f_{n}\}=\Span\{g_{1},...,g_{m}\}\ (\text{and }n=m).
\]
Thus the \termi{domain} (or \termi{support}) of the $\phi$-observable
set $D_{f_{1},...,f_{n}}(S)\in\cO_{\phi}$ can be well-defined by
\[
\sfD(D_{f_{1},...,f_{n}}(S)):=\supp(f_{1})\cup\cdots\cup\supp(f_{n})=\bigcup_{f\in\Span\{f_{1},...,f_{n}\}}\supp(f).
\]
Then $O=D_{f_{1},...,f_{n}}(S)\in\cO_{\phi}$ can be seen as a $\{0,1\}$-valued
\emph{local} observable on the bounded spacetime domain $\sfD(O)$.
For general $\phi$-observable set $O\in\cO_{\phi}$, the suitable
definition of the domain $\sfD(O)$ of $O$ is less trivial; it will
be given as follows.%
{} For any subset $\cX$ of $\cD'$, define the subspaces $\Inv(\cX)$
of $\cD'$ and $\ann(\cX)$ of $\cD$ by
\[
\SYM{\Inv(\cX)}{Inv()}:=\{F\in\cD'|\forall r\in\R,\cX+rF=\cX\},
\]
\[
\SYM{\ann(\cX)}{ann()}:=\{f\in\cD|\forall F\in\cX,F(f)=0\}.
\]
Then we define the domain $\sfD(O)$ of $O\in\cO_{\phi}$ by
\[
\SYM{\sfD(O)}{D(O)}:=\bigcup\{\supp(f)|f\in\ann(\Inv(O))\}.
\]

\begin{lem}
\label{lem:D(D(S))<supp}If $f_{1},...,f_{n}\in\cD$ are linearly
independent, and $S\in\Borel(\R^{n})$, then
\[
\ann(\Inv(D_{f_{1},...,f_{n}}(S)))\subseteq\Span\{f_{1},...,f_{n}\},
\]
and hence
\[
\sfD(D_{f_{1},...,f_{n}}(S))\subseteq\supp(f_{1})\cup\cdots\cup\supp(f_{n}).
\]
The equalities hold if $S$ is bounded.
\end{lem}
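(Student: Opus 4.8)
The plan is to reduce the infinite‑dimensional computation of $\Inv$ and $\ann$ to a finite‑dimensional one in $\R^{n}$ through the evaluation map $\pi\colon\cD'\to\R^{n}$, $\pi(F):=(F(f_{1}),\ldots,F(f_{n}))$, for which $D_{f_{1},\ldots,f_{n}}(S)=\pi^{-1}(S)$. The one structural fact I would establish first is that $\pi$ is \emph{surjective}. Since $f_{1},\ldots,f_{n}$ are linearly independent smooth functions they remain independent in $L^{2}$, so their Gram matrix is invertible and one obtains $h_{1},\ldots,h_{n}\in\Span\{f_{1},\ldots,f_{n}\}\subseteq\cD$ with $\langle h_{i},f_{j}\rangle=\delta_{ij}$; the distributions $G_{i}:=\langle h_{i},\cdot\rangle\in\cD'$ then satisfy $G_{i}(f_{j})=\delta_{ij}$, so $\pi(\sum_{i}s_{i}G_{i})=s$ for every $s\in\R^{n}$. (Equivalently, one extends $\{f_{j}\}$ to a Hamel basis and uses the algebraic dual.)

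Next I would compute $\Inv(D_{f_{1},\ldots,f_{n}}(S))$ exactly. Writing $O:=\pi^{-1}(S)$ and $w:=\pi(G)$, one has $O+rG=\pi^{-1}(S+rw)$, and surjectivity of $\pi$ gives $\pi^{-1}(A)=\pi^{-1}(B)\iff A=B$. Hence $G\in\Inv(O)$ iff $S+rw=S$ for all $r\in\R$, i.e.\ iff $w\in\Inv_{\R^{n}}(S)$. This yields
\[
\Inv(D_{f_{1},\ldots,f_{n}}(S))=\pi^{-1}(L),\qquad L:=\Inv_{\R^{n}}(S),
\]
where $L$ is a linear subspace of $\R^{n}$ containing $0$.

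The inclusion now follows formally. Since $0\in L$ we have $\ker\pi=\pi^{-1}(\{0\})\subseteq\pi^{-1}(L)$, and because $\ann$ reverses inclusions, $\ann(\Inv(O))=\ann(\pi^{-1}(L))\subseteq\ann(\ker\pi)$. I would then prove the bidual identity $\ann(\ker\pi)=\Span\{f_{1},\ldots,f_{n}\}$: the inclusion $\supseteq$ is immediate, while for $\subseteq$ any $f\notin\Span\{f_{1},\ldots,f_{n}\}$ makes $f_{1},\ldots,f_{n},f$ independent, so the same biorthogonal construction produces $G\in\cD'$ with $G(f_{j})=0$ for all $j$ and $G(f)=1$, i.e.\ $G\in\ker\pi$ witnessing $f\notin\ann(\ker\pi)$. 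Combining, $\ann(\Inv(O))\subseteq\Span\{f_{1},\ldots,f_{n}\}$, and since $\supp(\sum_{j}c_{j}f_{j})\subseteq\bigcup_{j}\supp(f_{j})$ the asserted support inclusion follows.

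For the equalities I would assume $S$ bounded and nonempty (so $O\neq\emptyset$): a bounded nonempty set admits no invariance direction $w\neq0$, since a point $s\in S$ would force the unbounded line $s+\R w\subseteq S$, whence $L=\{0\}$, $\Inv(O)=\ker\pi$, and $\ann(\Inv(O))=\Span\{f_{1},\ldots,f_{n}\}$ exactly; the support equality then follows from $\bigcup_{f\in\Span\{f_{j}\}}\supp f=\bigcup_{j}\supp f_{j}$. The crux of the whole argument is the surjectivity of $\pi$, which alone licenses passing between subsets of $\cD'$ and subsets of $\R^{n}$; everything else is annihilator bookkeeping. I would only flag the degenerate case $S=\emptyset$, where $L=\R^{n}$ and the equalities fail, as the reason nonemptiness of $S$ must be assumed in the equality clause.
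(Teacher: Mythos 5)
Your proof is correct and follows essentially the same route as the paper's: compute $\Inv(D_{f_{1},\ldots,f_{n}}(S))$ as the preimage of $\Inv_{\R^{n}}(S)$ under the evaluation map, note it contains the common kernel $\{G\mid G(f_{j})=0\ \forall j\}$, and pass to annihilators using the bidual identity $\ann(\ker\pi)=\Span\{f_{1},\ldots,f_{n}\}$. You merely make explicit two points the paper leaves implicit --- the surjectivity of $\pi$ (needed to identify $\Inv$ with a preimage) and the nonemptiness of $S$ required for the equality clause --- both of which are sound and harmless refinements.
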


\begin{proof}
Let $F(\vec{f}):=\left(F(f_{1}),...,F(f_{n})\right)$ and $G(\vec{f}):=\left(G(f_{1}),...,G(f_{n})\right)$.
Then we have
\begin{align*}
\Inv(D_{f_{1},...,f_{n}}(S)) & =\left\{ G\in\cD'|\forall r\in\R,\forall F\in\cD',\ F(\vec{f})\in S\iff F(\vec{f})+rG(\vec{f})\in S\right\} \\
 & =\left\{ G\in\cD'|\forall F\in\cD',\ F(\vec{f})\in S\then\forall r\in\R,F(\vec{f})+rG(\vec{f})\in S\right\} \\
 & =\left\{ G\in\cD'|S+\R G(\vec{f})=S\right\} .
\end{align*}
Therefore
\begin{align*}
\ann\left(\Inv(D_{f_{1},...,f_{n}}(S))\right) & =\left\{ h\in\cD|\forall G\in\cD',\ S+\R G(\vec{f})=S\then G(h)=0\right\} \\
 & \subseteq\left\{ h\in\cD|\forall G\in\cD',\ G(\vec{f})=0\then G(h)=0\right\} \\
 & =\Span\{f_{1},...,f_{n}\}.
\end{align*}
If $S$ is bounded, clearly the last ``$\subseteq$'' becomes ``$=$''.
\end{proof}

\subsection{Prior conditional probability for KG fields}

Recall the definitions in Section \ref{sec:Definitions-of-free}.
Let $\phi(f)$ ($f\in\cD=C_{0}^{\infty}(M)$) be a KG field on a globally
hyperbolic spacetime $M$, but we assume that each field operator
$\phi(f)$ is a selfadjoint operator on a Hilbert space $\cH$, rather
than an element of an abstract {*}-algebra. Let $\cN:=\Ker(\GreenOp)\subset\cD$. 

Our rough idea is that some part of the empirical laws of a KG field
can be expressed as the probabilistic laws on the prior conditional
probabilities 
\begin{equation}
\Prob_{\bOne}(O_{k+1},...,O_{n}|O_{1},...,O_{k}),\qquad O_{1},...,O_{n}\in\cO_{\phi}.\label{eq:P1(OO|OO)}
\end{equation}
For $f\in\cD$, let $\SYM{E_{f}(S)}{Ef()}$ ($S\in\Borel(\R)$) be
the spectral projection of $\phi(f)$. The above prior conditional
probability concerns only a finite number of field operators $\phi(f_{j})$
($j=1,...,N$). Hence consider the von Neumann algebra $\cA$ generated
by
\[
\bProj:=\left\{ E_{f_{j}}(S)|j=1,...,N,\ S\in\Borel(\R)\right\} .
\]
Except the special cases (e.g., where $f_{j}\in\cN$ for all $j$),
$\cA$ is a factor of type ${\rm I}_{\infty}$, and hence (\ref{eq:P1(OO|OO)})
will be defined for some suitable cases. The calculation of (\ref{eq:P1(OO|OO)})
will be reduced to the finite-dimensional cases in Section \ref{sec:Finite-dimensional-case}.
Of course, this idea will work well only on free fields, and possibly
we will encounter far more complicated situations on interacting fields.

Let $V$ be a finite-dimensional subspace of $\cD$ such that $(V,\GreenOp|_{V})$
is a symplectic vector space. In particular, $\dim V$ is even and
$V\cap\cN=\{0\}$. Note that $V$ is canonically identified with $(V+\cN)/\cN$
by $V\ni f\mapsto f+\cN$. For $f\in V$ and $f_{0}\in\cN$, let $[f+f_{0}]=[f+f_{0}]_{V}:=f$.
That is, the map $V+\cN\to V$, $h\mapsto[h]$, is the projection
from $V+\cN$ onto $V$.

For $f_{1},...,f_{n}\in V$ and $S\in\Borel(\R^{n})$, define $C_{f_{1},...,f_{n}}(S)=C_{f_{1},...,f_{n}}^{V}(S)\subset V$
by
\[
C_{f_{1},...,f_{n}}^{V}(S):=\{g\in V:\,(\GreenOp(g,f_{1}),...,\GreenOp(g,f_{n}))\in S\},
\]
similarly to Sec.~\ref{sec:Finite-dimensional-case}. Alternatively,
$C_{f_{1},...,f_{n}}^{V}(S)$ can be defined as a subset of $V^{*}$,
the dual space of $V$, canonically identified with $V$ w.r.t.~the
symplectic form $\GreenOp|_{V}$:
\[
C_{f_{1},...,f_{n}}^{V}(S):=\{F\in V^{*}:\,\left(F(f_{1}),...,F(f_{n})\right)\in S\}.
\]

For any subsets $\cX\subset\cD$ and $\cY\subset\cD'$, let
\[
\Ann(\cX):=\{F\in\cD':\forall f\in\cX,\,F(f)=0\},\qquad\cY|_{V}:=\{F|_{V}:F\in\cY\}.
\]

\begin{lem}
\label{lem:DcapAnn(N)}For $f_{1},...,f_{n}\in V+\cN$ ($n\in\N$)
and $S\in\Borel(\R^{n})$, we have
\[
\left(D_{f_{1},...,f_{n}}(S)\cap\Ann(\cN)\right)|_{V}=C_{[f_{1}],...,[f_{n}]}^{V}(S).
\]
\end{lem}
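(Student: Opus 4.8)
The plan is to establish the two set inclusions separately, after isolating the one computation that does all the work. First I would decompose each test function as $f_j = h_j + n_j$ with $h_j := [f_j] \in V$ and $n_j \in \cN$; this decomposition is unique precisely because $V \cap \cN = \{0\}$, which was noted when $V$ was introduced. The key observation is that any $F \in \Ann(\cN)$ annihilates every $n_j$, so $F(f_j) = F(h_j) = (F|_V)([f_j])$. Consequently the tuple $(F(f_1),\ldots,F(f_n))$ depends on $F$ only through its restriction $F|_V \in V^{*}$, and it coincides with $\bigl((F|_V)([f_1]),\ldots,(F|_V)([f_n])\bigr)$.

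For the inclusion $\subseteq$, I would take any $F \in D_{f_1,\ldots,f_n}(S) \cap \Ann(\cN)$ and set $F' := F|_V$. By the identity above, $\bigl(F'([f_1]),\ldots,F'([f_n])\bigr) = (F(f_1),\ldots,F(f_n)) \in S$, so $F' \in C_{[f_1],\ldots,[f_n]}^{V}(S)$ directly from the $V^{*}$-version of its definition. This direction is immediate and involves no construction.

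The substantive direction is $\supseteq$, which asks us to realize each $F' \in C_{[f_1],\ldots,[f_n]}^{V}(S) \subseteq V^{*}$ as the restriction of some $F \in \cD'$ that also kills $\cN$. Here I would use that $V + \cN$ is an internal direct sum $V \oplus \cN$ inside $\cD$ (again by $V \cap \cN = \{0\}$), define a linear functional on $V \oplus \cN$ by $h + n \mapsto F'(h)$, and then extend it to all of $\cD$. This extension step is the only real obstacle: a continuous extension would require an argument, but the paper has explicitly stipulated that $\cD'$ may be taken to be the \emph{algebraic} dual of $\cD$, so a purely linear-algebraic extension---choosing any complement of $V \oplus \cN$ in $\cD$ and letting $F$ vanish on it---suffices. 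The resulting $F$ satisfies $F|_\cN = 0$, hence $F \in \Ann(\cN)$, as well as $F|_V = F'$ and $F(f_j) = F'([f_j])$, so $(F(f_1),\ldots,F(f_n)) \in S$ places $F$ in $D_{f_1,\ldots,f_n}(S)$. Then $F|_V = F'$ exhibits $F'$ as a member of the left-hand side, completing the proof.
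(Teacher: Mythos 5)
Your proposal is correct and follows essentially the same route as the paper: the central computation in both is that $F\in\Ann(\cN)$ forces $F(f_{j})=F([f_{j}])$, after which the claim reduces to identifying restrictions of $\cN$-annihilating functionals with elements of $V^{*}$. The only difference is one of presentation: the paper compresses the argument into a chain of set equalities and leaves the surjectivity of the restriction map $\{F\in\cD':F|_{\cN}=0\}\to V^{*}$ implicit, whereas you make that extension step explicit (correctly invoking the algebraic-dual convention), which is a harmless and arguably welcome elaboration.
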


\begin{proof}
We have
\begin{align*}
 & \left(D_{f_{1},...,f_{n}}(S)\cap\Ann(\cN)\right)|_{V}\\
 & =\{F|_{V}:\ F\in\cD',\ (F(f_{1}),...,F(f_{n}))\in S,\ \forall f_{0}\in\cN,F(f_{0})=0\}\\
 & =\{F|_{V}:\ F\in\cD',\ (F([f_{1}]),...,F([f_{n}]))\in S,\ \forall f_{0}\in\cN,F(f_{0})=0\}\\
 & =\{F\in V^{*}:\ (F([f_{1}]),...,F([f_{n}]))\in S\}\\
 & =C_{[f_{1}],...,[f_{n}]}^{V}(S).
\end{align*}
\end{proof}
Define $\cO_{\phi,V}\subset\cO_{\phi}$ by
\[
\SYM{\cO_{\phi,V}}{OphiV}:=\left\{ D_{f_{1},...,f_{n}}(S)|S\in\Borel(\R^{n}),\ n\in\N,\ f_{1},...,f_{n}\in V+\cN,\ \GreenOp(f_{i},f_{j})=0,\ i,j=1,...,n\right\} .
\]
Recall that $\GreenOp(f,g)=0$ iff $\phi(f)$ and $\phi(g)$ are commutative.
Hence for any $O=D_{f_{1},...,f_{n}}(S)\in\cO_{\phi,V}$, the simultaneous
spectral resolution of the operators $\{\phi(f)|f\in\ann(\Inv(O))\}$
exists, since $\ann(\Inv(O))\subseteq\Span\{f_{1},...,f_{n}\}$ by
Lemma \ref{lem:D(D(S))<supp}. It follows that we can define the spectral
projection $\SYM{E(O)}{E(O)}$ corresponding $O\in\cO_{\phi,V}$.

An element of $\cO_{\phi,V}$ is called a \termi{$(\phi,V)$-observable subset}
of $\cD'$. By Lemma \ref{lem:DcapAnn(N)}, we have a canonical map
\[
\cO_{\phi,V}\to\cO(V),\qquad O\mapsto\SYM{O^{V}}{OV}:=\left(O\cap\Ann(\cN)\right)|_{V},
\]
where $\cO(V)$ is the set of observable subsets of $V$ defined in
Sec.~\ref{sec:Finite-dimensional-case}.

Recall the definition (\ref{eq:P1-finite-dim}) of $\Prob_{\bOne}$
in the finite-dimensional cases. For $1<k<n$ and $O_{1},...,O_{n}\in\cO_{\phi,V}$,
define the prior conditional probability $\Prob_{\bOne}^{V}$ w.r.t.~$O_{1},...,O_{n}$
by
\begin{align*}
\Prob_{\bOne}^{V}(O_{k+1},...,O_{n}|O_{1},...,O_{k}) & :=\Prob_{\bOne}(O_{k+1}^{V},...,O_{n}^{V}|O_{1}^{V},...,O_{k}^{V}),
\end{align*}
if the r.h.s.~is well-defined. 

This is also understood as follows. Let $\SYM{\cA_{V}}{AV}$ be the
von Neumann algebra generated by the projections $\{E(O)|O\in\cO_{\phi,V}\}$.
Then we see that $\cA_{V}$ is a factor of type ${\rm I}_{\infty}$,
that is, $\cA_{V}$ is isomorphic to $B(\cK)$, where $\cK$ is an
infinite-dimensional separable Hilbert space, and hence the usual
trace $\Tr$ on $B(\cK)$ is transferred to a trace $\Tr_{V}$ on
$\cA_{V}$. Thus we have
\[
\Prob_{\bOne}^{V}(O_{k+1},...,O_{n}|O_{1},...,O_{k})=\frac{\Tr_{V}(AB)^{*}AB}{\Tr_{V}A^{*}A},\qquad A:=E(O_{1})\cdots E(O_{k}),\ B:=E(O_{k+1})\cdots E(O_{n}).
\]

\subsubsection{The simplest (nontrivial) empirical law}

Let $f_{1},f_{2},f_{3}\in\cD$ satisfy $\GreenOp(f_{i},f_{j})\neq0$
for $i\neq j$. Let $V:=\Span\{f_{1},f_{2}\}$ and assume $f_{3}\in V+\cN$.
Let $S_{i}\in\Borel(\R)$ ($i=1,2,3$) be bounded and non-null, and
define the observable subsets $O_{i}\subset\cD'$ by
\[
O_{i}:=D_{f_{i}}(S_{i})=\{F\in\cD'|F(f_{i})\in S_{i}\},\qquad i=1,2,3.
\]
By Lemma \ref{lem:DcapAnn(N)} we have
\[
O_{i}^{V}=C_{[f_{i}]}^{V}(S_{i})=\{v\in V|\GreenOp(v,[f_{i}])\in S_{i}\}.
\]
Note that $[f_{i}]=f_{i}$ for $i=1,2$, and
\begin{equation}
[f_{3}]=\frac{\GreenOp(f_{3},f_{2})f_{1}+\GreenOp(f_{1},f_{3})f_{2}}{\GreenOp(f_{1},f_{2})}.\label{eq:=00005Bf3=00005D=00003D}
\end{equation}
In fact, let $f_{3}=a_{1}f_{1}+a_{2}f_{2}+f_{0},$ $a_{1},a_{2}\in\R$,
$f_{0}\in\cN$, then we have $\GreenOp(f_{1},f_{3})=a_{2}\GreenOp(f_{1},f_{2})$
and $\GreenOp(f_{2},f_{3})=-a_{1}\GreenOp(f_{1},f_{2})$, and hence
(\ref{eq:=00005Bf3=00005D=00003D}) follows. %

We will consider the explicit formula to obtain the simplest prior
conditional probability
\begin{align*}
\Prob_{\bOne}(f_{3},S_{3}|f_{1},S_{1};f_{2},S_{2}): & =\Prob_{\bOne}^{V}(O_{3}|O_{1},O_{2})=\Prob_{\bOne}(O_{3}^{V}|O_{1}^{V},O_{2}^{V})\\
 & =\frac{\Tr_{V}E_{f_{1}}(S_{1})E_{f_{2}}(S_{2})E_{f_{3}}(S_{3})E_{f_{2}}(S_{2})}{\Tr_{V}E_{f_{1}}(S_{1})E_{f_{2}}(S_{2})}.
\end{align*}

\begin{lem}
\label{lem:TrVEf1Ef2}Let $\Leb^{1}$ be the standard Lebesgue measure
on $\R^{1}$. Then
\begin{equation}
\Tr_{V}E_{f_{1}}(S_{1})E_{f_{2}}(S_{2})=\frac{\Leb^{1}(S_{1})\Leb^{1}(S_{2})}{2\pi|\GreenOp(f_{1},f_{2})|}.\label{eq:TrEf1Ef2}
\end{equation}
\end{lem}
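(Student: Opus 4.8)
The plan is to recognize this as the Klein--Gordon-field instance of Lemma \ref{lem:bdd-trace-class-1} (equivalently, the case $n=1$ of Proposition \ref{prop:finite-dim-TrB1B2}) and to reduce to one of those already-proved statements. First I would record the structural facts. Since $\GreenOp(f_1,f_2)\neq0$, the restriction $\GreenOp|_V$ is a nondegenerate symplectic form on the two-dimensional space $V=\Span\{f_1,f_2\}$, and $\phi(f_1),\phi(f_2)$ satisfy $[\phi(f_1),\phi(f_2)]=\im\GreenOp(f_1,f_2)\bOne$ in Weyl form. By the Stone--von Neumann theorem the algebra $\cA_V=\{E_{f_1}(S),E_{f_2}(T)\}''$ is a factor of type $\mathrm{I}_\infty$, and $\Tr_V$ is its canonical trace, i.e.\ the transfer of the usual trace on $B(\cK)$ under any isomorphism $\cA_V\cong B(\cK)$ with $\cK$ carrying the irreducible representation of the CCR over $V$. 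This is precisely the situation of Lemma \ref{lem:bdd-trace-class-1}.

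Next I would put the relation into standard form and apply the lemma, taking care of the sign. Write $c:=\GreenOp(f_1,f_2)$. If $c>0$, set $(Q,P,\hbar):=(\phi(f_1),\phi(f_2),c)$; then $E_{f_1}(S_1)E_{f_2}(S_2)=E_Q(S_1)E_P(S_2)$ and Lemma \ref{lem:bdd-trace-class-1} gives $\Tr_V E_{f_1}(S_1)E_{f_2}(S_2)=\Leb^1(S_1)\Leb^1(S_2)/(2\pi c)$. If $c<0$, set instead $(Q,P,\hbar):=(\phi(f_2),\phi(f_1),-c)$, so that $[Q,P]=\im(-c)\bOne$ with $-c>0$; since $E_{f_1}(S_1)E_{f_2}(S_2)$ is trace class, cyclicity of the trace gives $\Tr_V E_{f_1}(S_1)E_{f_2}(S_2)=\Tr_V E_P(S_1)E_Q(S_2)=\Tr_V E_Q(S_2)E_P(S_1)$, and the lemma again returns $\Leb^1(S_1)\Leb^1(S_2)/(2\pi(-c))$. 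In both cases the value equals $\Leb^1(S_1)\Leb^1(S_2)/(2\pi|\GreenOp(f_1,f_2)|)$, which is the assertion.

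As a cross-check, and more in the coordinate-free spirit of Section \ref{sec:Finite-dimensional-case}, I could instead quote Proposition \ref{prop:finite-dim-TrB1B2} with $n=1$: the sets $B_1:=C^V_{f_1}(S_1)$ and $B_2:=C^V_{f_2}(S_2)$ are Lagrangian observable subsets of $\cV$ with $\Inv_V(B_1)=\Span\{f_1\}$ and $\Inv_V(B_2)=\Span\{f_2\}$, which meet only in $0$ and have bounded quotients, so $\Tr_V E_{f_1}(S_1)E_{f_2}(S_2)=\Leb_\cV(B_1\cap B_2)$. Writing $g=af_1+bf_2$ one computes $(\GreenOp(g,f_1),\GreenOp(g,f_2))=(-bc,ac)$, so $B_1\cap B_2$ is the rectangle $\{a\in S_2/c,\ b\in-S_1/c\}$ of $(a,b)$-area $\Leb^1(S_1)\Leb^1(S_2)/c^2$, while the normalizing volume form $(2\pi)^{-1}\GreenOp|_V$ has density $(2\pi)^{-1}|c|$ in these coordinates; the product again gives $\Leb^1(S_1)\Leb^1(S_2)/(2\pi|c|)$. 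The only genuine points of care here are bookkeeping rather than analysis: the absolute value is forced because $\GreenOp(f_1,f_2)$ may be negative (handled by the relabeling $f_1\leftrightarrow f_2$, or automatically by the absolute value inherent in $\Leb_\cV$), and one must invoke the definition of $\Tr_V$ as the factor trace, equal by Stone--von Neumann to the trace in the irreducible representation, so that the earlier results transfer verbatim.
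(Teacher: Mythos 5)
Your proposal is correct, and your ``cross-check'' is in fact exactly the paper's own proof: the paper applies Proposition \ref{prop:finite-dim-TrB1B2} with $n=1$ to get $\Tr_V E_{f_1}(S_1)E_{f_2}(S_2)=\Leb_V(O_1^V\cap O_2^V)$, then computes that intersection in the coordinates $v=a_1f_1+a_2f_2$ (obtaining the same rectangle $a_1\in \GreenOp(f_1,f_2)^{-1}S_2$, $a_2\in-\GreenOp(f_1,f_2)^{-1}S_1$ you found) and converts to $\Leb_V$ using the fact that the parallelogram on $0,f_1,f_2,f_1+f_2$ has $\Leb_V$-area $(2\pi)^{-1}|\GreenOp(f_1,f_2)|$. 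Your primary route---rescaling so that $\hbar:=|\GreenOp(f_1,f_2)|$ and invoking Lemma \ref{lem:bdd-trace-class-1} directly, with the relabeling $f_1\leftrightarrow f_2$ and cyclicity of the trace to handle a negative symplectic pairing---is a legitimate alternative that bypasses the finite-dimensional machinery of Section \ref{sec:Finite-dimensional-case} entirely; what it buys is brevity and a direct appeal to the already-stated CCR trace formula, at the cost of an explicit case split on the sign of $\GreenOp(f_1,f_2)$ and a coordinate-dependent identification, whereas the paper's route keeps the computation in the coordinate-free, symplectically covariant language that the surrounding section is built on. Both arguments are sound; the sign/absolute-value bookkeeping, which is the only delicate point, is handled correctly in each.
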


\begin{proof}
Let $\Leb_{V}$ be the Lebesgue measure on $V$ normalized by the
volume form $(2\pi)^{-1}\GreenOp(\cdot,\cdot)|_{V\times V}$. By Proposition
\ref{prop:finite-dim-TrB1B2}, we have
\[
\Tr_{V}E_{f_{1}}(S_{1})E_{f_{2}}(S_{2})=\Leb_{V}(O_{1}^{V}\cap O_{2}^{V}).
\]
We see
\begin{align*}
O_{1}^{V}\cap O_{2}^{V} & =C_{f_{1},f_{2}}^{V}(S_{1}\times S_{2})=\{v\in V|\GreenOp(v,f_{i})\in S_{i},i=1,2\}\\
 & =\{a_{1}f_{1}+a_{2}f_{2}\in V|a_{2}\in-\GreenOp(f_{1},f_{2})^{-1}S_{1},\,a_{1}\in\GreenOp(f_{1},f_{2})^{-1}S_{2}\}.
\end{align*}
Note that $(2\pi)^{-1}|\GreenOp(f_{1},f_{2})|$ is the area of the
parallelogram formed by the four points $0,f_{1},f_{2},f_{1}+f_{2}\in V$,
measured by $\Leb_{V}$. It follows that if $\Leb^{1}(S_{1})=\Leb^{1}(S_{2})=1$,
then $\Leb_{V}(O_{1}^{V}\cap O_{2}^{V})=|\GreenOp(f_{1},f_{2})|^{-2}(2\pi)^{-1}|\GreenOp(f_{1},f_{2})|=(2\pi)^{-1}|\GreenOp(f_{1},f_{2})|^{-1}$.
Hence generally we have (\ref{eq:TrEf1Ef2}).
\end{proof}
Let $A\in V=\Span\{f_{1},f_{2}\}$. Then we see
\begin{equation}
A=\GreenOp(f_{1},f_{2})^{-1}\left[-\GreenOp(f_{2},A)f_{1}+\GreenOp(f_{1},A)f_{2}\right].\label{eq:A=00003DEff}
\end{equation}

\begin{lem}
\label{lem:S(A1A2A3)}Let $A^{(i)}\in V=\Span\{f_{1},f_{2}\},\ i=1,2,3$.
Then the signed area $\bfS(A^{(1)},A^{(2)},A^{(3)})$ of the triangle
$A^{(1)}A^{(2)}A^{(3)}$ measured by the bilinear form $\GreenOp$
is given by
\begin{equation}
\bfS(A^{(1)},A^{(2)},A^{(3)})=\frac{1}{2}\GreenOp(f_{1},f_{2})^{-1}\sum_{i=1}^{3}\left[\GreenOp(f_{1},A^{(i)})\GreenOp(f_{2},A^{(i+1)})-\GreenOp(f_{2},A^{(i)})\GreenOp(f_{1},A^{(i+1)})\right],\quad A^{(4)}:=A^{(1)}.\label{eq:S(A1A2A3)}
\end{equation}
\end{lem}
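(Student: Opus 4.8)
The plan is to reduce the statement to two elementary facts: the ``shoelace'' expression for a signed area, and the two-dimensional bilinear expansion of $\GreenOp$ already recorded in (\ref{eq:A=00003DEff}). First I would use translation invariance of the area $2$-form to move one vertex to the base point. Since $\bfS(A^{(1)},A^{(2)},A^{(3)})$ is the $\GreenOp$-area of the triangle and the parallelogram spanned by $A^{(2)}-A^{(1)}$ and $A^{(3)}-A^{(1)}$ has $\GreenOp$-area $\GreenOp(A^{(2)}-A^{(1)},A^{(3)}-A^{(1)})$, the triangle is exactly one half of it:
\[
\bfS(A^{(1)},A^{(2)},A^{(3)})=\tfrac12\,\GreenOp\!\left(A^{(2)}-A^{(1)},\,A^{(3)}-A^{(1)}\right).
\]
Expanding by bilinearity and using antisymmetry together with $\GreenOp(X,X)=0$ collapses the right-hand side to the cyclic sum
\[
\bfS(A^{(1)},A^{(2)},A^{(3)})=\tfrac12\sum_{i=1}^{3}\GreenOp\!\left(A^{(i)},A^{(i+1)}\right),\qquad A^{(4)}:=A^{(1)}.
\]

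The second ingredient is a two-dimensional identity: for any $A,B\in V=\Span\{f_{1},f_{2}\}$,
\[
\GreenOp(f_{1},f_{2})\,\GreenOp(A,B)=\GreenOp(f_{1},A)\GreenOp(f_{2},B)-\GreenOp(f_{2},A)\GreenOp(f_{1},B).
\]
I would obtain this at once by substituting the expansion (\ref{eq:A=00003DEff}) of $A$ into the first slot of $\GreenOp(A,B)$ and using bilinearity; no further computation is required.

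Combining the two displays then finishes the proof: substituting the identity for each term $\GreenOp(A^{(i)},A^{(i+1)})$ in the cyclic sum and factoring out $\GreenOp(f_{1},f_{2})^{-1}$ produces exactly (\ref{eq:S(A1A2A3)}). There is no genuinely hard step; the only point needing care is the orientation/normalisation bookkeeping—namely that the half-parallelogram formula and the cyclic convention $A^{(4)}:=A^{(1)}$ reproduce precisely the signs displayed in (\ref{eq:S(A1A2A3)}). As an equivalent but more index-heavy alternative, one could instead read off from (\ref{eq:A=00003DEff}) the coordinates $\bigl(-\GreenOp(f_{2},A^{(i)}),\GreenOp(f_{1},A^{(i)})\bigr)\GreenOp(f_{1},f_{2})^{-1}$ of each vertex in the basis $\{f_{1},f_{2}\}$ and apply the planar shoelace formula directly; I would prefer the coordinate-free route above.
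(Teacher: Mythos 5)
Your proposal is correct and follows essentially the same route as the paper: the paper likewise reduces (\ref{eq:S(A1A2A3)}) to the cyclic-sum expression $\bfS=\frac{1}{2}\sum_{i}\GreenOp(A^{(i)},A^{(i+1)})$ together with the identity $\GreenOp(A,B)=\GreenOp(f_{1},f_{2})^{-1}[\GreenOp(f_{1},A)\GreenOp(f_{2},B)-\GreenOp(f_{2},A)\GreenOp(f_{1},B)]$ obtained from (\ref{eq:A=00003DEff}). The only difference is that you justify the cyclic-sum formula via the half-parallelogram argument, whereas the paper simply asserts it.
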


\begin{proof}
Eq.~(\ref{eq:S(A1A2A3)}) follows from the following equations:
\[
\bfS(A^{(1)},A^{(2)},A^{(3)})=\frac{1}{2}\left[\GreenOp(A^{(1)},A^{(2)})+\GreenOp(A^{(2)},A^{(3)})+\GreenOp(A^{(3)},A^{(1)})\right],
\]
\begin{equation}
\GreenOp(A^{(i)},A^{(j)})=\GreenOp(f_{1},f_{2})^{-1}\left[\GreenOp(f_{1},A^{(i)})\GreenOp(f_{2},A^{(j)})-\GreenOp(f_{2},A^{(i)})\GreenOp(f_{1},A^{(j)})\right],\label{eq:E(AiAj)}
\end{equation}
where (\ref{eq:E(AiAj)}) follows from (\ref{eq:A=00003DEff}).%
\end{proof}

Let $r_{1},r_{2},r_{3},r_{2}'\in\R$. Let $A_{ij}\equiv A_{ji}\in V=\Span\{f_{1},f_{2}\}$
($i,j=1,2,3,i\neq j$) be such that $\GreenOp(f_{i},A_{ij})=r_{i},\ \GreenOp(f_{j},A_{ij})=r_{j}$.
Explicitly we have
\[
A_{12}=\GreenOp(f_{1},f_{2})^{-1}\left(-r_{2}f_{1}+r_{1}f_{2}\right),
\]
from (\ref{eq:A=00003DEff}).

\begin{lem}
\label{lem:E(f1,A23)}We have
\[
\GreenOp(f_{1},A_{23})=e_{32}^{-1}\left(e_{12}r_{3}-e_{13}r_{2}\right),\quad\GreenOp\left(f_{2},A_{13}\right)=e_{31}^{-1}\left(e_{21}r_{3}-e_{23}r_{1}\right),\qquad e_{ij}:=\GreenOp(f_{i},f_{j}).
\]
\end{lem}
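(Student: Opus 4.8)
The plan is to exploit the fact that both $A_{23}$ and $A_{13}$ lie in $V=\Span\{f_{1},f_{2}\}$, together with the projection formula (\ref{eq:=00005Bf3=00005D=00003D}) for $[f_{3}]$, and then solve a single scalar linear equation in each case. The essential observation is that since $f_{3}-[f_{3}]\in\cN=\Ker(\GreenOp)$ while $A_{23}\in V$, antisymmetry of the bilinear form gives $\GreenOp(f_{3},A_{23})=\GreenOp([f_{3}],A_{23})$, because the contribution of the $\cN$-component vanishes.

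First, for the first identity I would start from the defining constraint $\GreenOp(f_{3},A_{23})=r_{3}$ and rewrite its left-hand side as $\GreenOp([f_{3}],A_{23})$. Substituting the projection formula (\ref{eq:=00005Bf3=00005D=00003D}) for $[f_{3}]$ and using bilinearity gives
\[
r_{3}=\frac{1}{e_{12}}\left[e_{32}\,\GreenOp(f_{1},A_{23})+e_{13}\,\GreenOp(f_{2},A_{23})\right].
\]
Now I would insert the other defining constraint $\GreenOp(f_{2},A_{23})=r_{2}$, treat $\GreenOp(f_{1},A_{23})$ as the single unknown, and solve, obtaining $\GreenOp(f_{1},A_{23})=e_{32}^{-1}(e_{12}r_{3}-e_{13}r_{2})$ as claimed.

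For the second identity I would proceed symmetrically, again writing $\GreenOp(f_{3},A_{13})=\GreenOp([f_{3}],A_{13})=r_{3}$, expanding $[f_{3}]$ by (\ref{eq:=00005Bf3=00005D=00003D}), and this time inserting $\GreenOp(f_{1},A_{13})=r_{1}$ to solve for $\GreenOp(f_{2},A_{13})$. This produces $\GreenOp(f_{2},A_{13})=e_{13}^{-1}(e_{12}r_{3}-e_{32}r_{1})$. The only remaining step is cosmetic: I would rewrite this in the stated form $e_{31}^{-1}(e_{21}r_{3}-e_{23}r_{1})$ using the antisymmetry $e_{ij}=-e_{ji}$ (so $e_{31}=-e_{13}$, $e_{21}=-e_{12}$, $e_{23}=-e_{32}$), under which numerator and denominator acquire a common sign that cancels.

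There is no real obstacle here; the computation is elementary once the reduction $\GreenOp(f_{3},\cdot)=\GreenOp([f_{3}],\cdot)$ on $V$ is noted. The only points requiring care are keeping the antisymmetry convention $e_{ij}=-e_{ji}$ and $\GreenOp(f_{i},f_{i})=0$ straight when reconciling the two superficially different but equal expressions for the second identity, and recognizing that $e_{32}$ and $e_{13}$ are precisely the coefficients appearing in (\ref{eq:=00005Bf3=00005D=00003D}). Nondegeneracy of $\GreenOp$ on $V$ gives $e_{12}\neq0$, and the standing hypothesis $\GreenOp(f_{i},f_{j})\neq0$ for $i\neq j$ gives $e_{32},e_{13}\neq0$, so every division is legitimate.
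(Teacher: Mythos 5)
Your proof is correct and is essentially the paper's own argument spelled out: the paper's proof just says the lemma "directly follows from $\GreenOp([f_{3}],A_{13})=r_{3}$, $\GreenOp([f_{3}],A_{23})=r_{3}$ and the projection formula for $[f_{3}]$," which is precisely the substitution-and-solve computation you carry out. The sign bookkeeping via $e_{ij}=-e_{ji}$ in the second identity checks out.
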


\begin{proof}
Directly follows from $\GreenOp([f_{3}],A_{13})=r_{3}$, $\GreenOp([f_{3}],A_{23})=r_{3}$
and (\ref{eq:=00005Bf3=00005D=00003D}).
\end{proof}

\begin{lem}
\label{lem:S(A12,A23,A31)}We have
\[
\bfS(A_{12},A_{23},A_{31})=\frac{\left(e_{23}r_{1}+e_{31}r_{2}+e_{12}r_{3}\right)^{2}}{2e_{12}e_{23}e_{31}},\qquad e_{ij}:=\GreenOp(f_{i},f_{j}).
\]
\end{lem}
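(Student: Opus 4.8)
The plan is to reduce the signed area to the three pairwise symplectic products via the identity that opens the proof of Lemma~\ref{lem:S(A1A2A3)}, namely
\[
\bfS(A_{12},A_{23},A_{31})=\tfrac{1}{2}\left[\GreenOp(A_{12},A_{23})+\GreenOp(A_{23},A_{31})+\GreenOp(A_{31},A_{12})\right],
\]
and then to expand each factor by means of~(\ref{eq:E(AiAj)}). This expresses the area entirely through the six scalars $\GreenOp(f_1,A_{ij})$ and $\GreenOp(f_2,A_{ij})$ for $(ij)\in\{12,23,31\}$, together with $e_{12}=\GreenOp(f_1,f_2)$.

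Next I would assemble those six scalars. Four are immediate from the defining relations $\GreenOp(f_i,A_{ij})=r_i$, giving $\GreenOp(f_1,A_{12})=r_1$, $\GreenOp(f_2,A_{12})=r_2$, $\GreenOp(f_2,A_{23})=r_2$, and $\GreenOp(f_1,A_{31})=r_1$. The two remaining ``mixed'' scalars $\GreenOp(f_1,A_{23})$ and $\GreenOp(f_2,A_{31})$ are exactly what Lemma~\ref{lem:E(f1,A23)} supplies, as $e_{32}^{-1}(e_{12}r_3-e_{13}r_2)$ and $e_{31}^{-1}(e_{21}r_3-e_{23}r_1)$ respectively (recall $A_{31}=A_{13}$); here one uses the antisymmetry $e_{ij}=-e_{ji}$ to rewrite $e_{32},e_{13},e_{21}$ in terms of $e_{12},e_{23},e_{31}$.

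Substituting these six values into the three expanded products and adding, the cross terms $r_1r_2,\ r_2r_3,\ r_1r_3$ each emerge with a clean coefficient, while the ``squared'' contributions collect into $e_{31}r_2^2/(e_{12}e_{23})$ and its two cyclic analogues. After the overall factor $\tfrac12$, this should assemble precisely into
\[
\frac{(e_{23}r_1+e_{31}r_2+e_{12}r_3)^2}{2\,e_{12}e_{23}e_{31}},
\]
which one verifies by expanding the square in the claimed right-hand side term by term and matching coefficients.

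The computation is purely algebraic and raises no conceptual difficulty; the only genuine risk is sign bookkeeping, since the products $\GreenOp(A_{ij},A_{kl})$ are antisymmetric and the inverse factors $e_{ij}^{-1}$ occurring both in~(\ref{eq:E(AiAj)}) and in Lemma~\ref{lem:E(f1,A23)} must be tracked consistently against the cyclic labelling. A convenient safeguard is that the target expression is manifestly invariant under the simultaneous cyclic shift $(r_1,r_2,r_3)\mapsto(r_2,r_3,r_1)$ and $(e_{12},e_{23},e_{31})\mapsto(e_{23},e_{31},e_{12})$, a symmetry the intermediate sum must also display and which therefore flags any sign slip.
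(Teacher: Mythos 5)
Your proposal is correct and follows exactly the route of the paper's own (one-line) proof, which simply invokes Lemma~\ref{lem:S(A1A2A3)} and Lemma~\ref{lem:E(f1,A23)} and calls the rest ``straightforward calculations''; your write-up fills in precisely those calculations, and the algebra does close up as you predict (the numerator assembles into $(e_{23}r_{1}+e_{31}r_{2}+e_{12}r_{3})^{2}$ over $e_{23}e_{31}$ after clearing denominators). No gaps.
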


\begin{proof}
By Lemma \ref{lem:S(A1A2A3)}, \ref{lem:E(f1,A23)}, and straightforward
calculations. %
\end{proof}

Let $A_{23}'\in V$ satisfy $\GreenOp(f_{2},A_{23}')=r_{2}',\ \GreenOp([f_{3}],A_{23})=r_{3}$.
Let %
$\vec{r}:=(r_{1},r_{2},r_{3},r_{2}')$, and
\[
\bfS(\vec{r}):=\bfS(A_{12},A_{23},A_{32})-\bfS(A_{12},A_{23}',A_{32}).
\]
By Lemma \ref{lem:S(A12,A23,A31)}, explicitly we can write
\begin{equation}
\bfS(\vec{r})=\frac{1}{2e_{12}e_{23}e_{31}}\left[\left(e_{23}r_{1}+e_{31}r_{2}+e_{12}r_{3}\right)^{2}-\left(e_{23}r_{1}+e_{31}r_{2}'+e_{12}r_{3}\right)^{2}\right].\label{eq:S(r)=00003D}
\end{equation}
For fixed $f_{1},f_{2},f_{3}$, the value of $\bfS(\vec{r})$ is determined
by $A_{12}$ and $A_{23}'$, since 
\begin{equation}
r_{1}=\GreenOp(f_{1},A_{12}),\ r_{2}=\GreenOp(f_{2},A_{12}),\ r_{3}=\GreenOp(f_{3},A_{23}'),\ r_{2}'=\GreenOp(f_{2},A_{23}').\label{eq:E(f1,A12)=00003Dr1}
\end{equation}
Hence the following expression makes sense:
\[
\int_{\cX}\d A_{12}\int_{\cY}\d A_{23}'\ e^{-\im\bfS(\vec{r})},\qquad.
\]
where $\cX,\cY\subset V$, and $\d A$ is the Lebesgue measure on
$V$ normalized by the volume form $(2\pi)^{-1}\GreenOp(\cdot,\cdot)|_{V\times V}$.
Let
\[
O_{i}:=D_{f_{i}}(S_{i})=\{F\in\cD'|F(f_{i})\in S_{i}\},\qquad i=1,2,3.
\]
By Lemma \ref{lem:DcapAnn(N)}, we see
\[
O_{i}^{V}=C_{[f_{i}]}^{V}(S_{i})=\{v\in V|\GreenOp(v,[f_{i}])\in S_{i}\}.
\]
Note that $[f_{i}]=f_{i}$ for $i=1,2$. Then by Proposition \ref{prop:Q(B1B2B3)-explicit}
we have
\[
\Tr_{V}E_{f_{1}}(S_{1})E_{f_{2}}(S_{2})E_{f_{3}}(S_{3})E_{f_{2}}(S_{2})=\int_{O_{1}^{V}\cap O_{2}^{V}}\d A_{12}\int_{O_{2}^{V}\cap O_{3}^{V}}\d A_{23}'\ e^{-\im\bfS(\vec{r})}.
\]
Thus by Lemma \ref{lem:TrVEf1Ef2},%
{} we obtain the following explicit formula, which is considered as
one of the simplest empirical laws of the KG field.
\begin{thm}
\label{thm:KG3}Let $f_{1},f_{2},f_{3}\in\cD$ satisfy $\GreenOp(f_{i},f_{j})\neq0$
for $i\neq j$. Let $V:=\Span\{f_{1},f_{2}\}$ and assume $f_{3}\in V+\cN$.
Let $S_{i}\in\Borel(\R)$ ($i=1,2,3$) be bounded and non-null. Then
we have
\[
\Prob_{\bOne}(f_{3},S_{3}|f_{1},S_{1};f_{2},S_{2})=\frac{2\pi|\GreenOp(f_{1},f_{2})|}{\Leb^{1}(S_{1})\Leb^{1}(S_{2})}\int_{O_{1}^{V}\cap O_{2}^{V}}\d A_{12}\int_{O_{2}^{V}\cap O_{3}^{V}}\d A_{23}'\ e^{-\im\bfS(\vec{r})},
\]
where $\bfS(\vec{r})$ is given by (\ref{eq:S(r)=00003D}) and (\ref{eq:E(f1,A12)=00003Dr1}).
Of course, we can replace the above $e^{-\im\bfS(\vec{r})}$ with
$e^{\im\bfS(\vec{r})}$ or $\cos\bfS(\vec{r})$.
\end{thm}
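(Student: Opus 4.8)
The plan is to unwind the prior conditional probability into a ratio of two traces and then evaluate the numerator and denominator by the two machines already built in Sections~\ref{sec:Finite-dimensional-case} and~\ref{sec:Free-scalar-field}. By the definition of $\Prob_{\bOne}^{V}$ together with (\ref{eq:P1-finite-dim}), we have
\[
\Prob_{\bOne}(f_{3},S_{3}|f_{1},S_{1};f_{2},S_{2})=\frac{\Tr_{V}E_{f_{1}}(S_{1})E_{f_{2}}(S_{2})E_{f_{3}}(S_{3})E_{f_{2}}(S_{2})}{\Tr_{V}E_{f_{1}}(S_{1})E_{f_{2}}(S_{2})},
\]
so it suffices to compute the numerator and denominator separately. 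First I would dispose of the denominator, which is immediate from Lemma~\ref{lem:TrVEf1Ef2}: it equals $\Leb^{1}(S_{1})\Leb^{1}(S_{2})/(2\pi|\GreenOp(f_{1},f_{2})|)$. Dividing by it produces exactly the prefactor $2\pi|\GreenOp(f_{1},f_{2})|/(\Leb^{1}(S_{1})\Leb^{1}(S_{2}))$ in the claimed formula.

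For the numerator the key is to apply Theorem~\ref{prop:Q(B1B2B3)-explicit} with $\cV$ taken to be the two-dimensional symplectic affine space attached to $V=\Span\{f_{1},f_{2}\}$ and with $B_{i}:=O_{i}^{V}$, where $O_{i}^{V}=C_{[f_{i}]}^{V}(S_{i})$ by Lemma~\ref{lem:DcapAnn(N)}. Here I must first check that the hypotheses hold. Setting $W_{i}:=\Inv_{V}(O_{i}^{V})$, in dimension two each $W_{i}=\R[f_{i}]$ is a Lagrangian line; pairwise transversality $W_{i}\cap W_{j}=\{0\}$ follows from $\GreenOp(f_{i},f_{j})\neq0$, which forces $[f_{i}],[f_{j}]$ to be linearly independent (here $[f_{1}]=f_{1}$, $[f_{2}]=f_{2}$, and $[f_{3}]$ is given by (\ref{eq:=00005Bf3=00005D=00003D}), using $f_{3}\in V+\cN$ so that $[f_{3}]\in V$ is well-defined). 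Boundedness of $O_{i}^{V}/W_{i}$ follows from the boundedness of each $S_{i}$. Finally, the diagonal-matrix condition~(3) of Theorem~\ref{prop:Q(B1B2B3)-explicit} is vacuous here, since $n=\tfrac{1}{2}\dim V=1$ makes the matrix $A$ a $1\times1$ matrix, automatically diagonal. Theorem~\ref{prop:Q(B1B2B3)-explicit} then yields the numerator as $\int_{O_{1}^{V}\cap O_{2}^{V}}\d A_{12}\int_{O_{2}^{V}\cap O_{3}^{V}}\d A_{23}'\,e^{-\im\bfS(X_{1},X_{2},X_{3},X_{2}')}$.

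The remaining (and really the only laborious) step is to identify the exponent $\bfS(X_{1},X_{2},X_{3},X_{2}')$ with the explicit $\bfS(\vec{r})$ of (\ref{eq:S(r)=00003D}). This is exactly the content of Lemmas~\ref{lem:S(A1A2A3)}, \ref{lem:E(f1,A23)} and~\ref{lem:S(A12,A23,A31)}: they express the signed areas of the relevant triangles in terms of the pairings $\GreenOp(f_{i},A)$, and with the identifications (\ref{eq:E(f1,A12)=00003Dr1}) for $r_{1},r_{2},r_{3},r_{2}'$ this reproduces (\ref{eq:S(r)=00003D}). Combining numerator and denominator gives the stated formula. For the concluding remark I would note that the numerator trace is real: since $E_{f_{2}}(S_{2})$ is a self-adjoint projection, cyclicity of $\Tr_{V}$ gives
\[
\overline{\Tr_{V}E_{f_{1}}(S_{1})E_{f_{2}}(S_{2})E_{f_{3}}(S_{3})E_{f_{2}}(S_{2})}=\Tr_{V}E_{f_{2}}(S_{2})E_{f_{3}}(S_{3})E_{f_{2}}(S_{2})E_{f_{1}}(S_{1})=\Tr_{V}E_{f_{1}}(S_{1})E_{f_{2}}(S_{2})E_{f_{3}}(S_{3})E_{f_{2}}(S_{2}),
\]
so the double integral equals both its complex conjugate and its own real part; hence $e^{-\im\bfS(\vec{r})}$ may be replaced by $e^{\im\bfS(\vec{r})}$ or by $\cos\bfS(\vec{r})$. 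The hardest part is thus not conceptual but purely the bookkeeping in the area identification, which is already discharged by the cited lemmas; the proof is essentially an assembly of Lemma~\ref{lem:TrVEf1Ef2} and Theorem~\ref{prop:Q(B1B2B3)-explicit} once the two-dimensional hypotheses are verified.
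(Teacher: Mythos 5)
Your proof is correct and follows essentially the same route as the paper: the statement is assembled from Lemma~\ref{lem:TrVEf1Ef2} for the denominator, Theorem~\ref{prop:Q(B1B2B3)-explicit} (in its $n=1$ case, where condition~(3) is indeed vacuous) for the numerator, and Lemmas~\ref{lem:S(A1A2A3)}--\ref{lem:S(A12,A23,A31)} to identify the exponent with $\bfS(\vec{r})$. Your explicit verification of the transversality and boundedness hypotheses and the cyclicity argument for the reality of the trace are details the paper leaves implicit, but the argument is the same.
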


\providecommand{\noopsort}[1]{}\providecommand{\singleletter}[1]{#1}%

\end{document}